\makeatletter
\AddToHook{package/before/amssymb}{\let\Bbbk\@undefined}
\makeatother

\documentclass[sigconf,nonacm]{acmart}
\AtBeginDocument{%
  \providecommand\BibTeX{{%
    Bib\TeX}}}

\usepackage[dvipsnames]{xcolor}  
\usepackage{tikz}
\usetikzlibrary{matrix,positioning,calc,arrows.meta,fit}

\usepackage{mathtools}
\usepackage{enumitem}
\usepackage{graphicx}
\usepackage{filecontents}
\usepackage{amsmath,amsfonts,amsthm}
\usepackage{hyperref}
\usepackage[most]{tcolorbox}
\usepackage{booktabs,tabularx}
\usepackage{url}
\usepackage{algorithm}
\usepackage{algpseudocode}
\usepackage{extpfeil}
\usepackage{comment}
\usepackage{textcomp}
\usepackage{xcolor}
\usepackage{mathrsfs}
\usepackage{ulem}
\usepackage{xspace}

\def\BibTeX{{\rm B\kern-.05em{\sc i\kern-.025em b}\kern-.08em
    T\kern-.1667em\lower.7ex\hbox{E}\kern-.125emX}}

\usepackage{ulem}

\begin{document}

\title{Rethinking the Security of DP-SGD: A Corrected Analysis of Differentially Private Machine Learning}

\author{Wenhao Wang}
\affiliation{%
  \institution{Monash University}
  \country{Australia}
}
\email{wenhao.wang@monash.edu}

\author{Shujie Cui}
\affiliation{%
  \institution{Monash University}
  \country{Australia}
}
\email{shujie.cui@monash.edu}

\author{Hui Cui}
\affiliation{%
  \institution{Monash University}
  \country{Australia}
}
\email{hui.cui@monash.edu}

\author{Xingliang Yuan}
\affiliation{%
  \institution{University of Melbourne}
  \country{Australia}
}
\email{xingliang.yuan@unimelb.edu.au}

\begin{abstract}
Differentially Private Stochastic Gradient Descent (DP-SGD) has been widely adopted to protect training data in machine learning. 
The privacy guarantee of DP\text{-}SGD and the DP-mechanisms built upon it is usually analyzed through a formal security game, in which an adversary infers whether a particular 
individual data record is included in the training dataset based on the mechanism’s output. 
Privacy leakage is characterized by the adversary’s privacy curve, which reports the false negative rate (FNR) as a function of the false positive rate (FPR).
The privacy guarantee is defined as the lower bound of this curve.

We observe that the privacy guarantees claimed for these mechanisms in many existing papers are derived from a mismatched game setting. 
 
Specifically, they formalize the mechanisms as the Subsampled Gaussian Mechanism (SGM), where Gaussian noise is added to the sum of gradients computed from a Poisson-sampled batch of data.
Indeed, the training procedure in these mechanisms introduces an additional normalization step: the noisy sum is further normalized either by the expected batch size or by the sampled batch size. Thus, these mechanisms should be formalized as either the Expected-Averaged SGM (EASGM) or the Batch-Averaged SGM (ASGM). 
The privacy auditing of DP-SGD suffers from the same issue, as it assesses the privacy guarantee of DP-SGD by treating it as an SGM.

We therefore re-analyze the privacy guarantee of these mechanisms under the corresponding EASGM and ASGM formalizations. Our analysis shows that, in theory, these DP\text{-}SGD mechanisms can yield weaker privacy guarantees than the SGM-based guarantee, suggesting that, in some settings, the true privacy leakage can exceed the reported SGM-based guarantee. 
 
We also empirically audit the leakage of implementations of four state-of-the-art DP\text{-}SGD algorithms, including the implementation used in Meta’s Opacus library, and show that empirical leakage exceeds the SGM-based guarantees. Finally, we conduct a thorough code audit of Opacus versions v0.9.0--v1.5.4 and derive a privacy guarantee for the latest Opacus implementation.

\end{abstract}
\keywords{Differential privacy, DP-SGD, Correctness}

\maketitle
\section{Introduction}

\begin{figure*}[t]
\centering

\begin{minipage}[t]{0.47\textwidth}\vspace{0pt}
\usetikzlibrary{matrix,positioning,calc,arrows.meta} 


\begin{tcolorbox}[
  enhanced,
  width=\linewidth,
  colback=white, colframe=black, boxrule=0.6pt,
  sharp corners, left=6pt, right=6pt, top=4pt, bottom=6pt  
]
\small
\textbf{Game Parameters:} $(d,\mathcal{D},\mathcal{M}(q,C,\sigma))$\\[1pt] 
\vspace{-2pt}                                                    

\noindent
\begin{tikzpicture}[
  >=Latex,
  every node/.style={outer sep=0pt}
]
  \matrix (m) [
    matrix of nodes,
    nodes in empty cells,
    nodes={anchor=base, text height=1.6ex, text depth=.25ex},
    column sep=16.5mm,
    row sep=1.2mm,                 
    anchor=west
  ]{
    \textbf{Adversary} & \textbf{Challenger} \\
    $D^- \sim \mathcal{D}^{N}$ & \phantom{$b \sim \{0,1\}$} \\
    $D_0 := D^-$, $D_1 := D^- \cup \{d\}$ & $b \sim \{0,1\}$ \\
    $\mathcal{A}(O,\mathcal{M},D_0,D_1)\rightarrow \hat b$ & $O=\textcolor{red}{\mathcal{M}(D_b)}$ \\
  }; 

  \draw[->, line width=0.5pt]
    ([xshift=0.6mm,yshift=1pt] m-3-1.base east)
      -- node[above=2pt] {$D_{0},\,D_{1}$}
    ([xshift=-1.4mm,yshift=1pt] m-3-2.base west);

  \draw[<-, line width=0.5pt]
    ([xshift=0.6mm,yshift=1pt] m-4-1.base east)
      -- node[above=2pt] {$O$}
    ([xshift=-1.4mm,yshift=1pt] m-4-2.base west);
\end{tikzpicture}
\end{tcolorbox}

\end{minipage}
\hfill
\begin{minipage}[t]{0.47\textwidth}\vspace{0pt}
\begin{tcolorbox}[
  enhanced, width=\textwidth,
  colback=white, colframe=black, boxrule=0.6pt,
  sharp corners, left=6pt, right=6pt, top=6pt, bottom=6pt
]
\small
\textbf{Correct Mechanism formalization:}\\[4pt]
1.\;EASGM (def \ref{def:EASGM}):\; $\mathcal{M}(D_b)
  = \Bigl(\sum_{i\in B}\bar g_i + G\Bigr)\big/ (N\!\cdot\! q)$\\
2.\;ASGM (def \ref{def:ASGM}):\; $\mathcal{M}(D_b)
  = \Bigl(\sum_{i\in B}\bar g_i + G\Bigr)\big/ |B|$\\[6pt]
\textbf{Incorrect Mechanism formalization:}\\[4pt]
3.\;SGM (def \ref{def:SGM}):\; $\textcolor{red}{\mathcal{M}(D_b)= \displaystyle\sum_{i\in B}\bar g_i + G}$
\end{tcolorbox}
\end{minipage}
\caption{The left figure illustrates the procedure of the DP game, while the right figure presents the correct mechanism formalization proposed in this paper and the incorrect mechanism formalizations used in prior works~\cite{abadi2016deep,bu2020deep,fu2024dpsur}.}
\label{fig:privacy-game-and-mech}
\end{figure*}

Deep learning models leak sensitive information of the training data \cite{du2025cascading,xiang2025tight}. Differential Privacy (DP) is the de facto standard and has been widely adopted for mitigating such risks~\cite{10.1145/3658644.3690194}.
 For example, Google recently introduced VaultGemma \cite{google2025vaultgemma}, a large language model (LLM) trained with DP to provide strong privacy guarantees, which they describe as “the world’s most capable differentially private LLM.”

 \noindent\textbf{DP-SGD brief and privacy analysis.} The state-of-the-art differentially private training method is Differentially Private Stochastic Gradient Descent (DP-SGD)~\cite{abadi2016deep}, and it has been used by companies including Google \cite{odp-dp-semantics}, Meta \cite{yousefpour2021opacus} for privacy protection in machine learning services. Meta also provides an open-source DP-SGD implementation through its Opacus library~\cite{yousefpour2021opacus}. 
 DP-SGD updates the model round by round, and each round consists of three basic steps: (i) Form a batch $B$ via Poisson sampling, including each data point with probability \(q\); (ii) Compute the gradient $g_i$ of each sample in the batch; and (iii) Clip each $g_i$ at the threshold $C$ into $\overline{g}_i$, get the aggregated value $\sum_{i\in B}\overline{g}_i$, and add calibrated Gaussian noise $G$ with standard deviation $\sigma$ to obtain the noisy sum $\mathcal{S}=\sum_{i\in B}\overline{g}_i + G$. These three steps are called \textit{Subsampled Gaussian Mechanism (SGM)}. In different DP-SGD mechanisms, $\mathcal{S}$ could be normalized and then used to update the model. For instance, in the initial DP\text{-}SGD proposed by Google~\cite{abadi2016deep}, the noisy sum of clipped gradients $\mathcal{S}$ is normalized by the expected batch size \(N \cdot q\), which we denote as the \textit{Expected-Averaged SGM (EASGM)}~\cite{abadi2016deep,11023490}, where $N$ is the size of the training dataset. 
In a recent DP\text{-}SGD variant~\cite{bu2020deep} (Algorithm 1 of \cite{bu2020deep_fullpaper}), $\mathcal{S}$ is normalized by the batch size \(|B|\), and is referred to as \textit{Batch-Averaged SGM (ASGM)}.

The privacy guarantee of the DP-SGD training process can be decomposed into per-round guarantees, which are then composed using standard composition techniques~\cite{dong2019gaussian}. 
DP-SGD’s per-round privacy guarantee is typically analyzed through a formalized game between a challenger and 
an adversary~\cite{bu2020deep,nasr2023tight}. In this game, the challenger applies one round of DP-SGD to generate a model update on one of two neighboring datasets, $D$ and $D'$, which differ by a single individual $d$, and then releases the update. The adversary has the knowledge of both neighboring datasets and the parameters of DP-SGD, including the sampling ratio $q$ and the noise scale $\sigma$, and aims to determine which dataset has been used to generate the model update. Privacy leakage is characterized by the adversary’s privacy curve, which reports the false negative rate (FNR) as a function of the false positive rate (FPR). The privacy guarantee is defined as the lower bound of this function \cite{dong2019gaussian}.
Depending on the tool or framework (e.g., RDP \cite{mironov2019r} and f-DP \cite{dong2019gaussian}) used for the analysis, the obtained privacy guarantee could be close or far away from the optimal lower bound. For instance, Edgeworth Expansion \cite{zheng2020sharp} can yield a tighter privacy guarantee than the Central Limit Theorem \cite{dong2019gaussian}. 
The closer to the optimal lower bound, the tighter the privacy guarantee. Tighter guarantees are desirable because they certify the same privacy level with less noise and improved output quality. 
However, exact tightness is often difficult to obtain; thus, prior work often derives \textit{nearly tight} guarantees instead \cite{wang2023unified}.

\sloppy \noindent \textbf{Mismatched formalization of normalized DP-SGD.}
However, for the privacy analysis, as shown in Fig.~\ref{fig:privacy-game-and-mech}, prior studies~\cite{abadi2016deep,bu2020deep} formalize one-round EASGM-based and ASGM-based DP\text{-}SGD as SGM in the game, thereby omitting the normalization step in the training procedure, and accordingly use the SGM guarantee to bound their per-round privacy leakage.
Indeed, the SGM-based guarantee may be unsound for the actual mechanism, because privacy guarantees depend directly on the output distribution of the mechanism, whereas this mismatch changes the distribution. 
Previous work~\cite{lebeda2025avoiding} identifies this issue for EASGM-based mechanisms, but does not formally analyze the real privacy guarantees.

\noindent\textbf{Privacy auditing of DP-SGD. }
Theoretical bounds assume idealized, bug-free implementations, which may not hold in practice due to implementation errors~\cite{cebere2026privacy}. Thus, the privacy leakage of implementations could deviate from theoretical expectations~\cite{nasr2023tight}. Numerous studies~\cite{nasr2021adversary,nasr2023tight,xiang2025tight} have attempted to audit whether DP-SGD implementations violate theoretical privacy guarantees. 
Following the DP security game, they instantiate the adversary as a distinguisher~$\phi$, repeatedly execute the DP-SGD implementations with two neighboring datasets, and use $\phi$ to infer which dataset was used based on obtained model updates. 
Finally, they compare the empirical privacy leakage, measured by FNR at a fixed FPR, with the corresponding theoretical value.
If the empirical FNR is above the FNR at the same fixed FPR implied by the privacy guarantee, no privacy violation is considered to have been found in the implementation.
Many studies have found no violation in DP-SGD implementations in practice \cite{nasr2023tight,koskela2025auditing,nasrlast,steinke2023privacy,xiang2025tight}. However, prior auditing works, including~\cite{koskela2025auditing,steinke2023privacy,nasr2023tight,nasrlast,xiang2025tight}, have primarily focused on auditing per-round SGM-based DP-SGD, without considering the normalization variants used in practice. For instance, Steinke et al.~\cite{steinke2023privacy} audit the DP-SGD implementation in Algorithm~2 of \cite{steinke2023privacy_auditing_one_run_arxiv}, and Xiang et al.~\cite{xiang2025tight} follow the same setting. Nasr et al.~\cite{nasr2023tight} audit the implementation in Algorithm~2 of~\cite{nasr2023tight}, and Koskela et al.~\cite{koskela2025auditing} follow the setting in Algorithm~4 of \cite{koskela2024auditing_density_arxiv}.  Consequently, these works do not audit the privacy leakage of implementations of EASGM- and ASGM-based DP-SGD algorithms, leaving it unclear whether such implementations truly satisfy the theoretical SGM privacy guarantee.

\noindent \textbf{Mismatched formalization of other DP mechanisms.}
Many other state-of-the-art differentially private mechanisms are built on EASGM-based or ASGM-based DP-SGD, including DP\textendash FETA~\cite{11023490}, PrivImage~\cite{299531}, dp\textendash promise~\cite{wang2024dp}, DP\textendash SUR~\cite{fu2024dpsur}, DISK~\cite{zhangdisk}, and TrainDPGAN~\cite{bie2023private}. These mechanisms also rely on SGM-based privacy analysis to derive their privacy guarantees. Consequently, the privacy analysis of these mechanisms suffers from the same mismatch as that of normalized DP-SGD. 

The theoretical privacy analysis guides the configuration of model training in reality to achieve the required privacy guarantee. 
However, these mismatches may underestimate the actual privacy leakage, resulting in a training procedure failing to satisfy the claimed privacy guarantee.
Thus, it is necessary to re-examine the privacy guarantees used for EASGM-based and ASGM-based DP\text{-}SGD and their derivatives in the correct game setting.  

\noindent \textbf{Our contributions.} 
In this paper, we revisit the security analysis of per-round EASGM- and ASGM-based DP\text{-}SGD by explicitly incorporating their normalization steps into the privacy game, and derive corresponding privacy guarantees. Our analysis is conducted in the \(f\)-DP framework, a de facto standard framework for differential privacy analysis \cite{dong2019gaussian,nasr2023tight,xiang2025tight}, which characterizes privacy through a trade-off function between the false positive rate (FPR) and false negative rate (FNR). We compare the derived guarantees with the one-round SGM guarantee established in~\cite{bu2020deep}. To the best of our knowledge, only Bu et al.~\cite{bu2020deep} provide a tight one-round privacy guarantee in the \(f\)-DP framework and an explicit trade-off function, making their work a natural baseline for comparison.
 
Our analysis reveals that  
\textbf{the privacy guarantees of EASGM and ASGM become weaker as the output dimension increases, and they are generally weaker than that of SGM when the output dimension exceeds certain values.}
For instance, our analysis shows that, in some cases, under the same FNR of \(0.5\), the FPRs of EASGM and ASGM fall below \(0.2\), whereas the previously claimed value is above \(0.4\)~(Fig. \ref{fig:Theoretical}). 
Furthermore, when the sampling ratio is~1, our analysis is tight: the FNR--FPR trade-off is achievable by an adversary without any amplification or relaxation introduced by analytical approximation. 
This finding suggests that the privacy leakage of EASGM- and ASGM-based DP\text{-}SGD implementations can exceed the guarantee claimed under SGM. 

We also audit the privacy of EASGM- and ASGM-based implementations to assess whether their privacy leakage can exceed the SGM guarantee with both synthetic and real-world datasets. We perform tight auditing to detect maximum privacy leakage. However, because tight auditing is computationally prohibitive on large datasets when the output dimension matches real-world scenarios, our auditing focuses on small datasets. 
We observe that the audited privacy leakage increases with the output dimension and can exceed the SGM guarantee. 


We also conduct a code audit of \texttt{privacy\_engine.py} across Opacus versions v0.9.0 to v1.5.4 and identify an additional DP\text{-}SGD variant in recent v1.$\ast$ releases that normalizes the noisy gradient by $\lfloor Nq \rfloor$. We denote this variant by FEASGM (Definition~\ref{def:FEASGM}). Although a previous Opacus bug report suggests that, for large datasets, the privacy guarantee of FEASGM is tightly bounded by that of SGM (Issue~\#571~\cite{opacus_issue_571}), our analysis shows that FEASGM can still provide a significantly weaker privacy guarantee than SGM even in the large-dataset regime (Appendix~\ref{sec:visualizing_FEASGM}).

To the best of our knowledge, this is the first work to provide a systematic per-round privacy analysis of EASGM- and ASGM-based DP-SGD. We summarize our contributions as follows:

\noindent \textbf{Re-analyzing privacy guarantee of DP-SGD.}
We conduct a comprehensive analysis of the per-round privacy guarantees of EASGM- and ASGM-based DP\text{-}SGD using tight analysis tools and derive a bound that can be used to estimate the actual strength of these guarantees. Moreover, we visualize this bound using a Central Limit Theorem (CLT)-based approximation, providing an intuitive understanding of their privacy behavior. Our analysis shows that the privacy guarantees of EASGM and ASGM can become substantially weaker than the corresponding SGM-based guarantees, especially in high-dimensional settings. 

\noindent \textbf{Auditing the privacy leakage of DP\text{-}SGD implementations.}
We audit the per-round DP\text{-}SGD procedures used in four state-of-the-art differentially private machine learning methods—DP-SUR~\cite{fu2024dpsur}, DPSGD-HF~\cite{tramer2020differentially}, DP-FETA~\cite{11023490}, and PrivImage~\cite{299531}. DPSGD-HF, DP-FETA, and PrivImage rely on Meta’s Opacus library for their DP-SGD implementations. For these three methods, we audit the implementations using Opacus versions consistent with their algorithmic descriptions. Empirical evaluations on widely used public datasets—CIFAR-10, MNIST, and FMNIST—show that, under the audited settings, all four implementations exhibit privacy leakage exceeding the SGM guarantee established in~\cite{bu2020deep}.

\noindent \textbf{Thorough code auditing of the Opacus library.} We examine the evolution of the \texttt{privacy\_engine.py} file in the Opacus library from v0.9.0 to v1.5.4 and find that version v0.13.0 use ASGM, versions v0.14.0--v1.4.\(\ast\) use EASGM, and versions v1.5.\(\ast\) use FEASGM (Section~\ref{sec:Opacus}). We further analyze the privacy guarantee of multi-round FEASGM-based DP-SGD and show that FEASGM can yield a weaker privacy guarantee than SGM (Appendix~\ref{sec:visualizing_FEASGM}).

\theoremstyle{definition}

\section{Preliminary}

\begin{table}[t]
\caption{Notation used in the analysis.}
\centering
\scriptsize
\begin{tabularx}{\columnwidth}{@{}lX@{}}
\toprule
\textbf{Symbol} & \textbf{Description} \\
\midrule
$D, D'$ & Neighboring datasets, $D=\{d_1,\dots,d_N\}$, $D'=D\cup d_{N+1}$ \\
$D^{-}$ & Shared subset of neighboring datasets, $D^{-}=D\cap D'$ \\
$P$ & Standard normal distribution $\mathcal{N}(0,1)$ \\
$B$ & Sampled batch from $D$, $B\subseteq D$ \\
$|S|$ & Number of elements in set $S$ \\
$Q_B$ & $\mathcal{N}(0,(|B|/(|B|+1))^2)$ \\
$Q_B^{\mu}$ & $\mathcal{N}(\mu,(|B|/(|B|+1))^2)$ \\
$Q_x$ & For any positive integer $x$, 
$\mathcal{N}\!\left(0,\left(\frac{x}{x+1}\right)^2\right)$ \\
$Q_x^{\mu}$ & For any positive integer $x$ and any $\mu \in \mathbb{R}$, 
$\mathcal{N}\!\left(\mu,\left(\frac{x}{x+1}\right)^2\right)$ \\
$q$ & Sampling ratio \\
$n$ & Dimension of the output distribution \\
$\mathrm{ID}$ & Identical trade-off function (Definition~\ref{def:identical_trade_off}) \\
$G_{\mu}$ & Gaussian trade-off function (Definition~\ref{def:Gaussian_DP})\\
 $T(P_{w},Q_{w})$ & Trade-off function in Lemma~\ref{Lemma:Jensen} \\
$T(P_{I}\mid I,Q_{I}\mid I)$ & Lower bound (privacy guarantee) of $T(P_{w},Q_{w})$ in Lemma~\ref{Lemma:Jensen} \\
\bottomrule
\end{tabularx}
\label{tab:notation}
\end{table}
\subsection{Differential Privacy}\label{sec:differential privacy}

\begin{definition}
    \textbf{Differential Privacy~\cite{fu2024dpsur}.} 
\textit{
A randomized algorithm $\mathcal{M}$ satisfies $(\epsilon, \delta)$-differential privacy if, for any two adjacent datasets $D$ and $D'$, and for all possible output sets $O$,
$\Pr[\mathcal{M}(D) \in O] \leq e^{\epsilon} \Pr[\mathcal{M}(D') \in O] + \delta$, where $\epsilon > 0$ and $\delta \in [0,1]$ are privacy parameters. 
Here, $\epsilon$ and $\delta$ (collectively referred to as the privacy budget) quantify the level of privacy loss: 
smaller values of either parameter imply stronger privacy guarantees. 
Datasets $D$ and $D'$ are said to be neighboring if they differ by at most one individual record, i.e., $D = D' \cup \{d\}$ or $D' = D \cup \{d\}$.
}

\noindent  \textbf{A functional view of DP: $f$-DP.}  To interpret the privacy guarantee of a mechanism~\(\mathcal{M}\), 
prior work uses the security game illustrated in Fig~\ref{fig:privacy-game-and-mech} \cite{dong2019gaussian,annamalai2024shuffle}.

\end{definition}

The adversary's guess in the DP game can be formalized as the following hypothesis test,
$\mathbf{H_0} : O \sim \mathcal{M}(D)$, and $\quad \mathbf{H_1} : O \sim \mathcal{M}(D')$. 
Consider a rejection rule $\mathcal{R}: \mathcal{Y} \rightarrow \{0,1\}$ in above hypothesis testing setup. Two key types of errors are: \textbf{Type I error} (false positive rate): \(\alpha = \Pr[\mathcal{R}(y) = 1 \mid \mathbf{H_0}]\), the chance of incorrectly rejecting \(\mathbf{H_0}\) when it is actually true; \textbf{Type II error} (false negative rate): \(\beta = \Pr[\mathcal{R}(y) = 0 \mid \mathbf{H_1}]\), the chance of incorrectly accepting \(\mathbf{H_0}\) when \(\mathbf{H_1}\) is true.

The adversary aims to reduce one type of error while fixing the other, by determining the smallest achievable $\beta$ for a given $\alpha$, leading to the following definition. 


\begin{definition}\textbf{Trade-off function~\cite{dong2019gaussian}.}\label{def:trade-off}
\textit{Given two distributions \( P \) and \( Q \), the trade-off function is defined as:}
$T(P,Q)(\alpha) = \inf_{\mathcal{R}} \left\{ \beta_{\mathcal{R}} : \alpha_{\mathcal{R}} \leq \alpha \right\}$.
\textit{Let \(\mathcal{R}\) be a rejection rule for the hypothesis testing problem \((P, Q)\), which aims to determine whether a given sample \(\mathcal{O}\) is drawn from \(P\) or \(Q\). That is, $H_0: \mathcal{O} \sim P, \quad H_1: \mathcal{O} \sim Q$.
The infimum is taken over all such rejection rules \(\mathcal{R}\).}
\end{definition}


In the DP game, the trade\textendash off function \(f=T(P,Q)\) for neighboring outputs \(P=\mathcal{M}(D)\) and \(Q=\mathcal{M}(D')\) provides the tightest characterization of the mechanism’s privacy leakage for the pair \((D,D')\). 
It determines the adversary’s minimum Type~II error under a fixed Type~I error \(\alpha\) \cite{kaissis2024beyond}: 
\(f(\alpha)\). Hence, the trade\textendash off function directly characterizes the pairwise privacy leakage of \(\mathcal{M}\); 
In particular, if \(T(P, Q) \ge T(P', Q')\), then the pairwise privacy leakage
corresponding to \((P, Q)\) is smaller than that corresponding to \((P', Q')\).

 Based on the trade\textendash off function~\(f\), an \(f\)\nobreakdash-DP mechanism was defined in prior work (Definition~\ref{def:f_DP}), which guarantees that, for any pair of neighboring datasets \(D\) and \(D'\), the privacy leakage induced by the $f$-DP mechanism is at most \(f\).

\begin{definition}\label{def:f_DP}\textbf{$f$-differential privacy \cite{dong2019gaussian}.} \label{def:f_DP} Let $f$ be a trade-off function. A mechanism $\mathcal{M}$ is said to be
$f$-differentially private if
   $ T\big(\mathcal{M}(D),\,\mathcal{M}(D')\big) \ge f$
for all neighbouring datasets $D$ and $D'$.
\end{definition}


Following~\cite{dong2019gaussian}, $(\epsilon,\delta)$-DP can be expressed through the trade-off function 
\begin{equation}\label{eq:trade-off_epsilon_delta}
  f_{\epsilon,\delta}(\alpha)=\max\left\{0,\;1-\delta-e^{\epsilon}\alpha,\;e^{-\epsilon}(1-\delta-\alpha)\right\}
\end{equation}

A more commonly used form of the trade-off function 
$f$ is defined based on two normal distributions, as follows:

\begin{definition}\textbf{$\mu$-Gaussian DP ($\mu$-GDP)~\cite{dong2019gaussian}.}\label{def:Gaussian_DP}
The trade-off function for distinguishing between $\mathcal{N}(0,1)$ and $\mathcal{N}(\mu,1)$ is defined as $G_\mu(x) = T(\mathcal{N}(0,1), \mathcal{N}(\mu,1))(x) = \Phi\left(\Phi^{-1}(1 - x) - \mu\right)$,
where $\Phi$ denotes the cumulative distribution function (c.d.f.) of the standard normal distribution. A mechanism $\mathcal{M}$ is said to satisfy $\mu$-GDP if it is $G_\mu$-DP.
\end{definition}

In the \(f\)\nobreakdash-DP framework, there exists a special trade--off function 
\(\mathrm{ID}\), defined by two identical distributions, which corresponds to no 
privacy leakage. We overload $\otimes$ to denote both product distributions and tensoring of trade-off functions; in particular, if $f=T(P,Q)$, then $f^{\otimes n}=T(P^{\otimes n},Q^{\otimes n})$.

\begin{definition}\label{def:identical_trade_off}
\textbf{Identical trade-off function \cite{dong2019gaussian}.} 
For any distribution $P$, let $\mathrm{ID} = T(P, P)$, where $\mathrm{ID}(\alpha) = 1 - \alpha$. 
The function $\mathrm{ID}$ is referred to as the identical trade-off function, 
which represents the case of no privacy leakage. 
Moreover, for any trade-off function $f$, it holds that 
$\mathrm{ID} \otimes f = f$ and $\mathrm{ID}\ge f$.
\end{definition}

Here, we present several algebraic properties of the trade-off function, which serve as the foundation for our privacy analysis.

\newtheorem{property}{Property}
\begin{property}\cite{dong2019gaussian}\label{property:expression_composition}
    Let $P$, $P'$, $Q$, and $Q'$ be independent distributions. Then
    $T(P \otimes Q, P' \otimes Q') = T(P, P') \otimes T(Q, Q').$
    
\end{property}

\begin{property}\cite{dong2019gaussian}\label{property:order}
    Let $P$, $P'$, $Q$, $Q'$, $P_1$, and $P_1'$ be independent distributions. Suppose
    $T(P, P') = f_1, \quad T(Q, Q') = f_2, \quad T(P_1, P_1') = f_3$.
    If $f_2 \leq f_3$, then
    $f_1 \otimes f_2 \leq f_1 \otimes f_3$.
\end{property}

\begin{property}\label{property:useful}
Let \(P\), \(Q\), and \(Q'\) be probability distributions. Suppose
\(T(P,Q)=f\) and \(T(P,Q')=f'\). For \(n\ge 1\), let
\(f^{\otimes n}:=T(P^{\otimes n},Q^{\otimes n})\). Then
$f^{\otimes n} \ge f^{\otimes n}\otimes f'$.
Indeed, since \(\mathrm{ID}\ge f'\), Property~\ref{property:order} gives
$f^{\otimes n}
= f^{\otimes n}\otimes \mathrm{ID}
\ge f^{\otimes n}\otimes f'$.
\end{property}

Finally, we present an established result in \(f\)\nobreakdash-DP~\cite{dong2019gaussian}, which is used in our analysis.

\begin{lemma}[Proposition~D.2 in~\cite{dong2019gaussian}]
For any trade-off function \(f\) that is not \(\operatorname{ID}\) (Definition \ref{def:identical_trade_off}),
\[
\lim_{n \to \infty} f^{\otimes n}(\alpha) = 0,
\qquad \forall\, \alpha \in (0,1].
\]
In fact, the convergence is exponentially fast.
\label{Lemma:zero}
\end{lemma}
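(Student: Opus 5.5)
Write $f=T(P,Q)$ with $f\neq\mathrm{ID}$. The plan is to pass to the likelihood-ratio (Neyman--Pearson) description of $f^{\otimes n}$ and then use a Chernoff/Bhattacharyya-type bound. Assume $P$ and $Q$ have densities $p,q$ with respect to a common dominating measure, for instance $(P+Q)/2$. Then $f^{\otimes n}=T(P^{\otimes n},Q^{\otimes n})$. By Property~\ref{property:expression_composition}, this is the trade-off function of the product pair, whose likelihood ratio is $\prod_{i=1}^n q(X_i)/p(X_i)$.

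The first step is to fix $\alpha\in(0,1]$ and exhibit a single rejection rule $\mathcal{R}_n$ whose type~I error is at most $\alpha$ and whose type~II error tends to $0$ exponentially fast. The rule is: reject $H_0$ exactly when $\prod_i q(X_i)\ge \prod_i p(X_i)$. By Markov's inequality applied to $\sqrt{\prod q/\prod p}$, its type~I error is at most $\rho^n$. Symmetrically, its type~II error is also at most $\rho^n$. Here $\rho=\int\sqrt{pq}\in[0,1]$ is the Bhattacharyya coefficient, and by independence the relevant integral factorises as $\int\prod_i\sqrt{p(x_i)q(x_i)}=\rho^n$.

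The second step, which I expect to be the only substantive point, is to show $\rho<1$. By Cauchy--Schwarz, $\rho=1$ forces $p=q$ almost everywhere, that is $P=Q$. In that case $f=T(P,P)=\mathrm{ID}$, contradicting the hypothesis. So $\rho<1$.

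The third step finishes the argument. Given $\alpha>0$, choose $n_0$ with $\rho^{n_0}\le\alpha$. For every $n\ge n_0$, the rule $\mathcal{R}_n$ is admissible in the infimum of Definition~\ref{def:trade-off}. Hence
\[
0\le f^{\otimes n}(\alpha)\le \rho^n,
\]
which gives the limit $0$ together with the exponential rate $\rho^n$.

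One caveat is that $f^{\otimes n}$ is defined through a representing pair $(P,Q)$. Because tensoring is well defined up to the choice of representing pair, as shown in~\cite{dong2019gaussian}, the bound does not depend on which pair we choose.
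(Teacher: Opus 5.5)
Your proof is correct. The paper gives no argument of its own for this lemma; it simply imports Proposition~D.2 of~\cite{dong2019gaussian}. So there is no in-paper proof to match, and your Bhattacharyya-coefficient argument stands as a valid self-contained justification.

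The details check out:
\begin{itemize}
\item $\prod_i p(X_i)>0$ holds $P^{\otimes n}$-a.s. Markov's inequality applied to $\sqrt{L_n}$, with $L_n=\prod_i q(X_i)/p(X_i)$, therefore bounds the type~I error by $\mathbb{E}_{P^{\otimes n}}[\sqrt{L_n}]=\rho^n$.
\item The type~II bound follows symmetrically.
\item The identity $1-\rho=\tfrac12\int(\sqrt p-\sqrt q)^2$ shows $\rho<1$ as soon as $P\neq Q$, which $f\neq\mathrm{ID}$ forces.
\end{itemize}
Unlike a law-of-large-numbers or Stein-lemma argument on the log-likelihood ratio, your route needs no moment or finite-KL assumptions, and it produces an explicit rate.

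Three optional sharpenings:
\begin{itemize}
\item For $\rho>0$, reject when $L_n\ge\rho^{2n}/\alpha^2$ instead of $L_n\ge 1$. The same Markov bounds then give $f^{\otimes n}(\alpha)\le\rho^{2n}/\alpha$ for every $n$, with no need for the threshold $n_0$.
\item Your caveat about the representing pair is unnecessary, because $\rho=\int_0^1\sqrt{|f'(x)|}\,dx$ depends on $f$ alone.
\item In the paper's application, $f_\ast=T(P,Q_N)$ in Theorem~\ref{theorem:EASGM_flaws}, your bound gives the concrete rate $\rho=\sqrt{2N(N+1)/(N^2+(N+1)^2)}$ at which $\overline{f}$ collapses as $n$ grows. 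This is sharper information than the limit statement alone.
\end{itemize}
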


\subsection{ Differentially Private Mechanisms in Machine Learning}
The per-round update of DP-SGD can be categorized into three implementation forms in the literature: the Subsampled Gaussian Mechanism (SGM)~(Definition~\ref{def:SGM}) \cite{chua2024private}, the Expected-Averaged SGM (EASGM)\cite{abadi2016deep}~(Definition~\ref{def:EASGM}), and the Averaged SGM (ASGM)~(Definition~\ref{def:ASGM})\cite{bu2020deep}.

\begin{definition}\label{def:SGM}
\textbf{Subsampled Gaussian Mechanism (SGM) ~\cite{chua2024private}.} 
Given a dataset \( D = \{d_{1}, d_{2}, \ldots, d_{N}\} \) containing \( N \) data points, the SGM proceeds in three main steps: 1. Independently sample each data point with probability \( q \leq 1 \) to form a minibatch \( B \); 2. Compute an information vector \( g_i \) for each sampled data point \( d_i \in B \), and clip its norm to be at most \( C\), i.e., $\bar{g}_i = g_i/\max(1, \frac{\|g_i\|}{C})$;
3. Aggregate the clipped vectors and add isotropic Gaussian noise:
$\sum_{i \in B} \bar{g}_i \;+\; G$, $G \sim \mathcal{N}(0,\sigma^2 C^2 \mathbb{I})$.

\end{definition}

\begin{definition}\label{def:EASGM}
    \textbf{Expected-Averaged Subsampled Gaussian Mechanism (EASGM) \cite{abadi2016deep}.} Following the same setting as Definition~\ref{def:SGM}, the output of EASGM can be formalized as
$\frac{\sum_{i \in B} \bar{g}_i + \mathcal{N}(0, \sigma^2 C^2 \mathbb{I})}{N q}$,
where $N$ denotes the total number of data points in the dataset $D$.
 
\end{definition}

\begin{definition}\label{def:ASGM}
\textbf{Batch-Averaged Subsampled Gaussian Mechanism (ASGM) \cite{bu2020deep}\footnote{A full and explicit description of ASGM is provided as Algorithm~1 in the full paper \cite{bu2020deep_fullpaper}.}.} Following the same setting as Definition~\ref{def:SGM}, the output of ASGM can be formalized as
$\frac{\sum_{i \in B} \bar{g}_i + \mathcal{N}(0, \sigma^2 C^2 \mathbb{I})}{|B|}$,
where $|B|$ denotes the number of data points in the sampled batch $B$.
  
\end{definition}


\subsection{Privacy Analysis}\label{sec:analysis}
\noindent 
In the security analysis, it is desirable to derive a tight privacy guarantee, as a tighter bound enables achieving the same privacy level with less noise,  which in turn leads to higher output quality of the mechanism. To derive the privacy guarantee of \(\mathcal{M}\), one must compute the 
trade--off function for each pair of neighboring datasets, identify the 
tightest lower bound among these functions, and adopt this bound as the 
privacy guarantee. The trade-off function between two distributions follows from the Neyman-Pearson lemma, which states that the likelihood-ratio test is uniformly most powerful. By thresholding the likelihood ratio, the optimal Type~II error $\beta$ corresponding to each Type~I error $\alpha$ can be obtained~\cite{dong2019gaussian}.

However, when the output distribution of a mechanism is highly complex, directly deriving the trade-off function via likelihood-ratio thresholding becomes computationally infeasible. In such cases, performing a tight privacy analysis is no longer tractable. Instead, one can only establish a lower bound of the trade-off function and derive a corresponding nearly tight privacy guarantee from these lower bounds.

Prior work derives a {tight and computable lower bound} on the trade\textendash off function under a stronger adversary 
by granting the adversary additional information beyond the mechanism’s output~\cite{wang2023unified}. Let  
\(P_{w} = \sum_{i} w_{i} P_{i}\) and \(Q_{w} = \sum_{i} w_{i} Q_{i}\) 
be mixture distributions generated by first sampling an index \(I \sim w\)
(i.e., choosing index \(i\) with probability \(w_{i}\)) 
and then drawing a sample from \(P_{I}\) or \(Q_{I}\).
Here, \(w = (w_{1}, w_{2}, \ldots, w_{m})\) is a discrete probability distribution 
such that \(w_{i} \ge 0\) and \(\sum_{i=1}^{m} w_{i} = 1\).

\begin{lemma}\label{Lemma:Jensen}
    \textbf{Joint concavity of trade-off functions \cite{wang2023unified}}. Let $\{P_i\}_{i=1}^{m}$ and $\{Q_i\}_{i=1}^{m}$ be two sequences of probability distributions. Denote the probability density functions (pdfs) of $P_i$ and $Q_i$ as $p_i$ and $q_i$, respectively. Consider the mixture distributions $P_w$ and $Q_w$ with pdfs $p_w = \sum_{i=1}^{m} w_i p_i \quad \text{and} \quad q_w = \sum_{i=1}^{m} w_i q_i$,
where the weight $\mathbf{w} = (w_1, \cdots, w_m)$ is such that $w_i \geq 0$ and $\sum_{i=1}^{m} w_i = 1$. \textit{For two mixture distributions $P_w$ and $Q_w$, it holds}
    $T(P_w, Q_w)(\alpha(t, c)) \geq  T((P_{I}|I,I),(Q_{I}|I,I))(\alpha(t, c))= \sum_{i=1}^{m} w_i T(P_i, Q_i)(\alpha_i(t, c))$,
where
$\alpha_i(t, c) = \mathbb{P}_{X\sim P_i} \left[ \frac{q_i}{p_i}(X) > t \right] + c\mathbb{P}_{X\sim P_i} \left[ \frac{q_i}{p_i}(X) = t \right]$
is the type I error for testing $P_i$ v.s. $Q_i$ using the likelihood ratio test, and
$\alpha(t, c) = \sum_{i=1}^{m} w_i \alpha_i(t, c)$. 

\end{lemma}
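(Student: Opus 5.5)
The approach is to reduce the inequality to the Neyman--Pearson lemma by letting a stronger adversary see the mixture index $I$. Consider the augmented experiment in which one observes the pair $(I,X)$, with $I\sim w$ and then $X\sim P_I$ under $H_0$ or $X\sim Q_I$ under $H_1$. The plan has three parts: (i) show that revealing $I$ can only help the adversary, giving $T(P_w,Q_w)\ge T((P_I|I,I),(Q_I|I,I))$; (ii) compute the augmented trade-off function explicitly via the likelihood-ratio test; (iii) read off the stated parametrisation.

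\textbf{Step (i): post-processing.} The mixture $P_w$ is exactly the image of the joint law of $(I,X)$ under the projection $(i,x)\mapsto x$, and likewise $Q_w$ is the image of the $H_1$ joint law. Any rejection rule $\mathcal{R}$ for $(P_w,Q_w)$ therefore lifts to the rule $\mathcal{R}'(i,x)=\mathcal{R}(x)$ for the augmented problem. The lifted rule has the same type I and type II errors as $\mathcal{R}$. Taking the infimum over rules then gives the inequality, since the infimum for the augmented problem is taken over a larger class of rules. Equivalently, this is the post-processing (data-processing) property of trade-off functions.

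\textbf{Step (ii): Neyman--Pearson on the augmented problem.} Under $H_0$ the joint density is $w_i p_i(x)$ and under $H_1$ it is $w_i q_i(x)$. The likelihood ratio is therefore $q_i(x)/p_i(x)$, because the weights $w_i$ cancel. By the Neyman--Pearson lemma, the optimal tests are the likelihood-ratio tests with threshold $t$ and randomisation $c$ on the boundary: reject when $q_i/p_i(X)>t$, and reject with probability $c$ when equality holds. Conditioned on $I=i$, this is exactly the likelihood-ratio test for $P_i$ versus $Q_i$ with the same $(t,c)$. Its type I error is $\alpha_i(t,c)$ as defined in the statement, and its type II error is $T(P_i,Q_i)(\alpha_i(t,c))$, again by Neyman--Pearson applied to the $i$-th pair. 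Averaging over $I$, the overall type I error is $\sum_i w_i\alpha_i(t,c)=\alpha(t,c)$ and the overall type II error is $\sum_i w_i T(P_i,Q_i)(\alpha_i(t,c))$. Because these tests are optimal, this sum equals $T((P_I|I,I),(Q_I|I,I))(\alpha(t,c))$.

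\textbf{Step (iii): coverage and the main obstacle.} Combining Steps (i) and (ii) at the point $\alpha(t,c)$ yields the stated chain of inequalities. The main obstacle is the care needed in Step (ii) with a common threshold. One must check that a single $t$ applied across all components, rather than a separate threshold for each $i$, is what Neyman--Pearson prescribes for the joint problem; it is, since the joint likelihood ratio is $q_i/p_i$ with no $i$-dependent rescaling. One must also check that, as $(t,c)$ ranges over $[0,\infty]\times[0,1]$, the values $\alpha(t,c)$ cover $[0,1]$, so that the identity describes the whole trade-off curve. This holds because $\alpha(t,c)$ is monotone in $(t,c)$ in lexicographic order and the randomisation $c$ fills the jumps.

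Finally, point masses and regions where $p_i=0$ need routine handling by the usual convention $q_i/p_i=\infty$ on $\{p_i=0\}$.
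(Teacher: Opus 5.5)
Your proposal is correct and takes essentially the paper's route. The paper cites the lemma rather than proving it in full, and justifies $T(P_w,Q_w)\ge T((P_I\mid I,I),(Q_I\mid I,I))$ exactly as you do, by post-processing after revealing $I$. Your Neyman--Pearson computation on the augmented experiment (the $w_i$ cancel, so a common $(t,c)$ gives an optimal test) supplies the equality with $\sum_i w_i T(P_i,Q_i)(\alpha_i(t,c))$, in the same slice-by-slice spirit as the paper's proof of Lemma~\ref{Lemma:min_equal}; the coverage remark in Step (iii) is true but not needed, since the statement only asserts the identity at the points $\alpha(t,c)$.
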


Lemma~\ref{Lemma:Jensen} formalizes the construction of a lower bound on the trade\textendash off function \(T(P_{w},Q_{w})\), 
where revealing the auxiliary variable \(I\) to the adversary induces the conditional output  
distributions \((P_{I}\mid I,I)\) and \((Q_{I}\mid I,I)\), for which the trade\textendash off  
function can be expressed in terms of the component functions \(T(P_{i},Q_{i})\).  
By the post\textendash processing property of differential privacy~\cite{wang2023unified}, revealing \(I\) increases the adversary’s distinguishing power; hence,
$T(P_{w},Q_{w}) \;\ge\; T\bigl((P_{I}\mid I,I),\,(Q_{I}\mid I,I)\bigr)$,
as stated in Lemma~\ref{Lemma:Jensen}.  This result provides a computable and tight lower bound for \(T(P_{w},Q_{w})\) in terms of the component trade-off functions \(\{T(P_{i},Q_{i})\}\).

\subsection{Privacy Auditing}\label{sec:privacy_auditing}

Privacy auditing aims to experimentally assess whether a mechanism violates a claimed \(f\)\nobreakdash-DP guarantee~\cite{xiang2025privacy,annamalai2025hitchhiker}. 
It proceeds by constructing a distinguisher and a pair of neighboring datasets \((D,D')\), and then estimating the empirical Type~I and Type~II errors from the mechanism's outputs. 
The auditing process follows the DP security game illustrated in Fig.~\ref{fig:privacy-game-and-mech}. 
Specifically, the distinguisher is repeatedly applied to outputs generated from the neighboring datasets, yielding empirical estimates of the Type~I and Type~II errors, denoted by \(\overline{\alpha}\) and \(\overline{\beta}\), respectively.

Based on the empirical results, confidence intervals for the false positive rate 
\(\alpha \in (\alpha_l,\alpha_r)\) and the false negative rate 
\(\beta \in (\beta_l,\beta_r)\) at confidence level~\(\gamma\) are derived from the observed values \((\overline{\alpha},\overline{\beta})\) and the total number of auditing trials~\(t\). 
This is typically achieved using the Clopper--Pearson method~\cite{clopper1934use,nasr2023tight}, which models \(\alpha\) and \(\beta\) as the success probabilities of two binomial distributions.

Following prior work~\cite{nasr2023tight,xiang2025privacy}, we compare the empirical auditing result with the claimed privacy guarantee in the \((\varepsilon,\delta)\)\nobreakdash-DP view. 
For the claimed trade-off function~\(f\), we first derive the tightest \((\varepsilon,\delta)\)\nobreakdash-DP guarantee it implies by fixing \(\varepsilon\) and finding the smallest \(\delta\) such that \(f \ge f_{\varepsilon,\delta}\), where \(f_{\varepsilon,\delta}\) is given in Eq.~\ref{eq:trade-off_epsilon_delta}.

We then translate the auditing results into empirical lower bounds on the corresponding \((\varepsilon,\delta)\)\nobreakdash-DP guarantee. 
For a fixed~\(\delta\), the empirical lower bound on \(\varepsilon\) is
\begin{equation}
\label{eq:audited_epsilon}
\varepsilon_L = \max\!\left\{
\log\frac{1-\delta-\alpha_r}{\beta_r},\,
\log\frac{1-\delta-\beta_r}{\alpha_r},\,
0
\right\}.
\end{equation}
Alternatively, for a fixed~\(\varepsilon\), the empirical lower bound on \(\delta\) is
\begin{equation}
\label{eq:audited_delta}
\delta_L = \max\!\left\{
0,\,
1-\alpha_r-e^{\varepsilon}\beta_r,\,
1-\beta_r-e^{\varepsilon}\alpha_r
\right\}.
\end{equation}

To determine whether a violation occurs, these empirical lower bounds are compared with the claimed guarantees. 
If~\(\varepsilon_L > \varepsilon\) or~\(\delta_L > \delta\), then the neighboring pair~\((D,D')\) provides empirical evidence that the mechanism violates its claimed privacy guarantee.
\section{Privacy Analysis of EASGM and ASGM}

In this section, we analyze the privacy guarantees of EASGM and ASGM, 
focusing on neighboring dataset pairs where both datasets have sizes 
greater than~\(1\).


\noindent \textbf{Analytical setting.} For neighboring datasets \(D\) and \(D'\), we derive the privacy guarantee only for pairs \((D, D')\) with a fixed ordering, i.e., not for \((D', D)\). 
This is because the trade-off function for $(D', D)$ can be obtained from that for $(D, D')$~\cite{dong2019gaussian}: interchanging the output distributions 
\(P = \mathcal{M}(D)\) and \(Q = \mathcal{M}(D')\) yields the inverse function \(f^{-1} = T(Q, P)\), 
which is the reflection of \(f = T(P, Q)\) across the line \(y = x\). 

To avoid division-by-zero in the per-round normalization, we adopt the following convention: 
(i) in EASGM, if the nominal normalization factor \(Nq = 0\) (e.g., \(q = 0\) or \(N = 0\)), we set the factor to \(1\); 
(ii) in ASGM, if the mini-batch size \(|B| = 0\), we also set the normalization factor to \(1\). 
This convention ensures numerical stability without altering the mechanism’s behavior. 
When no data are sampled (\(q = 0\) or \(|B| = 0\)), the aggregated gradient becomes a zero vector, and no privacy-relevant computation occurs. 

We first fix the notation used throughout our analysis.  
Let the neighboring datasets be \(D = \{d_{1}, \ldots, d_{N}\}\) and \(D' = D \cup \{d_{N+1}\}\), and write \(N = |D|\).  
All symbols are summarized in Table~\ref{tab:notation}. The following theoretical tools are used in our analysis.

\noindent\textbf{Use of the theoretical results.}
These results are used to simplify and compare trade-off functions, and to derive a computationally efficient upper bound for visualizing the resulting guarantees.
Lemma~\ref{Lemma:min_equal} provides the foundation for the remaining results.
Building on Lemma~\ref{Lemma:min_equal}, Lemma~\ref{Lemma:Order_in_trade-off} establishes
$T(\overline{P}_I \mid I, \overline{Q}_I \mid I)$ as a lower bound for
$T(P_I \mid I, Q_I \mid I)$, while Lemma~\ref{Lemma:Upper_bound}
derives an analytic upper bound for
$T(P_I \mid I, Q_I \mid I)$.
This upper bound is computationally efficient because it only requires evaluating each component $T(P_i,Q_i)(\alpha)$,
rather than the likelihood-based expression
$T(P_i,Q_i)(\alpha(t,c))$.
Finally, Theorems~\ref{thm:bound} and~\ref{thm:bound_fix_variance}
are used to compare the trade-off functions induced by Gaussian distributions with different variances.

\begin{lemma}\label{Lemma:min_equal}
Given mixture distributions
$P_{w}=\sum_{i=1}^{m}w_{i}P_{i}$,
$Q_{w}=\sum_{i=1}^{m}w_{i}Q_{i}$,
let
$E_{\alpha}
=
\left\{
(\alpha_{1},\ldots,\alpha_{m})\in[0,1]^m
\;\middle|\;
\sum_{i=1}^{m}w_{i}\alpha_{i}=\alpha
\right\}$.
Then, for any \(\alpha\in[0,1]\),
$T((P_I\mid I,I), (Q_I\mid I,I))(\alpha)
=
\min_{(\alpha_{1},\ldots,\alpha_{m})\in E_{\alpha}}
\sum_{i=1}^{m}w_{i}\,T(P_{i},Q_{i})(\alpha_{i})$.
The proof is given in Appendix~\ref{Proof_Lemma_mim_equal}.
\end{lemma}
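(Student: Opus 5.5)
The argument is a two-sided inequality. The setting is the pair of joint distributions $(I,X)$, where $I\sim w$ and, conditionally on $I=i$, $X\sim P_i$ under $\mathbf{H_0}$ and $X\sim Q_i$ under $\mathbf{H_1}$.

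\textbf{Lower bound ($\ge$).} Take any (possibly randomized) rejection rule $\mathcal{R}(i,x)$ for the joint problem with Type~I error at most $\alpha$. Restrict it to each slice: $\mathcal{R}_i(x)=\mathcal{R}(i,x)$ is a test of $P_i$ versus $Q_i$. Let $\alpha_i$ and $\beta_i$ be its Type~I and Type~II errors. Conditioning on $I$ gives
\[
\alpha_{\mathcal{R}}=\sum_i w_i\alpha_i,\qquad \beta_{\mathcal{R}}=\sum_i w_i\beta_i .
\]
By the definition of the trade-off function (Definition~\ref{def:trade-off}), $\beta_i\ge T(P_i,Q_i)(\alpha_i)$. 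Each $T(P_i,Q_i)$ is non-increasing, so if $\sum_i w_i\alpha_i<\alpha$ I raise some $\alpha_i$ (one with $w_i>0$) until the sum equals $\alpha$. This only lowers the right-hand side, and it lands the vector in $E_\alpha$. Hence $\beta_{\mathcal{R}}\ge\inf_{E_\alpha}\sum_i w_i T(P_i,Q_i)(\alpha_i)$. Taking the infimum over $\mathcal{R}$ gives one direction.

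\textbf{Upper bound ($\le$).} Fix $(\alpha_1,\dots,\alpha_m)\in E_\alpha$. For each $i$, choose a test $\mathcal{R}_i$ of $P_i$ versus $Q_i$ with Type~I error at most $\alpha_i$ and Type~II error equal to $T(P_i,Q_i)(\alpha_i)$. Such a test exists: by Neyman--Pearson a randomized likelihood-ratio test attains the infimum in Definition~\ref{def:trade-off} exactly. Define the joint rule $\mathcal{R}(i,x)=\mathcal{R}_i(x)$. Its Type~I error is at most $\sum_i w_i\alpha_i=\alpha$, and its Type~II error is $\sum_i w_iT(P_i,Q_i)(\alpha_i)$. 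Therefore $T((P_I\mid I,I),(Q_I\mid I,I))(\alpha)$ is at most this sum, for every point of $E_\alpha$.

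\textbf{The infimum is a minimum.} The remaining step is to show that the infimum over $E_\alpha$ is attained, which justifies writing $\min$. The set $E_\alpha$ is a nonempty compact subset of $[0,1]^m$; it contains the constant vector $(\alpha,\dots,\alpha)$. Trade-off functions are convex, non-increasing and continuous on $[0,1]$ by the characterization in~\cite{dong2019gaussian}. The objective is therefore a continuous function on a compact set, and Weierstrass gives a minimizer.

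I expect this step to be the main point needing care. Continuity at $\alpha_i=0$ can fail if one uses a version of trade-off functions that is only right-continuous there. If that happens, I would instead use lower semicontinuity: a convex, non-increasing function on $[0,1]$ is lower semicontinuous after its value at $0$ is taken as the limit, which trade-off functions satisfy. Lower semicontinuity is still enough for the minimum to be attained on a compact set.

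Components with $w_i=0$ contribute nothing to either side, so they can be ignored throughout.
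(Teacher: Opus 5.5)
Your proof is correct and uses the same slice-decomposition argument as the paper: restrict a joint rule to each slice $I=i$ for the $\ge$ direction, and assemble per-slice Neyman--Pearson tests for the $\le$ direction. It is in fact slightly more careful than the paper's. You bound an arbitrary rule's slice errors directly instead of using the paper's exchange argument on a globally optimal rule. You also justify that the minimum is attained, which the paper's proof simply assumes; your semicontinuity fallback is unnecessary, since trade-off functions are continuous on $[0,1]$ by Proposition~2.2 of \cite{dong2019gaussian}.

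One small correction: when $\sum_i w_i\alpha_i<\alpha$ you may need to raise several coordinates rather than one, because a single $\alpha_i$ can reach $1$ before the weighted sum reaches $\alpha$. This is always possible since $\sum_i w_i=1\ge\alpha$.
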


\begin{lemma}\label{Lemma:Order_in_trade-off}
For a fixed \(y\in[m]\), suppose that \(T(\overline{P}_{y},\overline{Q}_{y})(\alpha)\le T(P_{y},Q_{y})(\alpha)\) for all \(\alpha\in[0,1]\). Let \(\overline{Q}_{w}=\sum_{i\in[m]\setminus\{y\}}w_{i}Q_{i}+w_{y}\overline{Q}_{y}\) and \(\overline{P}_{w}=\sum_{i\in[m]\setminus\{y\}}w_{i}P_{i}+w_{y}\overline{P}_{y}\). Then, for any \(\alpha\in[0,1]\), \(T((P_{I}\mid I,I),(Q_{I}\mid I,I))(\alpha)\ge T((\overline{P}_{I}\mid I,I),(\overline{Q}_{I}\mid I,I))(\alpha)\). The proof is given in Appendix~\ref{Proof:Lemma_Order_in_trade-off}.
\end{lemma}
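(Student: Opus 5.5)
We prove Lemma~\ref{Lemma:Order_in_trade-off} by applying Lemma~\ref{Lemma:min_equal} to both mixtures and then comparing the two minimization problems term by term.

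Fix \(\alpha\in[0,1]\). The original mixture \((P_w,Q_w)\) and the modified mixture \((\overline{P}_w,\overline{Q}_w)\) share the same weights \(w\). Hence they share the same feasible set \(E_\alpha=\{(\alpha_1,\ldots,\alpha_m)\in[0,1]^m:\sum_i w_i\alpha_i=\alpha\}\). Lemma~\ref{Lemma:min_equal} then gives
\[
T((P_I\mid I,I),(Q_I\mid I,I))(\alpha)=\min_{(\alpha_i)\in E_\alpha}\sum_{i=1}^m w_i\,T(P_i,Q_i)(\alpha_i),
\]
and
\[
T((\overline{P}_I\mid I,I),(\overline{Q}_I\mid I,I))(\alpha)=\min_{(\alpha_i)\in E_\alpha}\Bigl[\sum_{i\neq y} w_i\,T(P_i,Q_i)(\alpha_i)+w_y\,T(\overline{P}_y,\overline{Q}_y)(\alpha_y)\Bigr].
\]
For \(i\neq y\) the components of the modified mixture are unchanged, which is why the second objective differs from the first only in the \(y\)-th term.

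Now fix any feasible vector \((\alpha_1,\ldots,\alpha_m)\in E_\alpha\). By hypothesis, \(T(\overline{P}_y,\overline{Q}_y)(\alpha_y)\le T(P_y,Q_y)(\alpha_y)\), and \(w_y\ge 0\). So at this vector the second objective is at most the first. Take \((\alpha_i^\ast)\) to be a minimizer of the first problem. Then
\[
T((\overline{P}_I\mid I,I),(\overline{Q}_I\mid I,I))(\alpha)\le \sum_{i\neq y} w_i\,T(P_i,Q_i)(\alpha_i^\ast)+w_y\,T(\overline{P}_y,\overline{Q}_y)(\alpha_y^\ast)\le T((P_I\mid I,I),(Q_I\mid I,I))(\alpha).
\]
This is the claimed inequality. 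If one does not want to rely on attainment of the minimum, the same argument works with infima: every value of the first objective dominates some value of the second, so the infimum of the second is at most the infimum of the first.

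The main point to check is that Lemma~\ref{Lemma:min_equal} applies verbatim to the modified mixture \((\overline{P}_w,\overline{Q}_w)\). This holds because that lemma places no restrictions on the component distributions beyond their being probability distributions. It also requires the feasible set to be identical in both problems, which holds because only the \(y\)-th component changes and the weights do not. Attainment of the minimum is not needed, by the infimum version above.
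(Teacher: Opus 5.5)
Your proof is correct and follows essentially the paper's argument. The paper also takes a minimizer $(\alpha_i^\ast)$ from Lemma~\ref{Lemma:min_equal} for the original mixture, then uses the shared feasible set $E_\alpha$ and the pointwise hypothesis on the $y$-th term; where you apply Lemma~\ref{Lemma:min_equal} a second time to $(\overline{P}_w,\overline{Q}_w)$, the paper instead builds the spliced test $\phi_{\mathrm{sub}}$ explicitly (swapping in an optimal test for $(\overline{P}_y,\overline{Q}_y)$ at level $\alpha_y^\ast$), which is exactly the achievability half of that lemma.
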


\begin{lemma}\label{Lemma:Upper_bound}
Let \(\mathbf{w}=(w_1,w_2,\ldots,w_m)\), where \(w_i\ge0\) and \(\sum_{i=1}^{m}w_i=1\), and let \(I\) be the random variable such that \(\mathbb{P}[I=i]=w_i\). Then, for any \(\alpha\in[0,1]\), \(T((P_I\mid I,I),(Q_I\mid I,I))(\alpha)\le\sum_{i=1}^{m}w_i\,T(P_i,Q_i)(\alpha)\). The proof is given in Appendix~\ref{Proof:Lemma_Upper_bound}.
\end{lemma}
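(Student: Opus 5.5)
The plan is to invoke Lemma~\ref{Lemma:min_equal}, which writes $T((P_I\mid I,I),(Q_I\mid I,I))(\alpha)$ as a minimum over the set $E_\alpha$ of allocations $(\alpha_1,\ldots,\alpha_m)$ of the weighted sum $\sum_i w_i T(P_i,Q_i)(\alpha_i)$. A minimum is bounded above by its value at any feasible point, so the whole argument reduces to exhibiting one convenient point of $E_\alpha$.

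The feasible point I will use is the constant allocation $\alpha_1=\cdots=\alpha_m=\alpha$. Two checks are needed.
\begin{itemize}
\item Each coordinate lies in $[0,1]$, because $\alpha\in[0,1]$.
\item The weighted constraint holds, since $\sum_{i=1}^m w_i\alpha=\alpha\sum_{i=1}^m w_i=\alpha$ using $\sum_i w_i=1$.
\end{itemize}
Hence $(\alpha,\ldots,\alpha)\in E_\alpha$. Evaluating the objective at this point gives
\[
T((P_I\mid I,I),(Q_I\mid I,I))(\alpha)\le\sum_{i=1}^m w_i\,T(P_i,Q_i)(\alpha),
\]
which is the claim.

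The only substantive ingredient is Lemma~\ref{Lemma:min_equal}, in particular that it is a genuine equality (or at least an upper bound by the minimum) valid for every $\alpha\in[0,1]$, including the endpoints. I take that lemma as given.

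If one preferred not to rely on it, a direct argument also works. The adversary who observes $I$ can apply, for each $i$, the optimal level-$\alpha$ test for $P_i$ versus $Q_i$. This combined test has overall type I error $\sum_i w_i\alpha=\alpha$ and type II error $\sum_i w_i T(P_i,Q_i)(\alpha)$. Since the trade-off function is the infimum of type II error over all tests of level at most $\alpha$, this again gives the bound.

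The one delicate point in that direct route is whether the infimum in each $T(P_i,Q_i)(\alpha)$ is attained. Randomized Neyman--Pearson tests attain it. If attainment were in doubt, I would replace each optimal test with an $\varepsilon$-optimal one and let $\varepsilon\to0$. This is the only step I would treat as a potential obstacle, and it is minor.
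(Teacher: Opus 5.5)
Your proof is correct, and your fallback ``direct argument'' is exactly the paper's proof. The appendix takes, for each $i$, an optimal level-$\alpha$ test $\phi_i$ for $P_i$ versus $Q_i$ and sets $\phi(x,i)=\phi_i(x)$. It then checks that the type I error is $\sum_i w_i\alpha=\alpha$ and the type II error is $\sum_i w_i T(P_i,Q_i)(\alpha)$, and compares with the optimal test.

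Your primary route differs: it obtains the bound as a one-line corollary of Lemma~\ref{Lemma:min_equal}, evaluated at the constant allocation $(\alpha,\ldots,\alpha)\in E_\alpha$. This is valid given that lemma, but it uses more than you need. The full equality in Lemma~\ref{Lemma:min_equal} also includes the ``$\ge$'' half, which rests on an exchange argument for an optimal global test and on the minimum over $E_\alpha$ being attained. Your bound only needs the ``$\le$'' half, and the paper proves that half with the same slice-wise test you give as your fallback. The direct argument is therefore the more self-contained one, and it keeps Lemma~\ref{Lemma:Upper_bound} independent of Lemma~\ref{Lemma:min_equal}.

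The paper simply asserts $\mathbb{E}_{x\sim P_i}[\phi_i(x)]=\alpha$ for an optimal test. Your remark that randomized Neyman--Pearson tests attain each $T(P_i,Q_i)(\alpha)$, with an $\varepsilon$-optimal fallback, makes explicit the one point the paper takes for granted.
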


\begin{theorem}\label{thm:bound}
Let \(0<\sigma<1\). For any \(\mu_{1}\ge \mu_{2}\ge 0\) and any
\(\alpha\in[0,1]\), it holds that
$T\!\bigl(\mathcal{N}(0,1), \mathcal{N}(\mu_{1},\sigma^{2})\bigr)(\alpha)
\le
T\!\bigl(\mathcal{N}(0,1), \mathcal{N}(\mu_{2},\sigma^{2})\bigr)(\alpha)$.
The proof is given in Appendix~\ref{App_proof:Theorem_bound}.
\end{theorem}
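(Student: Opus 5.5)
We will prove the claim by writing the trade-off function explicitly via the Neyman--Pearson lemma and then comparing the two mean parameters directly. Write $P=\mathcal{N}(0,1)$ and $Q_\mu=\mathcal{N}(\mu,\sigma^2)$ with $0<\sigma<1$ and $\mu\ge 0$. The likelihood ratio is
\[
\frac{q_\mu}{p}(x)=\frac{1}{\sigma}\exp\!\Bigl(\tfrac{x^2}{2}-\tfrac{(x-\mu)^2}{2\sigma^2}\Bigr).
\]
Its logarithm is a concave quadratic in $x$, because the coefficient of $x^2$ is $\tfrac12(1-\sigma^{-2})<0$. Hence every likelihood-ratio rejection region $\{q_\mu/p>t\}$ is a bounded interval $[a,b]$, and its center is the vertex $c(\mu)=\mu/(1-\sigma^2)$. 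Since both distributions are continuous, no randomization is needed, and
\[
T(P,Q_\mu)(\alpha)=\min\Bigl\{\,1-\Pr_{Q_\mu}[a\le X\le b]\;:\;\Pr_P[a\le X\le b]=\alpha\Bigr\},
\]
where the optimum is attained at the likelihood-ratio interval. It is therefore enough to show that for every $\alpha$, the maximal $Q_\mu$-mass of a $P$-level-$\alpha$ set is nondecreasing in $\mu$.

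The cleanest route is a pointwise comparison of rejection rules rather than a comparison of the optimal intervals. Fix $\alpha$ and let $R_2=[a,b]$ be the optimal region for $\mu_2$. We want a region $R_1$ with $P(R_1)=\alpha$ and $Q_{\mu_1}(R_1)\ge Q_{\mu_2}(R_2)$.

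1. \emph{Symmetric case.} By the symmetry of $P$, reflecting $R_2$ to $-R_2$ preserves its $P$-mass. Since the vertex $c(\mu_2)\ge 0$, $R_2$ is centered at a nonnegative point. Argue first that we may take $R_2$ with center $\ge 0$.

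2. \emph{Translation.} Consider the shifted family $R(s)=[a+s,b'(s)]$, where $b'(s)$ is chosen so that $P(R(s))=\alpha$ is kept fixed. Show that, as the interval moves right starting from a center $\ge 0$, the $Q_{\mu_1}$-mass dominates the $Q_{\mu_2}$-mass. Concretely, $Q_{\mu_1}(R_2)=\Phi\bigl((b-\mu_1)/\sigma\bigr)-\Phi\bigl((a-\mu_1)/\sigma\bigr)$, and the difference $Q_{\mu_1}(R_2)-Q_{\mu_2}(R_2)$ can be negative when $R_2$ sits far left of $\mu_1$. So we will not compare on the same set directly.

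3. \emph{Direct comparison.} The more robust approach is to use $R_2$ shifted by $\Delta=\mu_1-\mu_2$, with endpoints then adjusted to restore $P$-mass $\alpha$. The shift gives $Q_{\mu_1}(R_2+\Delta)=Q_{\mu_2}(R_2)$ exactly. The $P$-mass changes, however, so we need a monotonicity argument showing that the adjustment only increases the $Q_{\mu_1}$-mass.

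A cleaner alternative, which I would try first, is to argue monotonicity in $\mu$ through the derivative. Parametrize the optimal interval by its endpoints, which satisfy the two equations
\[
\Phi(b)-\Phi(a)=\alpha,\qquad \frac{q_\mu}{p}(a)=\frac{q_\mu}{p}(b).
\]
Define $\beta(\mu)=1-\Phi\bigl((b-\mu)/\sigma\bigr)+\Phi\bigl((a-\mu)/\sigma\bigr)$. By the envelope theorem, optimality means the endpoint variations contribute nothing to first order, so
\[
\beta'(\mu)=\frac{1}{\sigma}\Bigl[\varphi\bigl(\tfrac{b-\mu}{\sigma}\bigr)-\varphi\bigl(\tfrac{a-\mu}{\sigma}\bigr)\Bigr].
\]
This is $\le 0$ iff $|b-\mu|\ge|a-\mu|$, that is, iff the interval's midpoint $(a+b)/2$ is at most $\mu$.

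The remaining task is to show that the optimal interval's midpoint lies in $[0,\mu]$. The equal-likelihood-ratio condition gives $(a+b)/2=c(\mu)=\mu/(1-\sigma^2)$, which exceeds $\mu$, so the derivative sign would come out wrong. I must recheck this. The set where the ratio is large is $\{\log(q_\mu/p)>\log t\}$. Its center is the maximizer of $\tfrac{x^2}{2}-\tfrac{(x-\mu)^2}{2\sigma^2}$, and setting the derivative to zero gives $x-(x-\mu)/\sigma^2=0$, so $x=\mu/(1-\sigma^2)\cdot(-1)\cdot(-1)$, which is indeed $\mu/(1-\sigma^2)>\mu$. With the midpoint above $\mu$, the interval has $|b-\mu|>|a-\mu|$, so $\varphi\bigl((b-\mu)/\sigma\bigr)<\varphi\bigl((a-\mu)/\sigma\bigr)$ and $\beta'(\mu)<0$. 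The sign is correct after all: $\beta$ is decreasing in $\mu$. Integrating from $\mu_2$ to $\mu_1$ then gives the claim.

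The main obstacle is rigor in the envelope step: differentiability of the endpoints $a(\mu),b(\mu)$ via the implicit function theorem, together with the boundary cases $\alpha\in\{0,1\}$, which hold trivially, and $\mu_2=0$.
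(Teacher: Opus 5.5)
Your argument is correct and is essentially the paper's proof. Both take the Neyman--Pearson interval centred at $\mu/(1-\sigma^2)$, differentiate the Type~II error in the mean, and read off the sign from the centre lying to the right of $\mu$; the implicit function theorem applied to $\Phi(z+t)-\Phi(z-t)=\alpha$ supplies the differentiability you flag, including at $\mu=0$. Your envelope step simply removes the endpoint terms that the paper computes explicitly via $\partial t/\partial y$ and then cancels through the identity $X=Y$, which is exactly the equal-likelihood-ratio condition at the endpoints.

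Delete the intermediate claim that $\beta'(\mu)\le 0$ iff the midpoint is at most $\mu$, together with the ensuing ``remaining task''. Since $(b-\mu)^2-(a-\mu)^2=(b-a)(a+b-2\mu)$, the correct equivalence is midpoint $\ge\mu$, and $\mu/(1-\sigma^2)\ge\mu$ gives this immediately.
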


\begin{theorem}\label{thm:bound_fix_variance}
Let \(0<\sigma_{2}\le \sigma_{1}\le 1\). Then, for any
\(\alpha\in[0,1]\), it holds that
$T\!\bigl(\mathcal{N}(0,1), \mathcal{N}(0,\sigma_{2}^{2})\bigr)(\alpha)
\le
T\!\bigl(\mathcal{N}(0,1), \mathcal{N}(0,\sigma_{1}^{2})\bigr)(\alpha)$.
The proof is given in Appendix~\ref{App_proof:bound_fix_variance}.
\end{theorem}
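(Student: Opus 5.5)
The plan is to prove the inequality by a \emph{post-processing} (data-processing) argument rather than by computing the two trade-off functions explicitly. The key fact, from \cite{dong2019gaussian}, is that for any Markov kernel (randomized map) $K$ and any distributions $P,Q$,
\[
T\bigl(K(P),K(Q)\bigr)\ \ge\ T(P,Q).
\]
This holds because any rejection rule applied to the processed output is itself a randomized rejection rule on the original output. It therefore suffices to construct a kernel $K$ with two properties:
\[
K\bigl(\mathcal{N}(0,1)\bigr)=\mathcal{N}(0,1),
\qquad
K\bigl(\mathcal{N}(0,\sigma_2^2)\bigr)=\mathcal{N}(0,\sigma_1^2).
\]
Applying the fact with $P=\mathcal{N}(0,1)$ and $Q=\mathcal{N}(0,\sigma_2^2)$ then gives $T(\mathcal{N}(0,1),\mathcal{N}(0,\sigma_1^2))\ge T(\mathcal{N}(0,1),\mathcal{N}(0,\sigma_2^2))$ pointwise in $\alpha$, which is the claim.

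For the kernel, I would take the linear Gaussian channel $K(x)=a x+\sqrt{1-a^2}\,Z$, where $Z\sim\mathcal{N}(0,1)$ is independent of $x$ and $a\in[0,1]$.
\begin{itemize}
\item If $X\sim\mathcal{N}(0,1)$, then $K(X)\sim\mathcal{N}(0,a^2+1-a^2)=\mathcal{N}(0,1)$, so the null distribution is preserved for every $a\in[0,1]$.
\item If $X\sim\mathcal{N}(0,\sigma_2^2)$, then $K(X)\sim\mathcal{N}\bigl(0,\,a^2\sigma_2^2+1-a^2\bigr)$.
\end{itemize}
Setting the second variance equal to $\sigma_1^2$ gives $a^2(1-\sigma_2^2)=1-\sigma_1^2$. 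When $\sigma_2<1$, this is solved by
\[
a^2=\frac{1-\sigma_1^2}{1-\sigma_2^2}.
\]
The hypothesis $0<\sigma_2\le\sigma_1\le 1$ is used exactly here: it guarantees that this ratio lies in $[0,1]$, so $a$ is real and $\sqrt{1-a^2}$ is well defined.

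Two boundary cases remain.
\begin{itemize}
\item If $\sigma_2=1$, then $\sigma_1=1$ as well, so both sides equal $\mathrm{ID}$ and the inequality is trivial.
\item If $\sigma_1=1$ (so $a=0$), the kernel outputs pure noise $\mathcal{N}(0,1)$ regardless of its input. The right-hand side is then $\mathrm{ID}$, and the inequality holds because $\mathrm{ID}\ge f$ for every trade-off function $f$ (Definition~\ref{def:identical_trade_off}).
\end{itemize}

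I do not expect a serious obstacle. The step needing the most care is justifying the post-processing inequality for a \emph{randomized} map, since the paper so far only invokes post-processing informally. I would cite it directly from \cite{dong2019gaussian} (Lemma~2.9 there). Alternatively, it can be argued in one line: for any test $\phi$ on the output of $K$, the function $x\mapsto\mathbb{E}_Z[\phi(K(x))]$ is a test on $x$ with the same type I and type II errors, so the infimum defining $T(K(P),K(Q))$ runs over a subset of the achievable error pairs for $(P,Q)$.
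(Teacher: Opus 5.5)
Your proof is correct, and it takes a genuinely different route from the paper's.

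\textbf{The paper's route.} It computes the trade-off function explicitly. For $\sigma<1$ the likelihood ratio $\frac{1}{\sigma}\exp\bigl(-\frac{1-\sigma^2}{2\sigma^2}x^2\bigr)$ is decreasing in $|x|$, so the Neyman--Pearson test rejects on $[-t,t]$ with $\alpha=2\Phi(t)-1$, where $t$ does not depend on $\sigma$. This gives $T(\mathcal{N}(0,1),\mathcal{N}(0,\sigma^2))(\alpha)=2\bigl(1-\Phi(t/\sigma)\bigr)$, and differentiating in $\sigma$ shows it is strictly increasing. The cases $\sigma=1$ (by continuity) and $\alpha\in\{0,1\}$ are then handled separately.

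\textbf{Your route.} You show that $(\mathcal{N}(0,1),\mathcal{N}(0,\sigma_1^2))$ is a garbling of $(\mathcal{N}(0,1),\mathcal{N}(0,\sigma_2^2))$. The common channel is $x\mapsto ax+\sqrt{1-a^2}\,Z$ with $a^2=(1-\sigma_1^2)/(1-\sigma_2^2)$, and you then invoke post-processing (Lemma~2.9 of \cite{dong2019gaussian}).

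\textbf{What your approach buys.} It needs no likelihood-ratio computation or calculus. It covers every $\alpha\in[0,1]$ at once and needs no continuity argument at $\sigma_1=1$, since $a=0$ already lies inside the construction. It also tensorizes immediately: applying the channel coordinatewise gives $T(P^{\otimes k},Q_{N_2}^{\otimes k})\le T(P^{\otimes k},Q_{N_1}^{\otimes k})$ directly, which yields Lemma~\ref{fact:number_of_data} without going through Property~\ref{property:order}.

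\textbf{What the paper's computation buys.} It gives strict monotonicity for $\alpha\in(0,1)$, which your argument does not. It also produces the explicit parametrization $\alpha(t)=2\Phi(t)-1$, $\beta(t)=2\bigl(1-\Phi(t/\sigma)\bigr)$. The paper reuses this parametrization in the proof of Theorem~\ref{thm:upperbound_estimate} to show $\kappa_3(\overline{f_\ast})<\infty$.
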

\subsection{\textbf{Analysis Roadmap}}

We now provide a high-level overview of the analysis for EASGM and ASGM.
Our goal is to establish valid privacy guarantees for these mechanisms and to
understand how these guarantees compare with the closed-form privacy guarantee of SGM.

\noindent \textbf{Step 1: Characterize the privacy leakage.}
For each mechanism, we first characterize the output distributions induced by an
ordered neighboring pair $(D,D')$ and rewrite the corresponding trade-off function
in mixture form. For EASGM, this takes the form
$T(P_w, Q_w)$.
The ASGM case is handled analogously.

\noindent \textbf{Step 2: Derive valid privacy guarantees.}
Using this mixture representation, we construct lower bounds of the corresponding
trade-off functions to derive valid privacy guarantees. In particular, we first
obtain a valid per-pair privacy guarantee, denoted by
$T(P_I \mid I, Q_I \mid I)$, and then further derive a valid guarantee that applies
to all neighboring pairs with $|D|=N$ and $|D'|=N+1$, namely
$T(\overline{P}_I \mid I, \overline{Q}_I \mid I)$.
The privacy guarantee for ASGM is derived in the same manner.

\noindent \textbf{Step 3: Derive tractable upper bounds and compare with SGM.}
A direct comparison between the derived guarantees of EASGM and ASGM and the
closed-form privacy guarantee of SGM is challenging, because the guarantees of
EASGM and ASGM do not admit simple closed-form expressions.
We therefore derive analytic upper bounds $\overline{f}$ for the corresponding
per-pair guarantees such that
$\overline{f}\ge T(P_I \mid I, Q_I \mid I)$.
These tractable conservative bounds allow us to visualize the privacy guarantees and compare them with the SGM guarantee.

For both EASGM and ASGM, the resulting upper bound $\overline{f}$ contains a term of the form $T(P^{\otimes (n-1)},Q^{\otimes (n-1)})$, where $P$ and $Q$ are two different Gaussian distributions. As the output dimension $n$ increases, this term approaches zero (Lemma~\ref{Lemma:zero}), and hence $\overline{f}$ decreases, whereas the SGM guarantee does not depend on the output dimension. In other words, unlike SGM, the privacy guarantees of EASGM and ASGM degrade with the output dimension. This asymptotic behavior allows us to identify regimes in which the guarantees of EASGM and ASGM can become weaker than the SGM guarantee.
\subsection{Privacy Guarantee of EASGM } 
\label{sec:guarantee_EASGM}

Let \(\mathcal{M}\) denote the EASGM mechanism (Definition~\ref{def:EASGM}).
The first step in establishing the trade-off function is to formally characterize the output distributions of \(\mathcal{M}(D)\) and \(\mathcal{M}(D')\) \footnote{By a slight abuse of notation, we use random variables and their induced distributions interchangeably when the meaning is clear from context.}, as detailed in Section~\ref{sec:analysis}.
We can express the distribution of $\mathcal{M}(D)$ as
$\sum_{B} P(B) \left( \frac{\sum_{i \in B} \overline{g}_i + G}{N \cdot q} \right)$, where \(G \sim \mathcal{N}(0,\sigma^{2}C^{2} \mathbb{I})\) is \(n\)-dimensional isotropic Gaussian noise,
$P(B) = q^{|B|}(1-q)^{N - |B|}$ denotes the probability of sampling batch $B$,
and $\overline{g}_i$ denotes the clipped gradient information computed from the data point $d_i$.
Similarly, the distribution of $\mathcal{M}(D')$ can be written as
$ (1-q) \sum_{B} P(B) \left( \frac{\sum_{i \in B} \overline{g}_i + G}{(N+1)\cdot q} \right)
+ q \sum_{B} P(B) \left( \frac{\sum_{i \in B} \overline{g}_i + \overline{g}_{N+1} + G}{(N+1)\cdot q} \right)$.

\label{sec:Privacy_high_EASGM}

Directly computing the trade-off function is computationally infeasible, as discussed in Section~\ref{sec:analysis}. Therefore, leveraging Lemma~\ref{Lemma:Jensen}, we establish a lower bound and obtain the following result (Eq. \ref{eq:lower_bound_EASGM_high_dimentsion}), where
\(f_{B}^{1} = T\!\left(\tfrac{\sum_{i\in B}\overline{g}_{i} + G}{Nq}, \tfrac{\sum_{i\in B}\overline{g}_{i} + G}{(N+1)q}\right)\)
and
\(f_{B}^{2} = T\!\left(\tfrac{\sum_{i\in B}\overline{g}_{i} + G}{Nq}, \tfrac{\sum_{i\in B}\overline{g}_{i} + \overline{g}_{N+1} + G}{(N+1)q}\right)\).

\begin{equation}\label{eq:lower_bound_EASGM_high_dimentsion}
    \begin{aligned}
        &T(\mathcal{M}(D),\mathcal{M}(D'))(\alpha(t,c))
\;\ge\;
f(\alpha(t,c))\\&=
\sum_{B} P(B)((1-q)f_{B}^{1}(\alpha_{B}^{1}(t,c))
          +q\,f_{B}^{2}(\alpha_{B}^{2}(t,c))),
    \end{aligned}
\end{equation}

To simplify the privacy analysis, we apply distinct reversible (bijective) transformations to the distributions underlying \(f_{B}^{1}\) and \(f_{B}^{2}\); as these preserve the trade-off function \cite{dong2019gaussian}, we define them as follows: $\mathrm{Proc}^{B}_{j}: x \mapsto \mathrm{Rot}^{B}_{j}\!\left[\left(x - \frac{\sum_{i \in B} \overline{g}_i}{Nq}\right)\frac{Nq}{C\sigma}\right], \quad j \in \{1,2\}$.
Here, $\mathrm{Rot}^{B}_{1}$ is the rotation operation that aligns the direction vector of the last dimension with the vector  $\frac{\sum_{i\in B}\overline{g}_{i}}{(N+1) \cdot q}-\frac{\sum_{i\in B}\overline{g}_{i}}{N \cdot q}$, and $\mathrm{Rot}^{B}_{2}$ is the rotation operation that aligns the direction vector of the last dimension with the vector  $\frac{\overline{g}_{N+1} + \sum_{i\in B}\overline{g}_{i}}{(N+1) \cdot q} - \frac{\sum_{i\in B}\overline{g}_{i}}{N \cdot q}$. By applying \(\mathrm{Proc}^{B}_{1}\) to the distributions underlying \(f_{B}^{1}\), we obtain 
\begin{equation}\label{eq:Q_1_B}
    T(P^{\otimes n}, Q_{N}^{\otimes (n-1)} \otimes Q_{N}^{\mu^{1}_{B}}).
\end{equation}
Similarly, by applying $\mathrm{Proc}^{B}_{2}$ to the distributions that constitute \(f_{B}^{2}\), \(f_{B}^{2}\) can be simplified as 
\begin{equation}\label{eq:Q_2_B}
    T(P^{\otimes n}, Q_{N}^{\otimes (n-1)} \otimes Q_{N}^{\mu_{B}^{2}}),
\end{equation}
\noindent
where $P$, $Q_N$, $Q_N^{\mu_{B}^{1}}$ and $Q_N^{\mu_{B}^{2}}$ are defined in Table~\ref{tab:notation}, and 
$\mu_{B}^{1}=\|-\sum_{i\in B}\bar g_i\|_2/((N+1)C\sigma)$, 
$\mu_{B}^{2}=\|N\bar g_{N+1}-\sum_{i\in B}\bar g_i\|_2/((N+1)C\sigma)$.

By establishing the lower bounds of \(f_{B}^{1}\) and \(f_{B}^{2}\), 
we derive the privacy guarantee for neighboring datasets of sizes \(N\) and \(N+1\). 
The following theorem establishes the privacy guarantee.

\begin{theorem}\label{theorem:EASGM_HIGH}
For each neighboring pair \(D = \{d_{1}, \ldots, d_{N}\}\) and \(D' = D \cup \{d_{N+1}\}\), 
the EASGM mechanism admits the following privacy guarantee:

$\underline{f}(\alpha(t,c))=
  \sum_{B} P(B)\,[
      (1-q)\,\underline{f_{B}^{1}}(\alpha_{B}^{1}(t,c))+q\underline{f_{B}^{2}}(\alpha_{B}^{2}(t,c))
    ]$. 
    
\noindent Here, 
\(\underline{f_{B}^{1}} = T(P^{\otimes n},\, Q_{N}^{\otimes (n-1)} \otimes Q_{N}^{\overline{\mu}_{B}^{1}})\) and 
\(\underline{f_{B}^{2}} = T(P^{\otimes n},\, Q_{N}^{\otimes (n-1)} \otimes Q_{N}^{\overline{\mu}_{B}^{2}})\), 
where \(P\), \(Q_{N}\), and \(Q_{N}^{\mu}\) are defined in Table~\ref{tab:notation}, 
and 
$\overline{\mu}_{B}^{1}=\tfrac{|B|}{(N+1)\sigma}$, 
$\overline{\mu}_{B}^{2}=\tfrac{N+|B|}{(N+1)\sigma}$. The proof is in Appendix~\ref{Appendix:Proof_EASGM_HIGH}.
\end{theorem}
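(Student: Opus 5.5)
We start from the mixture lower bound in Eq.~\eqref{eq:lower_bound_EASGM_high_dimentsion}, which follows from Lemma~\ref{Lemma:Jensen} applied with mixture index $(B,J)$. Here $J\in\{1,2\}$ indicates whether $d_{N+1}$ was sampled, and the weights are $P(B)(1-q)$ and $P(B)q$. By Lemma~\ref{Lemma:min_equal}, this lower bound equals $T((P_I\mid I,I),(Q_I\mid I,I))$, written as a minimum over allocations of the type I error among the components. By Eqs.~\eqref{eq:Q_1_B} and~\eqref{eq:Q_2_B}, each component $f_B^j$ has the form
\[
T\bigl(P^{\otimes n},\,Q_N^{\otimes(n-1)}\otimes Q_N^{\mu_B^j}\bigr).
\]
The plan is to replace each component by a pointwise smaller one, using the worst case over the gradients. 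We then apply Lemma~\ref{Lemma:Order_in_trade-off} once for each index $(B,j)$, replacing the components one at a time. Each application yields a lower bound that is again of the form $T(\overline P_I\mid I,\overline Q_I\mid I)$, and by Lemma~\ref{Lemma:min_equal} this equals the stated sum $\underline f(\alpha(t,c))$, with the likelihood-ratio allocations now taken for the new components.

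\textbf{Bounding the means.} Clipping gives $\|\bar g_i\|\le C$. The triangle inequality then gives
\[
\mu_B^1=\frac{\|\sum_{i\in B}\bar g_i\|}{(N+1)C\sigma}\le\frac{|B|C}{(N+1)C\sigma}=\overline\mu_B^1,
\qquad
\mu_B^2\le\frac{NC+|B|C}{(N+1)C\sigma}=\overline\mu_B^2 .
\]

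\textbf{Monotonicity of each component in the mean.} We need to show that enlarging the mean in the last coordinate can only lower the trade-off function. By Property~\ref{property:expression_composition}, we split
\[
T\bigl(P^{\otimes n},Q_N^{\otimes(n-1)}\otimes Q_N^{\mu}\bigr)=T\bigl(P^{\otimes(n-1)},Q_N^{\otimes(n-1)}\bigr)\otimes T\bigl(P,Q_N^{\mu}\bigr).
\]
Here $Q_N^{\mu}=\mathcal N(\mu,(N/(N+1))^2)$ has standard deviation $\sigma'=N/(N+1)\in(0,1)$. Theorem~\ref{thm:bound} then gives $T(P,Q_N^{\overline\mu})\le T(P,Q_N^{\mu})$ whenever $\overline\mu\ge\mu\ge0$. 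Property~\ref{property:order} lifts this inequality through the tensor product with the fixed factor $T(P^{\otimes(n-1)},Q_N^{\otimes(n-1)})$. This yields $\underline{f_B^j}\le f_B^j$ pointwise.

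\textbf{Main obstacle.} The delicate point is the final assembly. The bound in Eq.~\eqref{eq:lower_bound_EASGM_high_dimentsion} is parametrized by $\alpha(t,c)$, which is built from the likelihood-ratio allocations of the \emph{original} components. Once the components are swapped, these allocations change. For this reason the comparison is made through the min-over-$E_\alpha$ characterization in Lemma~\ref{Lemma:min_equal}, rather than by comparing the sums termwise at the original allocations. Concretely, for every fixed $\alpha$ the minimum over $E_\alpha$ is monotone in the components. The minimum for the new components is attained by their own likelihood-ratio allocations $\alpha_B^j(t,c)$, so it is exactly the expression $\underline f(\alpha(t,c))$ in the statement. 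This is precisely the content of Lemma~\ref{Lemma:Order_in_trade-off}, applied iteratively over all $(B,j)$. The resulting bound depends only on $N$, $q$, $\sigma$, $n$ and $|B|$, not on the data, so it is a valid guarantee for every neighboring pair of these sizes.
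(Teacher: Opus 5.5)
Your proposal is correct and follows the same route as the paper. You bound $\mu_B^j\le\overline\mu_B^j$ via clipping and the triangle inequality, then use Theorem~\ref{thm:bound} (with standard deviation $N/(N+1)<1$) together with Properties~\ref{property:expression_composition} and~\ref{property:order} to get $f_B^j\ge\underline{f_B^j}$, and finally invoke Lemma~\ref{Lemma:Order_in_trade-off} to conclude $f\ge\underline f$. Your explicit iteration over each index $(B,j)$, and your routing of the comparison through Lemma~\ref{Lemma:min_equal} because the allocations $\alpha_B^j(t,c)$ change, spell out what the paper's single appeal to Lemma~\ref{Lemma:Order_in_trade-off} leaves implicit.
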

\subsection{Privacy Guarantee of ASGM}\label{sec:ASGM-SGM}
 
Following the notation in Section~\ref{sec:Privacy_high_EASGM}, let \(\mathcal{M}\) denote the ASGM mechanism (Definition~\ref{def:ASGM}).
We can write the distribution of \(\mathcal{M}(D)\) as 
$\sum_{B\neq \emptyset}P(B)(\frac{\sum_{i\in B}\overline{g}_{i}+G}{|B|})+P(B= \emptyset)G$.
Similarly, the distribution of \(\mathcal{M}(D')\) can be written as
\(\sum_{ B \neq \emptyset
}P(B)[(1-q)(\frac{\sum_{i\in B}\overline{g}_{i}+G}{|B|})
+(q)(\frac{\sum_{i\in B}\overline{g}_{i}+\overline{g}_{N+1}+G}{|B|+1})]+P(B=\emptyset)[G\cdot (1-q)+q(\overline{g}_{N+1}+G)]\). Using the same analytical approach as in Section~\ref{sec:guarantee_EASGM}, we establish a lower bound on the trade-off function as follows.

\begin{equation}
\begin{aligned}\label{eq:Low_ASGM}
    &T\bigl(\mathcal{M}(D), \mathcal{M}(D')\bigr)\!\bigl(\alpha(t,c)\bigr)
\;\ge\;
f(\alpha(t,c))\\&=
\sum_{B\neq \emptyset } P(B)((1-q)f_{B}^{1}(\alpha_{B}^{1}(t,c))
          +q\,f_{B}^{2}(\alpha_{B}^{2}(t,c)))\\&+P(B=\emptyset)((1-q)f_{\emptyset}^{1}(\alpha_{\emptyset}^{1}(t,c))+q f_{\emptyset}^{2}(\alpha_{\emptyset}^{2}(t,c))).
\end{aligned}
\end{equation}
\sloppy Here, $f_{B}^{1}$ $=T(\frac{\sum_{i\in B} \overline{g}_{i}+G}{|B|},$ $ \frac{\sum_{i\in B} \overline{g}_{i}+G}{|B|})$ and 
$f_{B}^{2}=T(\frac{\sum_{i\in B} \overline{g}_{i}+G}{|B|},\frac{\sum_{i\in B} \overline{g}_{i}+\overline{g}_{N+1}+G}{|B|+1})$, $f_{\emptyset}^{1}=T(G,G)$,$f_{\emptyset}^{2}=T(G,G+\overline{g}_{N+1})$. For $B\neq \emptyset$ the expressions of \(f_B^1\) and \(f_B^2\) can be simplified 
using an invertible transformation defined as follows:
\begin{equation}\label{eq:invertible_projection_high_dimension_ASGM}
   \mathrm{Proc}^{B}: x \mapsto \mathrm{Rot}^{B}\!\left[\left(x - \frac{\sum_{i \in B} \overline{g}_i}{|B|}\right)\frac{|B|}{C\sigma}\right].
\end{equation}

Here, $\mathrm{Rot}^{B}$ is the rotation that aligns the direction vector of the last dimension with that of 
$\frac{\overline{g}_{N+1}+\sum_{i\in B}\overline{g}_{i}}{|B|+1}-\frac{\sum_{i\in B}\overline{g}_{i}}{|B|}$. 
Then, $f_{B}^{1}$ can be simplified as 
\begin{equation}\label{eq:ASGM_F_1}
    T(P^{\otimes n}, P^{\otimes n})
\end{equation}, and 
$f_{B}^{2}$ can be simplified as 
\begin{equation}\label{eq:ASGM_F_2}
    T(P^{\otimes n}, Q_{B}^{\otimes (n-1)} \otimes Q_{B}^{\mu_{B}})
\end{equation}, 
where $P$, $Q_B$ and $Q_B^{\mu_{B}}$ are defined in Table~\ref{tab:notation}, and 
$\mu_{B}=\|\,|B|\overline g_{N+1}-\sum_{i\in B}\overline g_i\|_2/((|B|+1)C\sigma)$.

By establishing lower bounds for \(f_B^1\), \(f_B^2\), \(f_\emptyset^1\), and \(f_\emptyset^2\), we derive the privacy guarantee for neighboring datasets of sizes \(N\) and \(N+1\).
The following theorem presents the established guarantee.

\begin{theorem}
\label{thm:ASGM_HIGH}
For every neighboring pair of datasets $D = \{d_{1}, d_{2}, \ldots, d_{N}\}$ and $D' = D \cup \{d_{N+1}\}$, 
the ASGM mechanism admits the following privacy guarantee:
\begin{equation}
\begin{aligned}
   & \underline{f}(\alpha(t,c)) 
    = \sum_{B\neq \emptyset} P(B) \bigl[(1-q)\underline{f_{B}^{1}}(\alpha_{B}^{1}(t,c)) 
    + q\,\underline{f_{B}^{2}}(\alpha^{2}_{B}(t,c))\bigr]\\&+(1-q)^{N}[(1-q)\underline{f}^{1}_{\emptyset}(\alpha^{1}_{\emptyset}(t,c))+q\underline{f}^{2}_{\emptyset}(\alpha^{2}_{\emptyset}(t,c))].
\end{aligned}
\end{equation}
Here, $\underline{f_{B}^{1}} = \mathrm{ID}$ and 
$\underline{f_{B}^{2}} = T(P^{\otimes n}, Q_{B}^{\otimes (n-1)} \otimes Q^{\overline{\mu}_{B}}_{B})$, 
$\underline{f}_{\emptyset}^{1}=\mathrm{ID}$, $\underline{f}_{\emptyset}^{2}=G_{1/\sigma}$,
where \(\mathrm{ID}\), \(G_{1/\sigma}\), \(P\), \(Q_B\), and \(Q_B^{\overline{\mu}_B}\) are defined in Table~\ref{tab:notation}, with
$\overline{\mu}_{B}=\frac{2|B|}{(|B|+1)\sigma}$. The proof is in Appendix~\ref{Appendix:Thm_ASGM_HIGH}.
\end{theorem}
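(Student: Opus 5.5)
We prove Theorem~\ref{thm:ASGM_HIGH} by following the same route as the EASGM case. We start from the mixture lower bound in Eq.~\ref{eq:Low_ASGM}, which is obtained from Lemma~\ref{Lemma:Jensen}. We then replace each component trade-off function by a pointwise smaller one that does not depend on the particular gradients. Finally, Lemma~\ref{Lemma:Order_in_trade-off} lets us propagate these componentwise bounds to the mixture with revealed index.

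\textbf{Step 1: rewrite the components.} Eq.~\ref{eq:Low_ASGM} holds in the form of Lemma~\ref{Lemma:min_equal}. That is, $f(\alpha)$ is the minimum of $\sum w_i T(P_i,Q_i)(\alpha_i)$ over allocations with $\sum w_i\alpha_i=\alpha$, where the weights are $P(B)(1-q)$ and $P(B)q$, and $P(\emptyset)=(1-q)^N$. We bound each component as follows.
\begin{itemize}
\item For $B\neq\emptyset$, $f_B^1$ compares identical distributions, so $f_B^1=\mathrm{ID}=\underline{f_B^1}$.
\item For $B=\emptyset$, $f_\emptyset^1=T(G,G)=\mathrm{ID}$.
\item Also for $B=\emptyset$, project onto the direction of $\overline g_{N+1}$ and rescale by $1/(C\sigma)$. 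This gives $f_\emptyset^2=G_{\|\overline g_{N+1}\|/(C\sigma)}\ge G_{1/\sigma}$, because $\|\overline g_{N+1}\|\le C$ and $G_\mu$ is decreasing in $\mu$.
\end{itemize}

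\textbf{Step 2: bound $f_B^2$ for nonempty $B$.} After applying $\mathrm{Proc}^B$ we are in the form of Eq.~\ref{eq:ASGM_F_2}. By Property~\ref{property:expression_composition},
\[
f_B^2=T(P^{\otimes(n-1)},Q_B^{\otimes(n-1)})\otimes T(P,Q_B^{\mu_B}).
\]
Clipping and the triangle inequality give $\||B|\overline g_{N+1}-\sum_{i\in B}\overline g_i\|\le |B|C+|B|C=2|B|C$. Hence $\mu_B\le \overline\mu_B=2|B|/((|B|+1)\sigma)$. Theorem~\ref{thm:bound} applies with variance $(|B|/(|B|+1))^2<1$ and gives $T(P,Q_B^{\mu_B})\ge T(P,Q_B^{\overline\mu_B})$. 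Property~\ref{property:order} then lifts this inequality through the tensor product, yielding $f_B^2\ge\underline{f_B^2}$.

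\textbf{Step 3: assemble the bound.} We apply Lemma~\ref{Lemma:Order_in_trade-off} once for each index $y$, replacing one component at a time. After all replacements, the mixture-with-index trade-off built from the $\underline{f}$'s lower-bounds $f$, and hence lower-bounds $T(\mathcal M(D),\mathcal M(D'))$. By Lemma~\ref{Lemma:Jensen}, this lower bound is exactly the expression $\underline f(\alpha(t,c))$ written with the likelihood-ratio parametrization $\alpha^{\cdot}_{\cdot}(t,c)$ of the new components. The result holds for every choice of $\overline g_i$, so it is a valid guarantee for all such pairs.

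\textbf{Main obstacle.} The delicate point is Step 2's use of Theorem~\ref{thm:bound}, which requires the variance to be strictly below $1$ and $\mu_1\ge\mu_2\ge0$. Both conditions hold here because $|B|/(|B|+1)<1$. We also need the rotation to put the entire mean shift into the last coordinate, so that the remaining $n-1$ coordinates are the mean-zero pair $(P,Q_B)$, independent of the gradients. Checking this alignment, together with the empty-batch convention that the normalizer equals $1$, is the main thing to verify carefully.
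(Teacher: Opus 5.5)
Your proposal is correct and follows essentially the same route as the paper's proof. It uses the same componentwise bounds ($f_B^1=f_\emptyset^1=\mathrm{ID}$, $f_\emptyset^2\ge G_{1/\sigma}$ from $\|\overline{g}_{N+1}\|_2\le C$, and $f_B^2\ge\underline{f_B^2}$ from $\mu_B\le\overline{\mu}_B$ with Theorem~\ref{thm:bound} and Property~\ref{property:order}), then Lemma~\ref{Lemma:Order_in_trade-off} to pass to the revealed-index mixture, where your one-component-at-a-time application just spells out what the paper does in a single invocation.
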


\subsection{Comparison with SGM}\label{sec:comparison}

 
 In this section, we compare the privacy guarantees of EASGM and ASGM with the SGM guarantee derived in~\cite{bu2020deep}, as it provides a closed-form \(f\)\nobreakdash-DP baseline for comparison. Our comparison follows the accounting convention: EASGM- and ASGM-based implementations are typically analyzed using the SGM guarantee~\cite{abadi2016deep,bu2020deep}.

Before making comparisons, we first clarify the distinction between privacy leakage and privacy guarantees, so as to distinguish potential violations from actual ones.

The privacy leakage for a neighboring pair \(D\) and \(D'\) under a mechanism \(\mathcal{M}\) is represented by the trade-off function \(T(\mathcal{M}(D), \mathcal{M}(D'))\) (see Section~\ref{sec:differential privacy}). In contrast, a privacy guarantee \(f\) for the pair is a lower bound on \(T(\mathcal{M}(D), \mathcal{M}(D'))\), that is,
\(T(\mathcal{M}(D), \mathcal{M}(D')) \ge f\).
Due to approximations in the analysis, equality does not generally hold.

A violation means that the privacy leakage of EASGM or ASGM exceeds the claimed SGM guarantee \(g\), namely, there exists some \(\alpha\) such that
\(T(\mathcal{M}(D), \mathcal{M}(D'))(\alpha) < g(\alpha)\),
where \(g\) denotes the privacy guarantee claimed for SGM.

Comparing \(f\) with \(g\) can only indicate a potential violation.
If \(f(\alpha) < g(\alpha)\) for some \(\alpha\), then it is possible that
\(T(\mathcal{M}(D), \mathcal{M}(D'))(\alpha) < g(\alpha)\).
A potential violation becomes an actual violation only when the guarantee \(f\) is tight, that is, when
\(T(\mathcal{M}(D), \mathcal{M}(D')) = f\).

\subsubsection{EASGM vs.~SGM}\label{sec:EASGM-SGM}
We now compare the privacy guarantees achieved by EASGM and SGM under identical parameter settings. The per-pair privacy guarantee $f$ of EASGM, defined in Eq.~\ref{eq:lower_bound_EASGM_high_dimentsion}, does not admit a simple closed-form expression. To enable comparison, we therefore derive an upper bound $\overline{f}$ on $f$ that has a tractable analytic form. For a fixed $\alpha$, if the upper bound $\overline{f}(\alpha)$ lies below the SGM guarantee $g(\alpha)$, then the derived guarantee for EASGM can be weaker than the claimed SGM guarantee, indicating a regime where a potential privacy violation may occur.

\begin{theorem}\label{thm:upperbound_EASGM} 
Let \(f\) be the trade-off function defined in Eq.~\ref{eq:lower_bound_EASGM_high_dimentsion}, and let \(P\) and \(Q_N\) be the distributions defined in Table~\ref{tab:notation}. Then,
\begin{equation}
    \overline{f}=T(P^{\otimes (n-1)},Q_{N}^{\otimes (n-1)})\ge f .
\end{equation}
The proof is in Appendix~\ref{Appendix:upperbound_EASGM}.
\end{theorem}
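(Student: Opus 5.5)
The plan is to bound each mixture component from above and then pass the bound through the mixture. Lemma~\ref{Lemma:min_equal} is not needed for this; Lemma~\ref{Lemma:Upper_bound} does the work.

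First, recall that $f$ in Eq.~\ref{eq:lower_bound_EASGM_high_dimentsion} is exactly $T((P_I\mid I,I),(Q_I\mid I,I))$ for the mixture indexed by $I=(B,\text{included or not})$. The weights are $P(B)(1-q)$ and $P(B)q$, and the component trade-off functions are $f_B^1$ and $f_B^2$. By Lemma~\ref{Lemma:Upper_bound}, for every $\alpha\in[0,1]$,
\[
f(\alpha)\le \sum_B P(B)\bigl[(1-q)f_B^1(\alpha)+q f_B^2(\alpha)\bigr].
\]
It therefore suffices to show $f_B^j\le \overline f=T(P^{\otimes(n-1)},Q_N^{\otimes(n-1)})$ pointwise for every $B$ and $j\in\{1,2\}$. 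The weights sum to one, so the bound then follows at once.

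Second, by Eqs.~\ref{eq:Q_1_B} and~\ref{eq:Q_2_B}, each $f_B^j$ equals $T(P^{\otimes n},Q_N^{\otimes(n-1)}\otimes Q_N^{\mu_B^j})$. Property~\ref{property:expression_composition} rewrites this as
\[
T(P^{\otimes(n-1)},Q_N^{\otimes(n-1)})\otimes T(P,Q_N^{\mu_B^j}).
\]
Property~\ref{property:useful} is stated with the base pair $(P,Q)$ tensored $n$ times. I will apply it with the pair $(P^{\otimes(n-1)},Q_N^{\otimes(n-1)})$ as the base, or equivalently invoke $\mathrm{ID}\ge T(P,Q_N^{\mu_B^j})$ together with Property~\ref{property:order} and $h\otimes\mathrm{ID}=h$. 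This gives
\[
f_B^j\le T(P^{\otimes(n-1)},Q_N^{\otimes(n-1)})\otimes \mathrm{ID}=\overline f .
\]

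The only delicate point is the first step: making sure the form of Lemma~\ref{Lemma:Upper_bound} applies to $f$ as parametrized through $\alpha(t,c)$. Here $f$ is defined by the parametric identity of Lemma~\ref{Lemma:Jensen}, which Lemma~\ref{Lemma:min_equal} identifies with $T((P_I\mid I,I),(Q_I\mid I,I))$ evaluated at each $\alpha$. With that identification made explicit, Lemma~\ref{Lemma:Upper_bound} applies directly. The intuition is that taking $\alpha_i=\alpha$ for all $i$ is a feasible point in $E_\alpha$.

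The edge case $n=1$ needs a separate check. There $P^{\otimes 0}$ is degenerate, so $\overline f=\mathrm{ID}$, and the bound $f\le\mathrm{ID}$ holds trivially.
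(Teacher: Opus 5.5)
Your proof is correct, and it takes a slightly shorter route than the paper's.

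The paper proceeds in three steps:
\begin{itemize}
\item It first bounds each component, $f_B^j\le\overline{f}$, via Property~\ref{property:useful}.
\item It then replaces every $f_B^j$ by $\overline{f}$ inside the mixture to form an intermediate conditional trade-off function $f'$, and invokes Lemma~\ref{Lemma:Order_in_trade-off} to get $f\le f'$.
\item Only then does it apply Lemma~\ref{Lemma:Upper_bound} to $f'$, giving $f'\le\overline{f}$.
\end{itemize}

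You apply Lemma~\ref{Lemma:Upper_bound} to $f$ itself first. The comparison $f_B^j\le\overline{f}$ then only has to pass through the explicit weighted sum $\sum_B P(B)[(1-q)f_B^1(\alpha)+q f_B^2(\alpha)]$, where monotonicity in each summand is trivial. This removes the need for Lemma~\ref{Lemma:Order_in_trade-off}, and for building surrogate component distributions that realize $\overline{f}$ inside the mixture.

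The paper's detour gains nothing for this upper bound. The monotonicity lemma is genuinely needed only in the lower-bound direction, e.g.\ Theorem~\ref{theorem:EASGM_HIGH}. There the one-sided inequality of Lemma~\ref{Lemma:Upper_bound} does not help.

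One small correction. The identification $f=T((P_I\mid I,I),(Q_I\mid I,I))$ on all of $[0,1]$ follows from two facts: the equality stated in Lemma~\ref{Lemma:Jensen}, and the fact that $\alpha(t,c)$ sweeps $[0,1]$ as $(t,c)$ varies. So, as your opening sentence says, Lemma~\ref{Lemma:min_equal} is not needed even for that step. You should not invoke it later.
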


We now show that, as the output dimension $n$ increases, the EASGM privacy guarantee collapses for all neighboring pairs.

\begin{theorem}\label{theorem:EASGM_flaws}
\textbf{Privacy flaw of EASGM.}
Let $\overline{f}$ be the upper bound function defined in Theorem~\ref{thm:upperbound_EASGM}. Then
\begin{equation}
    \lim_{n\to\infty} \overline{f}(\alpha)=0, \quad \forall \alpha\in(0,1].
\end{equation}
The proof is in Appendix~\ref{Appendix:EASGM_flaws}.
\end{theorem}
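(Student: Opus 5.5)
The plan is to reduce the statement directly to Lemma~\ref{Lemma:zero}. By Theorem~\ref{thm:upperbound_EASGM}, $\overline{f}=T(P^{\otimes(n-1)},Q_N^{\otimes(n-1)})$. By the tensoring convention stated before Definition~\ref{def:identical_trade_off}, this equals $h^{\otimes(n-1)}$ with $h:=T(P,Q_N)=T\bigl(\mathcal{N}(0,1),\mathcal{N}(0,(N/(N+1))^2)\bigr)$. The key point is that $h$ does not depend on $n$: the output dimension enters only through the number of tensor factors. Therefore $\lim_{n\to\infty}\overline{f}(\alpha)=\lim_{k\to\infty}h^{\otimes k}(\alpha)$, and Lemma~\ref{Lemma:zero} gives that this limit is $0$ for every $\alpha\in(0,1]$, provided $h\neq\mathrm{ID}$.

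So the only substantive step is to show $h\neq\mathrm{ID}$. Since $N\ge1$, the ratio $s:=N/(N+1)$ lies in $(0,1)$, so $P$ and $Q_N$ are distinct centered Gaussians. To exhibit a point where $h(\alpha)<1-\alpha$, I would use the two-sided test that rejects when $|x|\le a$. Its type I error is $\alpha=2\Phi(a)-1$. Its type II error is $\beta=1-(2\Phi(a/s)-1)$, and since $a/s>a$ for $a>0$ and $\Phi$ is strictly increasing, $\beta<1-\alpha$. Hence $h(\alpha)\le\beta<1-\alpha=\mathrm{ID}(\alpha)$ for this $\alpha\in(0,1)$. An alternative is to note that $T(P,Q)=\mathrm{ID}$ forces the total variation distance to vanish, i.e.\ $P=Q$, which is false here.

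A minor check is that Lemma~\ref{Lemma:zero} is stated for $f^{\otimes n}$ with $n\to\infty$, while here the exponent is $n-1$; the shift of index is harmless. The degenerate case $n=1$, where $\overline{f}=\mathrm{ID}$ by the convention $f^{\otimes 0}=\mathrm{ID}$, is irrelevant to the limit. I do not expect a real obstacle. The point needing the most care is confirming that $h$ is a genuine trade-off function independent of $n$, so that Lemma~\ref{Lemma:zero} applies verbatim; it is also worth noting that the resulting rate is exponential, with a base depending only on $N$ and not on the data. Finally, because the bound $\overline{f}$ holds uniformly over all neighboring pairs with $|D|=N$, the collapse holds for all such pairs, as the surrounding text claims.
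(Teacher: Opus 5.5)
Your proposal is correct and follows the paper's proof essentially verbatim. Both write $\overline{f}=T(P,Q_N)^{\otimes(n-1)}$ via Property~\ref{property:expression_composition} and apply Lemma~\ref{Lemma:zero} to the $n$-independent function $T(P,Q_N)\neq\mathrm{ID}$. The only difference is that you justify $T(P,Q_N)\neq\mathrm{ID}$ with the explicit test rejecting on $\{|x|\le a\}$, whereas the paper merely asserts it.
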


The privacy guarantee of SGM is given by \( q\, G_{1/\sigma} + (1 - q)\,\mathrm{ID} \)~\cite{bu2020deep},
where \(G_{1/\sigma}\) and \(\mathrm{ID}\) are the trade-off functions defined in
Definitions~\ref{def:Gaussian_DP} and~\ref{def:identical_trade_off}, respectively. We observe that, for every \(\alpha \in (0,1)\), the per-pair upper bound of EASGM's privacy guarantee satisfies
\(\overline{f}(\alpha) < q G_{1/\sigma}(\alpha) + (1 - q)\mathrm{ID}(\alpha)\)
when \(n\) is sufficiently large.
Consequently, the per-pair privacy guarantee of EASGM is weaker than the SGM guarantee
\(q G_{1/\sigma} + (1 - q)\,\mathrm{ID}\).

The per-pair privacy guarantee $f$ in Eq.~\ref{eq:lower_bound_EASGM_high_dimentsion} becomes tight when the sampling ratio \(q = 1\), indicating that for each neighboring pair \((D, D')\), \(T(\mathcal{M}(D), \mathcal{M}(D')) = f\). 
This is because the privacy guarantee \(f\) in Eq.~\ref{eq:lower_bound_EASGM_high_dimentsion} is derived from the approximation introduced in Lemma~\ref{Lemma:Jensen}. 
When \(q = 1\), there is only one distribution in the mixtures \(P_w\) and \(Q_w\); thus, 
\(T(P_w, Q_w) = T((P_I\,|\,I), (Q_I\,|\,I))\). 
In this case, the approximation becomes exact, and there is no overestimation of the per-pair privacy leakage.

As shown above, when \(n\) is large, the upper bound already falls below the claimed SGM guarantee, indicating that a potential privacy violation can occur in EASGM. Moreover, when \(q=1\), the privacy guarantee becomes tight, so this potential violation turns into an actual privacy violation.

\subsubsection{ASGM vs.~SGM}
\label{sec:ASGM_SGM}

Following the same comparison procedure as in Section~\ref{sec:EASGM-SGM},
we first derive an upper bound $\overline{f}$ on the per-pair ASGM privacy
guarantee $f$, defined in Eq.~\ref{eq:Low_ASGM}.

\begin{theorem}\label{thm:upperbound_ASGM} 
Let $f$ be the trade-off function defined in Eq.~\ref{eq:Low_ASGM}, and let the distributions $P$ and $Q_N$, as well as the trade-off function $\mathrm{ID}$, be as defined in Table~\ref{tab:notation}. Then
\begin{equation}
\begin{aligned}\label{eq:ASGM_upper_bound}
   &\overline{f} = (q-q(1-q)^{N})T(P^{\otimes (n-1)}, Q_{N}^{\otimes (n-1)}) \\
   &+ (1 - q+q(1-q)^{N})\mathrm{ID} \ge f.
\end{aligned}
\end{equation}
The proof is in Appendix~\ref{Appendix:upperbound_ASGM}.
\end{theorem}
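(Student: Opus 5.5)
We prove the bound in three steps. First, Lemma~\ref{Lemma:Upper_bound} removes the coupling of the $\alpha$'s in the mixture. Second, each component trade-off function is bounded above separately. Third, the pieces are recombined using the mixture weights.

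\textbf{Step 1: decouple.} The function $f$ in Eq.~\ref{eq:Low_ASGM} is exactly $T((P_I\mid I,I),(Q_I\mid I,I))$ for the mixture indexed by $(B,\text{inclusion of } d_{N+1})$. By Lemma~\ref{Lemma:Upper_bound}, for every $\alpha$,
\[
f(\alpha)\le \sum_{B\neq\emptyset}P(B)\bigl[(1-q)f_B^1(\alpha)+q f_B^2(\alpha)\bigr]+(1-q)^N\bigl[(1-q)f_\emptyset^1(\alpha)+q f_\emptyset^2(\alpha)\bigr].
\]

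\textbf{Step 2: bound each component.}
\begin{itemize}
\item By Eq.~\ref{eq:ASGM_F_1}, $f_B^1=T(P^{\otimes n},P^{\otimes n})=\mathrm{ID}$.
\item Likewise, $f_\emptyset^1=T(G,G)=\mathrm{ID}$.
\item For $f_\emptyset^2$ we use only the trivial bound $f_\emptyset^2\le \mathrm{ID}$ from Definition~\ref{def:identical_trade_off}.
\item For $f_B^2$ with $B\ne\emptyset$, Eq.~\ref{eq:ASGM_F_2} together with Property~\ref{property:expression_composition} gives
\[
f_B^2=T\bigl(P^{\otimes(n-1)},Q_B^{\otimes(n-1)}\bigr)\otimes T\bigl(P,Q_B^{\mu_B}\bigr).
\]
By Property~\ref{property:useful}, this is at most $T(P^{\otimes(n-1)},Q_{|B|}^{\otimes(n-1)})$.
\end{itemize}

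\textbf{Step 3: compare $Q_{|B|}$ with $Q_N$.} Since $|B|\le N$, the standard deviation satisfies $|B|/(|B|+1)\le N/(N+1)\le 1$. Theorem~\ref{thm:bound_fix_variance} then gives the one-dimensional inequality
\[
T(P,Q_{|B|})\le T(P,Q_N).
\]
We must lift this to the $(n-1)$-fold tensor power. We do so by applying Property~\ref{property:order} repeatedly, replacing one factor at a time, which gives
\[
T(P^{\otimes(n-1)},Q_{|B|}^{\otimes(n-1)})\le T(P^{\otimes(n-1)},Q_N^{\otimes(n-1)}).
\]

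\textbf{Step 4: recombine.} Summing over nonempty $B$, we have $\sum_{B\ne\emptyset}P(B)=1-(1-q)^N$. The coefficient of the $T(P^{\otimes(n-1)},Q_N^{\otimes(n-1)})$ term is therefore $q\bigl(1-(1-q)^N\bigr)=q-q(1-q)^N$. All remaining weight goes to $\mathrm{ID}$:
\[
(1-q)\bigl(1-(1-q)^N\bigr)+(1-q)^N(1-q)+(1-q)^N q=1-q+q(1-q)^N.
\]
This is exactly Eq.~\ref{eq:ASGM_upper_bound}.

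\textbf{Expected obstacle.} The main difficulty is Step 3, the monotonicity of the tensorized trade-off function in the variance. Property~\ref{property:order} is stated for a single replaced factor, so we need an $(n-1)$-step induction, tensoring with the already-replaced factors at each step. An alternative route is to note that $\mathcal{N}(0,\sigma_1^2)$ is obtained from $\mathcal{N}(0,\sigma_2^2)$-type post-processing, so the comparison holds coordinatewise. A smaller point to check is that Lemma~\ref{Lemma:Upper_bound} applies with the countable index set formed by the pairs $(B,\text{branch})$; the set is finite, so this is immediate.
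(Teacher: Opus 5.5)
Your proof is correct. It uses exactly the paper's component bounds: $f_B^1=f_\emptyset^1=\mathrm{ID}$, $f_\emptyset^2\le\mathrm{ID}$, and $f_B^2\le T(P^{\otimes(n-1)},Q_N^{\otimes(n-1)})$ via Theorem~\ref{thm:bound_fix_variance} and Properties~\ref{property:expression_composition}--\ref{property:useful}. What differs is the order of the two mixture-level steps.

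The paper first substitutes these upper bounds for the components \emph{inside} the mixture. It invokes Lemma~\ref{Lemma:Order_in_trade-off} to obtain an intermediate trade-off function $f^\ast\ge f$, namely the transparent mixture with components $\mathrm{ID}$ and $T(P^{\otimes(n-1)},Q_N^{\otimes(n-1)})$. Only then does it apply Lemma~\ref{Lemma:Upper_bound} to $f^\ast$.

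You apply Lemma~\ref{Lemma:Upper_bound} to $f$ first. After that every component is evaluated at the same $\alpha$, so the pointwise component bounds can simply be summed with nonnegative weights. Your ordering is more elementary because it needs no monotonicity lemma for mixtures. Lemma~\ref{Lemma:Order_in_trade-off} is stated for replacing a single index $y$, so the paper's route implicitly iterates it over all pairs $(B,j)$. The paper's ordering does give the slightly sharper chain $f\le f^\ast\le\overline f$, though only the outer inequality is needed for the theorem.

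Your parenthetical post-processing alternative in Step~3 is also sound and can be made explicit. Take $\sigma_2\le\sigma_1<1$ (here $\sigma_1=N/(N+1)<1$). Add independent $\mathcal{N}(0,s^2)$ noise with $s^2=(\sigma_1^2-\sigma_2^2)/(1-\sigma_1^2)$, then rescale by $(1+s^2)^{-1/2}$. This maps $\mathcal{N}(0,1)\mapsto\mathcal{N}(0,1)$ and $\mathcal{N}(0,\sigma_2^2)\mapsto\mathcal{N}(0,\sigma_1^2)$. Applied coordinatewise, it gives the tensorized comparison directly, bypassing both Theorem~\ref{thm:bound_fix_variance} and the factor-by-factor use of Property~\ref{property:order}.
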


We now show that, as the output dimension $n$ increases, the upper bound $\overline{f}$ for the ASGM per-pair privacy guarantee converges to \((1-q+q(1-q)^N)\mathrm{ID}\).

\begin{theorem}\label{theorem:ASGM_flaws}
\textbf{Privacy flaw of ASGM.}
Let $\overline{f}$ be the upper bound defined in Theorem~\ref{thm:upperbound_ASGM}. Then we have 
\begin{equation}
    \lim_{n\to\infty} \overline{f}(\alpha)=(1-q+q(1-q)^{N})\mathrm{ID}(\alpha), \quad \forall \alpha\in(0,1].
\end{equation}
The proof is in Appendix~\ref{Appendix:ASGM_flaws}.
\end{theorem}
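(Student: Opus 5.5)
The plan is to read the limit directly off the closed form in Theorem~\ref{thm:upperbound_ASGM}. That theorem expresses $\overline{f}$ as a convex combination, with coefficients independent of $n$, of the fixed function $\mathrm{ID}$ and the tensor power $T(P^{\otimes (n-1)},Q_{N}^{\otimes (n-1)})$. So it suffices to show that this tensor power tends to $0$ pointwise on $(0,1]$. Since the coefficients $(q-q(1-q)^N)$ and $(1-q+q(1-q)^N)$ do not depend on $n$, pointwise evaluation gives
\[
\lim_{n\to\infty}\overline{f}(\alpha)=(q-q(1-q)^{N})\lim_{n\to\infty}T(P^{\otimes (n-1)},Q_{N}^{\otimes (n-1)})(\alpha)+(1-q+q(1-q)^{N})\,\mathrm{ID}(\alpha),
\]
provided the first limit exists. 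It therefore remains to show that this first limit is $0$.

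\textbf{Step 1: reduce to a fixed base trade-off function.} Set $h:=T(P,Q_N)$. By Property~\ref{property:expression_composition}, applied inductively, $T(P^{\otimes (n-1)},Q_N^{\otimes (n-1)})=h^{\otimes (n-1)}$, and as $n\to\infty$ we have $n-1\to\infty$.

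\textbf{Step 2: check that $h\neq\mathrm{ID}$.} This is the hypothesis needed for Lemma~\ref{Lemma:zero}. Here $P=\mathcal{N}(0,1)$ and $Q_N=\mathcal{N}(0,(N/(N+1))^2)$, and the variances differ because $N/(N+1)<1$ for every finite $N\ge 1$. So $P\neq Q_N$. For two distinct distributions the trade-off function is not $\mathrm{ID}$. To see this, take the likelihood-ratio test from the Neyman--Pearson lemma, or concretely the test that rejects when $|x|<r$: its type I error $\alpha$ is strictly smaller than the power $1-\beta$, so $\beta<1-\alpha$ at that $\alpha$, and hence $h(\alpha)<1-\alpha=\mathrm{ID}(\alpha)$. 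Alternatively, Theorem~\ref{thm:bound_fix_variance} with $\sigma_1=1$ and $\sigma_2=N/(N+1)$ gives $h\le\mathrm{ID}$, and strictness follows from the explicit test just described.

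\textbf{Step 3: conclude.} Lemma~\ref{Lemma:zero} gives $h^{\otimes(n-1)}(\alpha)\to0$ for every $\alpha\in(0,1]$. Substituting into the display above yields $\lim_{n\to\infty}\overline{f}(\alpha)=(1-q+q(1-q)^N)\mathrm{ID}(\alpha)$, which is the claim. The same argument underlies Theorem~\ref{theorem:EASGM_flaws}, whose upper bound is just the tensor term with coefficient $1$.

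The main obstacle is Step~2, certifying $h\neq\mathrm{ID}$ rigorously. I would argue it with the explicit threshold test on $|x|$: the two Gaussians have distinct CDFs of $|X|$, so for some $r$ the test that rejects when $|x|<r$ has power strictly exceeding its type I error. The degenerate cases $q\in\{0,1\}$ need no separate treatment, because the coefficients handle them automatically.
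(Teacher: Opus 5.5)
Your proposal is correct and follows essentially the same route as the paper's proof. Both identify the tensor term as $T(P,Q_N)^{\otimes(n-1)}$ via Property~\ref{property:expression_composition}, invoke Lemma~\ref{Lemma:zero} because $T(P,Q_N)\neq\mathrm{ID}$, and pass to the limit in the $n$-independent convex combination. Your explicit test that rejects on $\{|x|<r\}$ also justifies $T(P,Q_N)\neq\mathrm{ID}$, which the paper only asserts.
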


The privacy guarantee of SGM is given by \( q G_{1/\sigma} + (1 - q)\mathrm{ID} \)~\cite{bu2020deep}, 
where \( G_{1/\sigma} \) and \(\mathrm{ID}\) are the trade-off functions defined in Table~\ref{tab:notation}.
We observe that, for any fixed \(\alpha \in (0,1)\), when \(N\) is sufficiently large and \(q\) is sufficiently close to one,
$\lim_{n\to \infty}\overline{f}(\alpha) - \bigl[q\,G_{1/\sigma}(\alpha) + (1 - q)\mathrm{ID}(\alpha)\bigr]
= q(1-q)^{N}\mathrm{ID}(\alpha) - q\,G_{1/\sigma}(\alpha)< 0$.
Consequently, ASGM's per-pair privacy guarantee is weaker than the SGM guarantee, indicating a potential privacy violation.

When the sampling ratio is \(1\), the two mechanisms, EASGM and ASGM, become equivalent. 
According to Definitions~\ref{def:EASGM} and~\ref{def:ASGM}, their output distributions on any neighboring pair are identical when the sampling ratio equals \(1\). 
Consequently, for any neighboring pair, the privacy guarantee provided by ASGM is identical to that of EASGM.

As shown above, when the sampling ratio $q$ equals~1, the privacy guarantee of ASGM is the same as that of EASGM. 
According to Section~\ref{sec:EASGM-SGM}, the actual privacy leakage can exceed the SGM guarantee, revealing the existence of a privacy violation in ASGM.

\subsection{Privacy Upper Bound Visualization}
\label{sec:upperbound_visulization}Visualizing the privacy guarantees of EASGM and ASGM provides intuition about their privacy behavior. However, since directly visualizing the exact guarantees is computationally expensive, we instead illustrate their upper bounds~$\overline{f}$, defined in Theorems~\ref{thm:upperbound_EASGM} and~\ref{thm:upperbound_ASGM}, respectively.

\sloppy
The challenge is that the upper bound $\overline{f}$ contains a component
$f_{\ast}^{\otimes(n-1)} = T(P^{\otimes (n-1)}, Q_{N}^{\otimes (n-1)})$.
To visualize the upper bound $\overline{f}$, we therefore need to characterize
$f_{\ast}^{\otimes(n-1)}$.
Our strategy is to derive an analytic upper bound for this term. Specifically, we first derive an upper bound for $f_{\ast}$.
Let $\overline{f_{\ast}} = \max\{f_{\ast}, f_{\ast}^{-1}\}$.
According to Proposition~2.2 in~\cite{dong2019gaussian},
$\overline{f_{\ast}}$ is itself a trade-off function and satisfies
$\overline{f_{\ast}} \ge f_{\ast}$.
By Property~\ref{property:order}, this further implies that
$\overline{f_{\ast}}^{\otimes (n-1)} \ge f_{\ast}^{\otimes (n-1)}$.
Using Theorem~\ref{thm:estimation}, we then derive an analytic upper bound for
$\overline{f_{\ast}}^{\otimes (n-1)}$.
Substituting this bound for $f_{\ast}^{\otimes (n-1)}$ into the expressions of
Theorems~\ref{thm:upperbound_EASGM} and~\ref{thm:upperbound_ASGM} then yields
an analytic upper bound for $\overline{f}$ (Theorem \ref{thm:upperbound_estimate}).  We further establish upper bounds for the multi-round EASGM and ASGM training processes in Appendix \ref{sec:Multi_EASGM_ASGM_GUARANTEE}Theorem \ref{thm:EASGM_ASGM_Upperbound_T_round}, and visualize the multi-round upper bound for EASGM in Appendix \ref{sec:Multi_EASGM_ASGM_GUARANTEE} Corollary~\ref{cor:EASGM_upperbound_estimate_T_round}. 

\begin{theorem} \textbf{(Theorem 5 in \cite{dong2019gaussian})}\label{thm:estimation}
Let $f_1,\ldots,f_n$ be symmetric trade-off functions such that
$\kappa_3(f_i)<\infty$ for all $1\le i\le n$. Denote
$\mu := \frac{2\|\mathbf{kl}\|_{1}}
  {\sqrt{\|\boldsymbol{\kappa}_2\|_{1}-\|\mathbf{kl}\|_{2}^{2}}}$,
  $\gamma := \frac{0.56\,\|\overline{\boldsymbol{\kappa}}_{3}\|_{1}}
  {\bigl(\|\boldsymbol{\kappa}_2\|_{1}-\|\mathbf{kl}\|_{2}^{2}\bigr)^{3/2}}$.
Assume $\gamma<\tfrac{1}{2}$. Then, for all $\alpha\in[\gamma,\,1-\gamma]$, we have
$
  G_\mu(\alpha+\gamma)-\gamma
  \;\le\;
  (f_1\otimes f_2\otimes\cdots\otimes f_n)(\alpha)
  \;\le\;
  G_\mu(\alpha-\gamma)+\gamma.$
\end{theorem}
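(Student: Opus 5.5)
The approach is to represent the composed trade-off function through sums of independent log-likelihood ratios and apply the Berry--Esseen theorem (for non-identically distributed summands, with constant $0.56$) under both hypotheses. Throughout, I use the functionals of Dong et al., defined for a symmetric trade-off function $f$ by
\[
\mathrm{kl}(f)=-\int_0^1\log|f'(x)|\,dx,\qquad
\kappa_2(f)=\int_0^1\log^2|f'(x)|\,dx,\qquad
\overline{\kappa}_3(f)=\int_0^1\bigl|\log|f'(x)|+\mathrm{kl}(f)\bigr|^3dx .
\]

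\textbf{Step 1: canonical representation.} For each symmetric $f_i$, I will realize $f_i=T(P_i,Q_i)$ with $P_i=U[0,1]$ and $Q_i$ the law on $[0,1]$ with density $|f_i'(1-x)|$, plus an atom if needed. Under this realization the log-likelihood ratio is $L_i(X)=\log|f_i'(1-X)|$. Under $P_i$, $L_i$ then has mean $-\mathrm{kl}(f_i)$, second moment $\kappa_2(f_i)$ and centered third absolute moment $\overline{\kappa}_3(f_i)$.

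Symmetry $f_i=f_i^{-1}$ is what I will use to show that the law of $L_i$ under $Q_i$ is the law of $-L_i$ under $P_i$. In particular, its mean under $Q_i$ is $+\mathrm{kl}(f_i)$ and its centered moments match. By Property~\ref{property:expression_composition}, $f_1\otimes\cdots\otimes f_n=T(\otimes_i P_i,\otimes_i Q_i)$. By Neyman--Pearson, the optimal tests threshold $S=\sum_i L_i$, possibly randomized at the threshold.

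\textbf{Step 2: normal approximation.} Set $K=\|\mathbf{kl}\|_1$ and $s^2=\|\boldsymbol\kappa_2\|_1-\|\mathbf{kl}\|_2^2$, which is the variance of $S$ under either hypothesis. Berry--Esseen gives
\[
\sup_t\Bigl|\mathbb{P}_P[S\le t]-\Phi\bigl((t+K)/s\bigr)\Bigr|\le\gamma .
\]
By the mirror property of Step 1, the same bound holds under $Q$ with $-K$ in place of $K$. Hence the type I error satisfies $\alpha(t)=\mathbb{P}_P[S>t]\in[1-\Phi((t+K)/s)-\gamma,\;1-\Phi((t+K)/s)+\gamma]$, and the type II error satisfies $\beta(t)=\mathbb{P}_Q[S\le t]\in[\Phi((t-K)/s)-\gamma,\;\Phi((t-K)/s)+\gamma]$.

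\textbf{Step 3: eliminate the threshold.} Write $\Phi((t-K)/s)=\Phi\bigl(\Phi^{-1}(1-\tilde\alpha)-2K/s\bigr)=G_\mu(\tilde\alpha)$, where $\tilde\alpha=1-\Phi((t+K)/s)$ and $\mu=2K/s$. Since $|\alpha(t)-\tilde\alpha|\le\gamma$ and $G_\mu$ is decreasing, this yields $G_\mu(\alpha+\gamma)-\gamma\le\beta\le G_\mu(\alpha-\gamma)+\gamma$ along the attainable points of the likelihood-ratio tests. The restriction $\alpha\in[\gamma,1-\gamma]$ keeps $\alpha\pm\gamma$ inside $[0,1]$. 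Randomized tests at atoms of $S$ interpolate linearly between adjacent attainable points. Because the trade-off function is convex and $G_\mu$ is monotone, the sandwich then extends to every $\alpha$ in $[\gamma,1-\gamma]$.

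\textbf{Main obstacle.} I expect the main difficulty to be Step 1. It requires rigorously justifying the canonical pair for a general symmetric $f$, including cases where $f'$ vanishes or has jumps, or where $f(0)<1$ so that $Q_i$ carries an atom. It also requires verifying that symmetry yields exactly the mirrored law of the log-likelihood ratio, so that the same $\gamma$ controls both the $P$- and $Q$-side errors. If that direct verification is awkward, I would first try proving $\mathbb{E}_Q[h(L)]=\mathbb{E}_P[e^{L}h(L)]$ and checking, via the change of variables $x\mapsto f(x)$, that this equals $\mathbb{E}_P[h(-L)]$ when $f=f^{-1}$.
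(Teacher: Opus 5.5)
The paper gives no proof of this statement; it imports it from Dong et al. Your outline is the argument of that source: the canonical pair $(U[0,1],Q_i)$ with $Q_i$ having density $|f_i'(1-x)|$, the sum of privacy-loss variables, mirror symmetry from $f_i=f_i^{-1}$, Berry--Esseen with constant $0.56$, and elimination of the threshold. Steps~1--2 and the threshold elimination are correct.

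The obstacle you flag in Step~1 resolves as follows.
\begin{itemize}
\item No atom is ever needed. If $f(0)<1$, symmetry forces $f=f^{-1}=0$ on $[f(0),1]$. Then $\log|f'|=-\infty$ on a set of positive measure, so $\kappa_3(f)=\infty$.
\item The mirror law holds. If $U\sim U[0,1]$, then $Y=f(U)$ has density $|f'|$. Since $f\circ f=\mathrm{id}$, you have $f'(f(u))f'(u)=1$ for a.e.\ $u$. Hence $\log|f'(Y)|$ has the same law as $-\log|f'(U)|$.
\end{itemize}

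The step that does not go through as written is the extension to randomized tests. Your argument ``the trade-off function is convex and $G_\mu$ is monotone'' only transfers the lower bound. Since $G_\mu(\cdot+\gamma)-\gamma$ is convex, a chord joining two points above it stays above it. But the upper envelope $G_\mu(\cdot-\gamma)+\gamma$ is also convex, and a chord joining two points below a convex function can rise above it. Monotonicity does not repair this. From $\beta(\alpha)\le\beta(\alpha_1)$ you only get the bound $G_\mu(\alpha_1-\gamma)+\gamma$, which is larger than the one you need at $\alpha>\alpha_1$. So the endpoint inequalities alone cannot yield the upper bound on the interior of an atom's segment.

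The fix is to return to the Berry--Esseen estimates at the common threshold. The Berry--Esseen bound is a supremum over $t$ and $\Phi$ is continuous, so it passes to left limits. Hence for a fixed $t$:
\begin{itemize}
\item both $\mathbb{P}_P[S>t]$ and $\mathbb{P}_P[S\ge t]$ lie within $\gamma$ of $\tilde\alpha(t)=1-\Phi((t+K)/s)$;
\item both $\mathbb{P}_Q[S<t]$ and $\mathbb{P}_Q[S\le t]$ lie within $\gamma$ of $\Phi((t-K)/s)=G_\mu(\tilde\alpha(t))$, with the same $\tilde\alpha(t)$.
\end{itemize}
The randomized Neyman--Pearson test at $(t,c)$ has $\alpha=(1-c)\,\mathbb{P}_P[S>t]+c\,\mathbb{P}_P[S\ge t]$ and $\beta=c\,\mathbb{P}_Q[S<t]+(1-c)\,\mathbb{P}_Q[S\le t]$. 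Therefore $|\alpha-\tilde\alpha(t)|\le\gamma$ and $|\beta-G_\mu(\tilde\alpha(t))|\le\gamma$. Your Step~3 inequalities then apply verbatim to every $\alpha\in[\gamma,1-\gamma]$, with no appeal to convexity.
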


\begin{theorem}\label{thm:upperbound_estimate}
Let $f_{\ast} = T(P, Q_{N})$, where $P$ and $Q_{N}$ are defined in Table~\ref{tab:notation}, and let $\overline{f_{\ast}} = \max\{f_{\ast}, f_{\ast}^{-1}\}$.
For all $\alpha \in [\gamma,\, 1 - \gamma]$, the following bounds hold:
1. For $\overline{f}$ in Theorem~\ref{thm:upperbound_EASGM}, we have
$\overline{f}(\alpha) \le G_{\mu}(\alpha - \gamma)+\gamma$.
2. For $\overline{f}$ in Theorem~\ref{thm:upperbound_ASGM}, we have
$\overline{f}(\alpha) \le (q-q(1-q)^{N})\,( G_{\mu}(\alpha - \gamma)+\gamma) + (1 - q+q(1-q)^{N})\, ID(\alpha)$, where $\mu$ and $\gamma$ are derived from the composition of $(n-1)$ independent copies of $\overline{f_{\ast}}$. 
The proof is in Appendix~\ref{App:proof:estimationg}.
\end{theorem}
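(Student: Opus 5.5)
The plan is to chain three inequalities: from $f_\ast^{\otimes(n-1)}$ to $\overline{f_\ast}^{\otimes(n-1)}$, then to the Berry--Esseen type estimate of Theorem~\ref{thm:estimation}, and finally into the expressions of Theorems~\ref{thm:upperbound_EASGM} and~\ref{thm:upperbound_ASGM}.

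\textbf{Step 1: symmetrize $f_\ast$.} Set $f_\ast=T(P,Q_N)$ and $\overline{f_\ast}=\max\{f_\ast,f_\ast^{-1}\}$. By Proposition~2.2 of~\cite{dong2019gaussian}, $\overline{f_\ast}$ is a trade-off function. It is symmetric by construction, since the inverse of a max of $f$ and $f^{-1}$ is the same max. It also satisfies $\overline{f_\ast}\ge f_\ast$ pointwise.

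\textbf{Step 2: pass the inequality through tensoring.} Realize $\overline{f_\ast}$ as $T(P_1,P_1')$ for some pair of distributions. Apply Property~\ref{property:order} one factor at a time, together with Property~\ref{property:expression_composition} to identify $T(P^{\otimes(n-1)},Q_N^{\otimes(n-1)})$ with $f_\ast^{\otimes(n-1)}$. Replacing the factors one by one gives
\[
f_\ast^{\otimes(n-1)}\le \overline{f_\ast}^{\otimes(n-1)} .
\]

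\textbf{Step 3: apply the CLT estimate.} Use Theorem~\ref{thm:estimation} with $f_1=\cdots=f_{n-1}=\overline{f_\ast}$. This requires $\kappa_3(\overline{f_\ast})<\infty$. I expect this to be the main technical check, for two reasons:
\begin{itemize}
\item the log-likelihood ratio of $\mathcal{N}(0,1)$ versus $\mathcal{N}(0,(N/(N+1))^2)$ is a quadratic in a Gaussian variable, so all its moments are finite;
\item one must still confirm that the functionals $\mathrm{kl},\kappa_2,\overline{\kappa}_3$ of Dong et al.\ for the symmetrized function $\overline{f_\ast}$ are finite. These are defined via $\log|f'|$, and for the max of $f_\ast$ and $f_\ast^{-1}$ they split into pieces governed by Gaussian likelihood ratios.
\end{itemize}
I would take $\gamma<1/2$ as an assumption implicit in the statement, since the bounds are only claimed on $[\gamma,1-\gamma]$. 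Theorem~\ref{thm:estimation} then gives $\overline{f_\ast}^{\otimes(n-1)}(\alpha)\le G_\mu(\alpha-\gamma)+\gamma$ for $\alpha\in[\gamma,1-\gamma]$.

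\textbf{Step 4: substitute into the two mechanisms.}
\begin{itemize}
\item \emph{EASGM.} Theorem~\ref{thm:upperbound_EASGM} gives $\overline{f}=f_\ast^{\otimes(n-1)}$. Chaining Steps 2 and 3 yields item~1 directly.
\item \emph{ASGM.} In Theorem~\ref{thm:upperbound_ASGM} the coefficient $q-q(1-q)^N=q(1-(1-q)^N)$ is nonnegative. Hence multiplying the pointwise bound on $T(P^{\otimes(n-1)},Q_N^{\otimes(n-1)})$ by it preserves the inequality. The $\mathrm{ID}$ term is left unchanged, which yields item~2 on the same interval.
\end{itemize}
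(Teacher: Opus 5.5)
Your proposal is correct and follows the paper's route: the chain $f_\ast^{\otimes(n-1)}\le\overline{f_\ast}^{\otimes(n-1)}\le G_\mu(\alpha-\gamma)+\gamma$, followed by substitution into Theorems~\ref{thm:upperbound_EASGM} and~\ref{thm:upperbound_ASGM}, is exactly the argument laid out in Section~\ref{sec:upperbound_visulization}. The paper's appendix proof is devoted entirely to the step you flagged as the main check: it proves $\kappa_3(f_\ast)<\infty$ and $\kappa_3(f_\ast^{-1})<\infty$ via the explicit likelihood-ratio parametrization, and then shows $\kappa_3(\overline{f_\ast})\le\kappa_3(f_\ast)+\kappa_3(f_\ast^{-1})$, using that the set where $f_\ast=f_\ast^{-1}$ is discrete (by real analyticity) so the max agrees almost everywhere with one of the two functions, which is what your two bullets sketch.
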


\subsection{Takeaways}
\label{sec:Takeaways}

Using the upper bounds derived in Theorems~\ref{thm:upperbound_EASGM} and~\ref{thm:upperbound_ASGM}, we now summarize several key observations about the privacy behavior of EASGM and ASGM under different conditions.

For both EASGM and ASGM, our analysis shows that the analytical upper bound \(\overline{f}\) decreases as the output dimension increases (Lemma~\ref{fact:dimension}). For EASGM, \(\overline{f}\) further decreases as the dataset size becomes smaller (Lemma~\ref{fact:number_of_data}). For ASGM, \(\overline{f}\) further decreases as the sampling ratio \(q\) increases (Lemma~\ref{fact:sample_ratio}).

\begin{lemma}\label{fact:number_of_data}
Let \(\overline{f}\) be the function defined in Theorem~\ref{thm:upperbound_EASGM}. 
For any \(\alpha \in [0,1]\) and \(n\in \mathbb{Z}_{>0}\), \(\overline{f}(\alpha)\) weakly decreases as \(N\) decreases.
The proof is given in Appendix~\ref{Proof:Lemma_number_of_data}.
\end{lemma}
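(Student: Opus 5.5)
We need to show that $\overline{f}=T(P^{\otimes(n-1)},Q_N^{\otimes(n-1)})$, with $P=\mathcal N(0,1)$ and $Q_N=\mathcal N\bigl(0,(N/(N+1))^2\bigr)$, weakly decreases as $N$ decreases. Write $\sigma_N=N/(N+1)$. Since $N\mapsto N/(N+1)$ is increasing on the positive integers, for $N'\le N$ we have $0<\sigma_{N'}\le\sigma_N<1$. The whole claim then reduces to a monotonicity statement in the variance of the alternative, lifted through an $(n-1)$-fold tensor product.

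\textbf{Step 1 (one coordinate).} Apply Theorem~\ref{thm:bound_fix_variance} with $\sigma_2=\sigma_{N'}$ and $\sigma_1=\sigma_N$, which satisfy $0<\sigma_2\le\sigma_1\le 1$. This gives
\[
T(P,Q_{N'})(\alpha)\le T(P,Q_N)(\alpha)\quad\text{for all }\alpha\in[0,1].
\]

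\textbf{Step 2 (tensor product).} Lift the one-coordinate inequality to $n-1$ coordinates by replacing one factor at a time. Define the hybrids
\[
f_k:=T(P,Q_{N'})^{\otimes k}\otimes T(P,Q_N)^{\otimes(n-1-k)},\qquad k=0,\dots,n-1.
\]
By Property~\ref{property:expression_composition}, each $f_k$ equals $T\bigl(P^{\otimes(n-1)},\,Q_{N'}^{\otimes k}\otimes Q_N^{\otimes(n-1-k)}\bigr)$. Passing from $f_k$ to $f_{k+1}$ changes exactly one factor, so Property~\ref{property:order} (monotonicity of $\otimes$ in one argument, using commutativity of $\otimes$) together with Step~1 gives $f_{k+1}\le f_k$. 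Chaining these inequalities yields
\[
T(P^{\otimes(n-1)},Q_{N'}^{\otimes(n-1)})=f_{n-1}\le f_0=T(P^{\otimes(n-1)},Q_N^{\otimes(n-1)}),
\]
which is the claimed weak decrease. When $n=1$ both sides are $\mathrm{ID}$, so the statement holds trivially.

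\textbf{Main obstacle.} The substantive input is Theorem~\ref{thm:bound_fix_variance}, which is taken as given here, so the remaining risk is in Step~2. Property~\ref{property:order} is stated for a fixed left factor $f_1$ and a varying right factor. To apply it at each hybrid step, I would use commutativity and associativity of $\otimes$, which follow from Property~\ref{property:expression_composition} by permuting product coordinates (a bijection, so it preserves trade-off functions), and group the unchanged $n-2$ factors into a single trade-off function $f_1$.

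If one wants to avoid the hybrid argument, there is a direct alternative. Apply Blackwell's characterization: $\mathcal N(0,\sigma_{N'}^2)$ is obtained from $\mathcal N(0,\sigma_N^2)$ by scaling by $\sigma_{N'}/\sigma_N$, but that map alters $P$ too, so this route is less clean. I would therefore rely on the hybrid argument.
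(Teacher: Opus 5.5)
Your proposal is correct and follows essentially the same route as the paper: it uses the monotonicity of $N/(N+1)$ together with Theorem~\ref{thm:bound_fix_variance} for the one-coordinate comparison $T(P,Q_{N'})\le T(P,Q_N)$, then lifts this to the $(n-1)$-fold tensor power via Properties~\ref{property:expression_composition} and~\ref{property:order}. Your hybrid argument only spells out the factor-by-factor replacement, using commutativity and associativity of $\otimes$, which the paper's one-line appeal to Property~\ref{property:order} leaves implicit.
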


\begin{lemma}\label{fact:dimension}
Let \(\overline{f}\) be the function defined in Theorems~\ref{thm:upperbound_ASGM} and~\ref{thm:upperbound_EASGM}. 
For any \(\alpha \in [0,1]\) and \(N\in \mathbb{Z}_{>0}\), \(\overline{f}(\alpha)\) weakly decreases as \(n\) increases.
The proof is given in Appendix~\ref{Proof:Lemma:dimension}.
\end{lemma}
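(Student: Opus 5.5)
We need to show that $\overline{f}$ in Theorems~\ref{thm:upperbound_EASGM} and~\ref{thm:upperbound_ASGM} is weakly decreasing in $n$. Write $f_\ast = T(P, Q_N)$. Then $\overline{f} = f_\ast^{\otimes(n-1)}$ for EASGM. For ASGM,
\[
\overline{f} = a\, f_\ast^{\otimes(n-1)} + (1-a)\,\mathrm{ID}, \qquad a = q-q(1-q)^N \in [0,1].
\]
In both cases it suffices to prove that, for every $\alpha \in [0,1]$,
\[
f_\ast^{\otimes n}(\alpha) \le f_\ast^{\otimes(n-1)}(\alpha), \qquad n \ge 2 .
\]
Indeed, multiplying by $a \ge 0$ and adding the $n$-independent term $(1-a)\,\mathrm{ID}(\alpha)$ preserves the inequality pointwise, which gives the ASGM case.

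The key step is to apply Property~\ref{property:useful} with $f = f_\ast$ and $f' = f_\ast$. This is valid because the property holds for any $Q'$; here we take $Q' = Q_N$. It gives
\[
f_\ast^{\otimes(n-1)} \ge f_\ast^{\otimes(n-1)} \otimes f_\ast .
\]
By Property~\ref{property:expression_composition}, the right-hand side equals $T(P^{\otimes n}, Q_N^{\otimes n}) = f_\ast^{\otimes n}$.

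Equivalently, we can argue directly. We have $\mathrm{ID} \ge f_\ast$ (Definition~\ref{def:identical_trade_off}) and $f_\ast^{\otimes(n-1)} = f_\ast^{\otimes(n-1)} \otimes \mathrm{ID}$. Property~\ref{property:order} then gives $f_\ast^{\otimes(n-1)} \otimes \mathrm{ID} \ge f_\ast^{\otimes(n-1)} \otimes f_\ast$.

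The base case $n=1$ needs care, since there $f_\ast^{\otimes 0}$ is the empty tensor product. We interpret it as $\mathrm{ID}$, i.e.\ $T(P^{\otimes 0}, Q_N^{\otimes 0})$ compares two identical point masses. Then $\overline{f}$ at $n=1$ equals $\mathrm{ID}$, which dominates $f_\ast$ at $n=2$. The general step above then covers all $n \ge 2$, and induction or chaining gives weak monotonicity over all $n \in \mathbb{Z}_{>0}$.

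The main obstacle is this bookkeeping rather than any analysis. Specifically, I need to make the $n=1$ convention explicit, and check that the ordering from Property~\ref{property:order} is pointwise in $\alpha$ so that it survives the convex combination. Both are routine. The argument uses no property of $N$ or $q$ beyond $a \in [0,1]$, so it holds for any fixed $N \in \mathbb{Z}_{>0}$ as required.
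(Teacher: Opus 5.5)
Your proposal is correct and follows essentially the paper's argument: you pad $f_\ast^{\otimes(n-1)}$ with $\mathrm{ID}$, replace $\mathrm{ID}$ by $f_\ast$ via Property~\ref{property:order}, and transfer the result to ASGM through the $n$-independent convex combination with weight $a=q-q(1-q)^N\in[0,1]$. The only differences are cosmetic: the paper goes from $n_2$ to $n_1$ in one block using $\mathrm{ID}^{\otimes(n_1-n_2)}$ rather than chaining single increments, and it leaves implicit the $n=1$ convention $f_\ast^{\otimes 0}=\mathrm{ID}$, which you state explicitly.
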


\begin{lemma}\label{fact:sample_ratio}
Let \(\overline{f}\) be the function defined in Theorem~\ref{thm:upperbound_ASGM}. 
Given fixed \(N\in\mathbb{Z}_{>0}\) and \(n\in\mathbb{Z}_{>0}\), for any \(\alpha\in[0,1]\), \(\overline{f}(\alpha)\) weakly decreases as \(q\) increases.
The proof is given in Appendix~\ref{Appendix:proof_asgm_sample_ratio}.
\end{lemma}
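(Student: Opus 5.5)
The plan is to rewrite the upper bound of Theorem~\ref{thm:upperbound_ASGM} so that its entire $q$-dependence sits in one scalar coefficient multiplying a nonnegative, $q$-independent function. Write $f_{\ast}^{\otimes(n-1)}=T(P^{\otimes(n-1)},Q_{N}^{\otimes(n-1)})$ and $h(q)=q-q(1-q)^{N}=q\bigl(1-(1-q)^{N}\bigr)$. Since the two coefficients in Eq.~\ref{eq:ASGM_upper_bound} sum to one, we get
\[
\overline{f}(\alpha)=\mathrm{ID}(\alpha)-h(q)\bigl(\mathrm{ID}(\alpha)-f_{\ast}^{\otimes(n-1)}(\alpha)\bigr).
\]

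First I would check that $f_{\ast}^{\otimes(n-1)}$ does not depend on $q$. By Table~\ref{tab:notation}, $P=\mathcal{N}(0,1)$ and $Q_N=\mathcal{N}(0,(N/(N+1))^2)$, so $f_{\ast}^{\otimes(n-1)}$ depends only on $N$ and $n$. When $n=1$ the product is empty and I read $f_{\ast}^{\otimes 0}$ as $\mathrm{ID}$; then $\overline{f}=\mathrm{ID}$ for every $q$ and the claim is trivial.

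Second, I would use Definition~\ref{def:identical_trade_off}, which gives $\mathrm{ID}\ge f$ for every trade-off function $f$. Hence $\mathrm{ID}(\alpha)-f_{\ast}^{\otimes(n-1)}(\alpha)\ge 0$ for every $\alpha\in[0,1]$. It therefore suffices to show that $h$ is nondecreasing on $[0,1]$.

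Third, I would differentiate:
\[
h'(q)=1-(1-q)^{N}+Nq(1-q)^{N-1}.
\]
For $q\in[0,1]$ and $N\ge 1$ we have $(1-q)^N\le 1$ and $Nq(1-q)^{N-1}\ge 0$, so $h'(q)\ge 0$. Alternatively, $h$ is a product of the two nonnegative nondecreasing functions $q$ and $1-(1-q)^N$, so it is nondecreasing without any calculus.

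Combining the three steps, for $q_1\le q_2$ we get $\overline{f}_{q_2}(\alpha)\le \overline{f}_{q_1}(\alpha)$ for all $\alpha$, which is the claimed weak monotone decrease. I do not expect a real obstacle here. The only points needing care are (i) confirming that $Q_N$ carries no hidden $q$-dependence, unlike $Q_B$, which depends on the sampled batch, and (ii) the degenerate case $n=1$.
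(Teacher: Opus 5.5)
Your proposal is correct and follows essentially the same route as the paper. Both write $\overline{f}$ as a convex combination with weight $a(q)=q-q(1-q)^N$ on the $q$-independent term $T(P^{\otimes(n-1)},Q_N^{\otimes(n-1)})\le\mathrm{ID}$, show $a'(q)=1-(1-q)^N+Nq(1-q)^{N-1}\ge 0$, and conclude from $(a(q_1)-a(q_2))\bigl(T(P^{\otimes(n-1)},Q_N^{\otimes(n-1)})-\mathrm{ID}\bigr)\ge 0$ for $q_1\le q_2$. Your remarks on the degenerate case $n=1$ and the calculus-free product argument are harmless extras; the paper instead adds a strict-decrease statement wherever $T(P^{\otimes(n-1)},Q_N^{\otimes(n-1)})(\alpha)<\mathrm{ID}(\alpha)$.
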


\section{EASGM and ASGM Auditing}
\label{sec:Auditing_method}

In this section, we propose an auditing method for EASGM- and ASGM-based DP-SGD implementations to test whether the privacy leakage in their per-round training procedures violates the SGM privacy guarantee. Since our goal is to detect privacy violations, we focus on the parameter regime in which the privacy guarantee is weakest.

The main challenge lies in performing tight auditing in a high-dimensional output space: directly applying the likelihood-ratio test given by the Neyman--Pearson lemma is computationally inefficient. To address this issue, we exploit the fact that, under each hypothesis, the output can be normalized by a suitable invertible linear transformation so that the transformed output has i.i.d.\ standard Gaussian coordinates. This allows us to replace direct likelihood-ratio testing with a normality-based auditing procedure.

\noindent\textbf{Distinguisher design.}
Lemma~\ref{Lemma:Jensen} provides a design principle for distinguishers. When the adversary observes the auxiliary information \(I\), the optimal distinguisher is obtained by conditioning on \(I\) and applying the corresponding per-condition optimal test. When \(I\) is unobservable but takes only a small number of values, the adversary can instead apply all per-condition distinguishers and aggregate their outputs.

By Lemma~\ref{Lemma:Jensen}, the privacy analysis reduces to distinguishing the pair of distributions defining each trade-off function \(f_B^i\). Following the notation in Sections~\ref{sec:ASGM-SGM} and~\ref{sec:guarantee_EASGM}, we construct a distinguisher \(A_B^i\) for the associated distributions \(P_B^i\) and \(Q_B^i\). Both are \(n\)-dimensional distributions with independent components, but generally have different means and standard deviations, denoted by \((\mu_{P,B}^i,\sigma_{P,B}^i)\) and \((\mu_{Q,B}^i,\sigma_{Q,B}^i)\), respectively. We define
\(\mathrm{Proc}_{P_B^i}(x)=\frac{x-\mu_{P,B}^i}{\sigma_{P,B}^i}\) and
\(\mathrm{Proc}_{Q_B^i}(x)=\frac{x-\mu_{Q,B}^i}{\sigma_{Q,B}^i}\),
which map \(P_B^i\) and \(Q_B^i\) to \(n\)-dimensional vectors with i.i.d.\ standard normal components.

Let \(\mathrm{Test}(\cdot)\) denote a statistical test for whether an \(n\)-dimensional vector has i.i.d.\ standard normal components. Given an observed output \(x\), \(A_B^i\) applies \(\mathrm{Test}\) to both \(\mathrm{Proc}_{P_B^i}(x)\) and \(\mathrm{Proc}_{Q_B^i}(x)\): it outputs \(1\) if only \(\mathrm{Proc}_{P_B^i}(x)\) passes, \(0\) if only \(\mathrm{Proc}_{Q_B^i}(x)\) passes, and a uniformly random bit in \(\{0,1\}\) otherwise. The outputs of all \(A_B^i\) are then aggregated by the final distinguisher \(A_\ast\) to decide whether the underlying dataset is \(D\) or \(D'\).

\noindent\textbf{Auditing algorithm.}
Algorithm~\ref{alg:score-function} in Appendix~\ref{Appendix:auditing_algorithm} summarizes the auditing procedure.
\section{Experiment}
\subsection{Experiment Setup}\label{sec:Experiment_set_up}
All experiments are performed on a dual-socket Intel Xeon Gold~5320 server equipped with a single NVIDIA A40 GPU featuring 40~GB of device memory.

\noindent \textbf{Baseline.}
Since our experiments aim to compare the audited privacy leakage of EASGM- and ASGM-based DP\text{-}SGD implementations in the one-round setting with the SGM-based privacy guarantee they claim to satisfy, we adopt the state-of-the-art DP\text{-}SGD analysis of~\cite{bu2020deep} as the baseline. This guarantee has been adopted in privacy accounting, including by the DPSUR implementation (see \cite{dpsur_gdp_accountant_line25}) and by the dp-promise algorithm (see Lemma~3 of~\cite{wang2024dp}). Although Opacus implements a GDP accountant~\cite{opacus_gdp_accountant_v154_line23} based on~\cite{gopi2021numerical}, it targets multi-round DP-SGD and relies on numerical approximation. In our one-round setting, the guarantee of~\cite{bu2020deep} is tight and analytic, making it the appropriate baseline.

\noindent \textbf{Audited algorithms.} 
We evaluate four state-of-the-art differentially private machine learning methods based on DP-SGD: DPSGD-HF~\cite{tramer2020differentially}, DPSUR~\cite{fu2024dpsur}, DP-FETA~\cite{11023490}, and PrivImage~\cite{li2024privimage}. Their implementations are publicly available, and we audit their one-round DP-SGD updates.

Three of these methods—DPSGD-HF, DP-FETA, and PrivImage—are built on Meta’s open-source library \texttt{Opacus}~\cite{yousefpour2021opacus}, but rely on different library versions. In particular, PrivImage \cite{privimage_opacus_version} and DP-FETA \cite{dpfeat_opacus_version} rely on \texttt{Opacus} v1.0, whose update rule corresponds to FEASGM, while DPSGD-HF \cite{dpsgdhf_opacus_version} relies on \texttt{Opacus} v0.13.0, whose update rule corresponds to ASGM (Table~\ref{tab:audited_libraries} in Appendix presents more details). However, according to their paper descriptions, these methods are based on the EASGM update rule. 

To control for this confounding factor, we reimplement these three methods under the \texttt{Opacus} version corresponding to the update rule stated in their papers. Specifically, since these methods are described based on EASGM, we audit them using \texttt{Opacus} version v0.15.0; see Table~\ref{tab:audited_libraries}. In contrast, DPSUR manually implements the ASGM variant of DP-SGD and therefore is not affected by version-dependent differences in \texttt{Opacus} \cite{dpsur_code_line146}. We therefore directly audit its implementation. Collectively, these four methods cover both EASGM- and ASGM-based DP-SGD implementations, making them natural targets for our auditing.

\noindent \textbf{Privacy metric.} Following previous work~\cite{chua2024private}, 
we use the trade-off function and the privacy profile $\delta(\epsilon)$ as our privacy metrics. 
The value $\delta(\epsilon)\in[0,1]$ is derived from the audited results under a fixed $\epsilon$ 
using the method in Eq.~\ref{eq:audited_delta}. 
A larger $\delta(\epsilon)$ indicates higher privacy leakage.

\noindent\textbf{Parameter settings.}  
Tight auditing~\cite{annamalai2024you} aims to audit the maximum privacy leakage of a mechanism. It typically requires performing auditing on two small neighboring datasets that are easy to distinguish, i.e., two datasets $D$ and $D’$ with a small overlap $D^{-} = D \cap D'$. For example, in \cite{nasr2021adversary} and \cite{annamalai2024you}, $D^{-} = \emptyset$ and $|D^{-}| = 2$, respectively. Annamalai et al.~\cite{annamalai2024you} introduce the principle of tight auditing and suggest conducting audits on small neighboring datasets, as this facilitates the detection of maximum privacy leakage while substantially reducing computational cost.

To audit EASGM and ASGM, we follow the tight auditing setup and design an auditing procedure tailored to small-sized neighboring datasets\footnote{The cost of tight auditing grows rapidly with the dataset size, making large-$N$ auditing computationally prohibitive. In particular, to achieve tight auditing for high-dimensional outputs, we trade off efficiency against scalability to larger datasets. This limitation should not be interpreted as evidence that the SGM-based guarantee remains sound for normalized DP-SGD mechanisms at larger $N$; rather, it reflects the current scalability limit of the auditing algorithm.}. Specifically, we set the size of the shared subset of neighboring datasets to $|D^{-}| = N$, where $D^{-} = D \cap D'$, with \(N \in \{2,3,4,5\}\). Another reason for targeting small datasets is that our theoretical analysis (Lemma~\ref{fact:number_of_data}) shows that, for EASGM, the upper bound of the privacy guarantee, $\overline{f}$, decreases as the dataset size decreases. This indicates a weaker privacy guarantee for each pair of neighboring datasets of smaller size. 
Consequently, privacy auditing is more likely to detect the maximum leakage when performed on smaller datasets.

We set the clipping bound to $C=0.1$ and the noise scale to $\sigma=10$. 
In theory, a large noise scale makes the SGM's derived privacy guarantee appear strong, while the privacy guarantees of EASGM and ASGM remain weak, thereby making the discrepancy between their actual guarantees and the SGM guarantee more apparent. For EASGM, we vary the sampling ratio $q\in\{0.005, 0.1, 0.5, 0.6, 0.9, 1\}$, 
ranging from small to large values. 
According to the privacy analysis, the sampling ratio does not affect the upper bound of the privacy guarantee of EASGM.
(Theorem~\ref{thm:upperbound_EASGM}). 
For ASGM, we vary $q\in\{0.5, 0.6, 0.7, 0.8, 0.9\}$, 
since larger sampling ratios make the privacy guarantee weaker 
(Lemma~\ref{fact:sample_ratio}).

\noindent \textbf{Auditing time and confidence interval.} For each audited result, we perform \(10{,}000\) auditing trials and 
adopt a \(95\%\) confidence interval, following prior work~\cite{annamalai2024you}, 
to ensure the statistical reliability of the audited results.

\begin{figure}[!t]
    \centering
\includegraphics[width=0.4\textwidth]{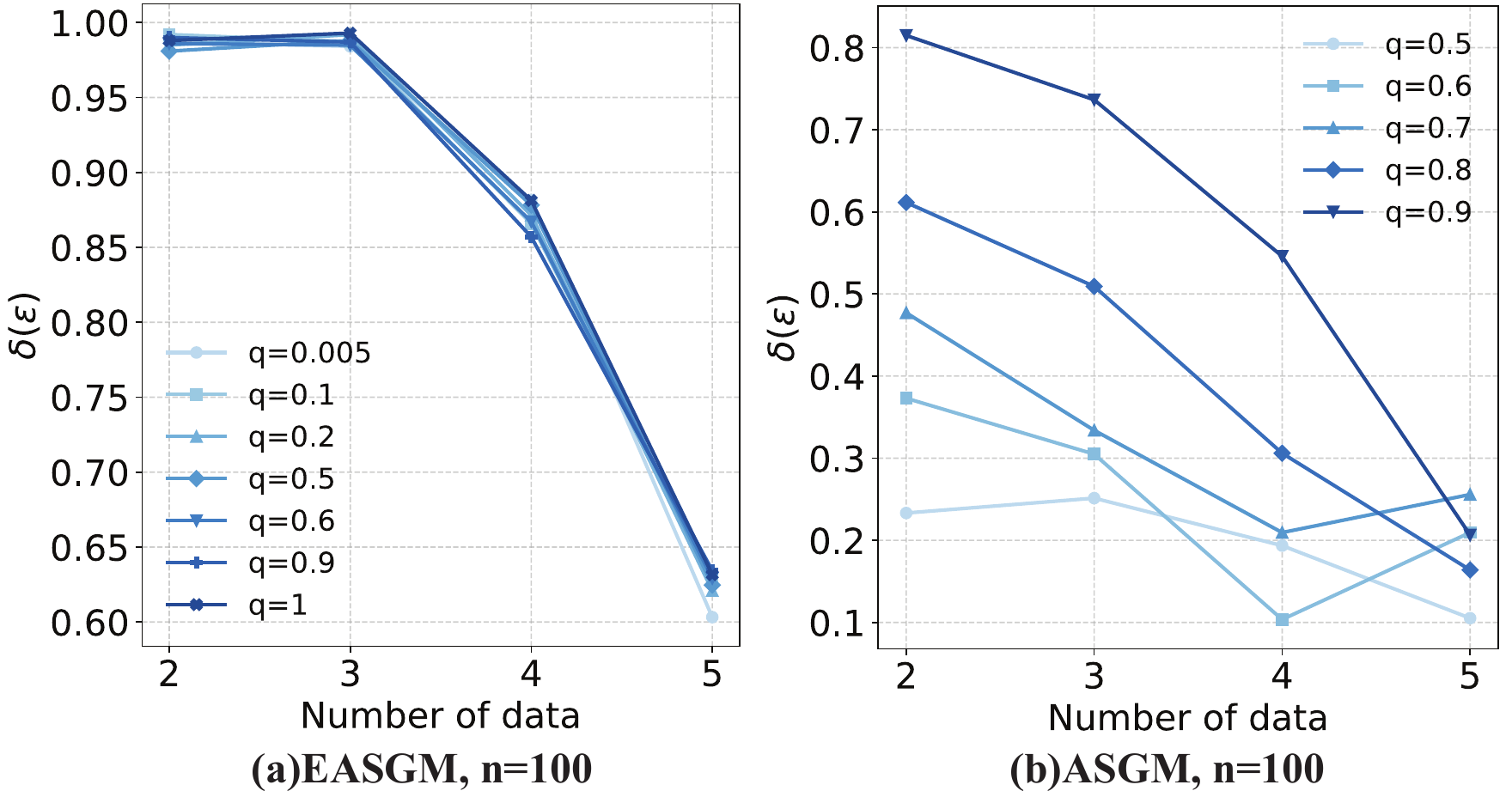}
\vspace{-0.2em}
    \caption{Audited privacy leakage $\delta(\epsilon)$ evaluated at $\epsilon = 0$, with the output dimension fixed at $n = 100$. 
The x-axis indicates the number of data points $N$, and the y-axis reports the audited leakage $\delta(\epsilon)$.}
\label{fig:DELTA_1}
\end{figure}

\subsection{Auditing with Synthetic Datasets}
\label{sec:RQ_1}
\begin{figure}
    \centering
\includegraphics[width=0.4\textwidth]{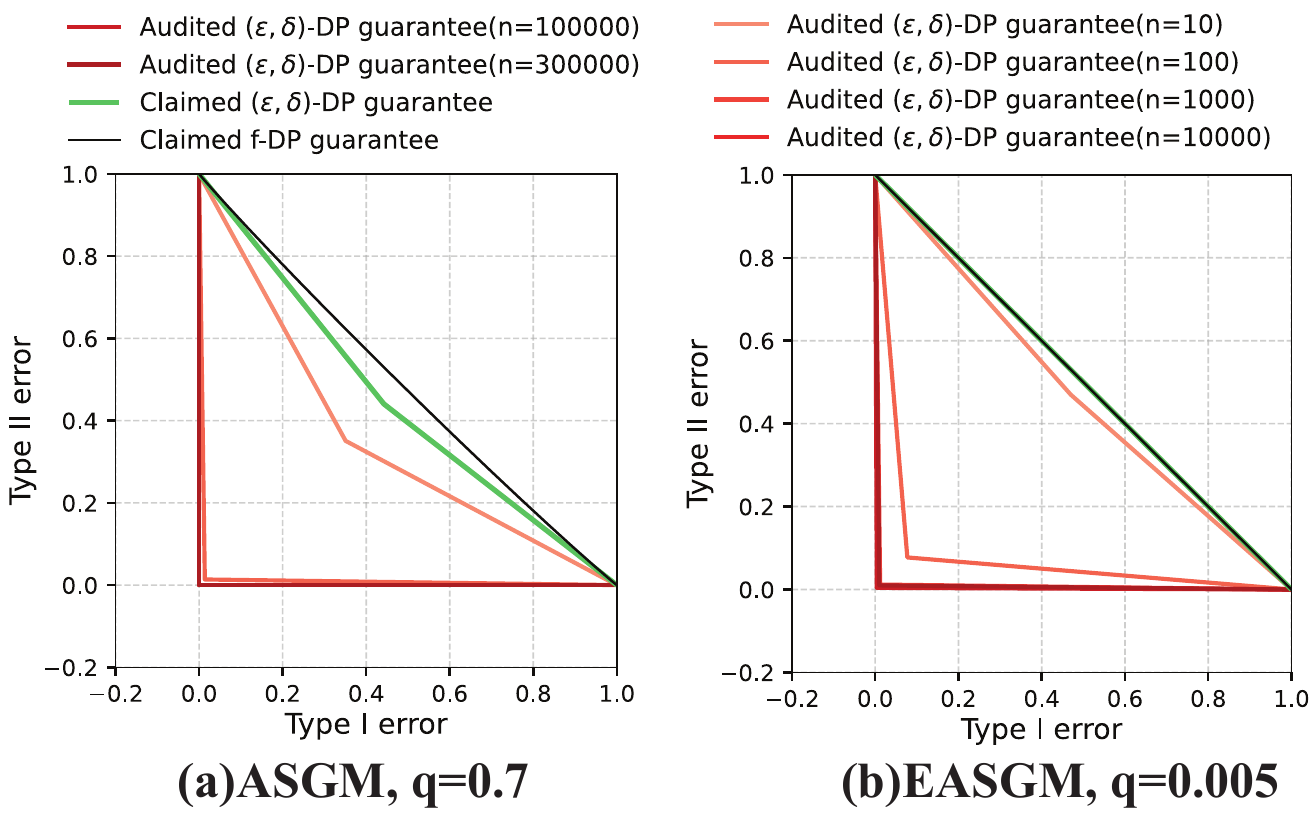}
  \caption{Audited leakage using the $(\epsilon,\delta)$-trade-off function compared against the privacy guarantee claimed under SGM, with $\delta = 10^{-5}$ and $N=4$ fixed.}

    \label{fig:DIMENSION}
\end{figure} 
Synthetic data can be generated in arbitrary dimensions, which allows us to examine how the audited leakage scales with different output dimension $n$. 
So we first use a synthetic dataset to systematically audit whether the privacy leakage of EASGM and ASGM follows the trends predicted by our theoretical analysis.

The experimental results are shown in Figs.~\ref{fig:DELTA_1} and~\ref{fig:DIMENSION}. 
Fig.~\ref{fig:DELTA_1} shows how the privacy leakage changes with different data sizes and sampling rates, where we fix the dimension $n=100$. 
From the figure, we can see that, regardless of what $q$ is, the audited privacy leakage show a generally decrease trend as the number of data points increases for both EASGM and ASGM, which follows the trends predicted by our theoretical analysis in Section~\ref{sec:Takeaways}. 
In particular, for ASGM (Fig.~\ref{fig:DELTA_1}(b)), we observe two trends in the audited privacy leakage. 
First, the leakage does not always decrease monotonically with \(N\), as shown by the cases \(q=0.6\) and \(q=0.7\). 
This non-monotonicity arises because the privacy guarantee depends on two terms that vary in opposite directions as \(N\) increases, as discussed in Section~\ref{sec:RQ3}. 
Second, the audited leakage generally increases with the sampling ratio \(q\), consistent with the trend predicted by Lemma~\ref{fact:sample_ratio}.

Fig.~\ref{fig:DIMENSION} shows the $(\epsilon,\delta)$-trade-off function for 6 different $n$ values with $q=0.7$ for ASGM and $q=0.05$ for EASGM. 
If the mechanism has more privacy leakage, the adversary can take advantage of more information to win the game with fewer type-I or type-II errors. Thus, for the $(\epsilon,\delta)$-trade-off curve, getting closer to the origin point indicates more privacy leakage.
From Fig.~\ref{fig:DIMENSION}, we can first observe that for both EASGM and ASGM, the red curves get lower as the dimension $n$ increases, which means the audited privacy leakage increases as the output dimension $n$ becomes larger, following our theoretical analysis. 
Fig.~\ref{fig:DIMENSION} also shows that all the red curves are under the green curves, indicating the audited privacy leakage of both EASGM and ASGM exceeds the privacy guarantee predicted by the SGM analysis, which also follows our theoretical analysis.
Notably, when $n\geq 100$, the distance from the green curve gets remarkably larger, showing that the privacy guarantees of EASGM and ASGM can be substantially weaker than the SGM guarantee\footnote{$q$ should also be large for ASGM.}.





\tcbset{
  answerbox/.style={
    colback=gray!10,       
    colframe=black,        
    boxrule=0.6pt,         
    arc=1mm,               
    left=2mm, right=2mm, top=1mm, bottom=1mm,
  }
}

\subsection{Auditing with Real-world Datasets} \label{sec:auditing_with_real_world_datasets}
\begin{figure*}[tp]
    \centering
\includegraphics[width=0.9\textwidth]{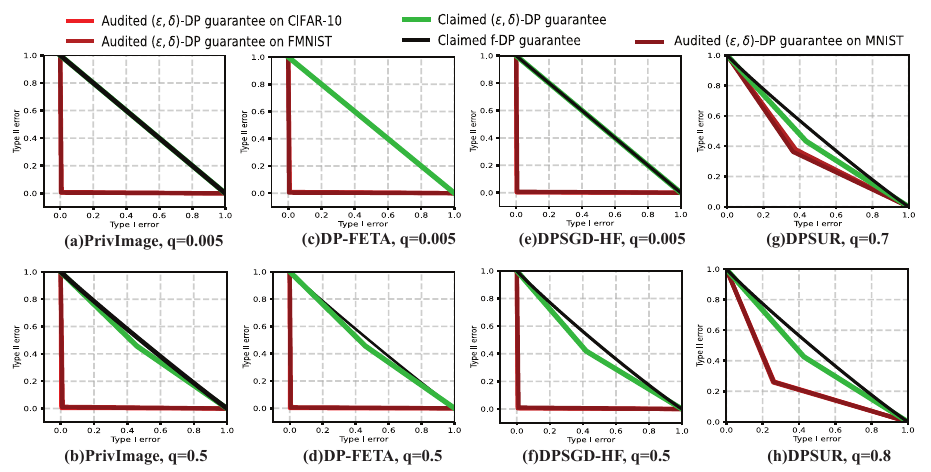}
 \caption{Audited privacy leakage of real-world DP\mbox{-}SGD implementations, computed from the \((\epsilon,\delta)\)-trade-off function with \(\delta = 10^{-5}\) under the fixed setting \(N = 4\), and compared with their claimed privacy guarantees.}

    \label{fig:IMPLEMENTATION}
\end{figure*}
We also audit the one-round model updates of EASGM- and ASGM-based DP-SGD implementations on real-world datasets. Specifically, we use CIFAR-10~\cite{fu2024dpsur}, MNIST~\cite{thudi2024gradients}, and FMNIST~\cite{wei2022dpis} to evaluate privacy leakage in practical DP-SGD implementations. These widely used benchmark datasets are compatible with many differentially private learning algorithms~\cite{fu2024dpsur,10.1145/3658644.3690194,li2024privimage,tramer2020differentially}, making them suitable for auditing such implementations. We fix the neighboring dataset sizes to \( |D| = 4 \) and \( |D'| = 5 \). For each dataset, we randomly select \(5\) data points to form neighboring pairs. To examine whether the privacy leakage exceeds the claimed SGM guarantee under the weakest regime, we follow the setup in Section~\ref{sec:Experiment_set_up} and use the SGM guarantee established in~\cite{bu2020deep} as the claimed baseline.

The experimental results are shown in Fig.~\ref{fig:IMPLEMENTATION}. From the figure, we can observe that all audited implementations exhibit privacy violations. This is because all four algorithms use models with large numbers of trainable parameters: DPSUR and DPSGD-HF each contain more than \(20{,}000\) parameters, while PrivImage and DP-FETA each contain more than one million. This leads to a very large output dimension \(n\), which, combined with the small dataset size (\(N = 4\)), weakens the privacy guarantee, as discussed in Section~\ref{sec:Takeaways}. The discrepancy between the audited leakage and the SGM guarantee is substantial for EASGM and relatively smaller for ASGM, which is consistent with the theoretical predictions in Section~\ref{sec:RQ3}. Moreover, for ASGM, the discrepancy becomes larger as the sampling ratio increases to \(0.8\), because our analysis predicts a weaker privacy guarantee as \(q\) increases (Lemma~\ref{fact:sample_ratio}).

\subsection{Privacy Guarantees for Large Datasets}
\label{sec:RQ3}
\begin{figure*}[tp]
    \centering
\includegraphics[width=1\textwidth]{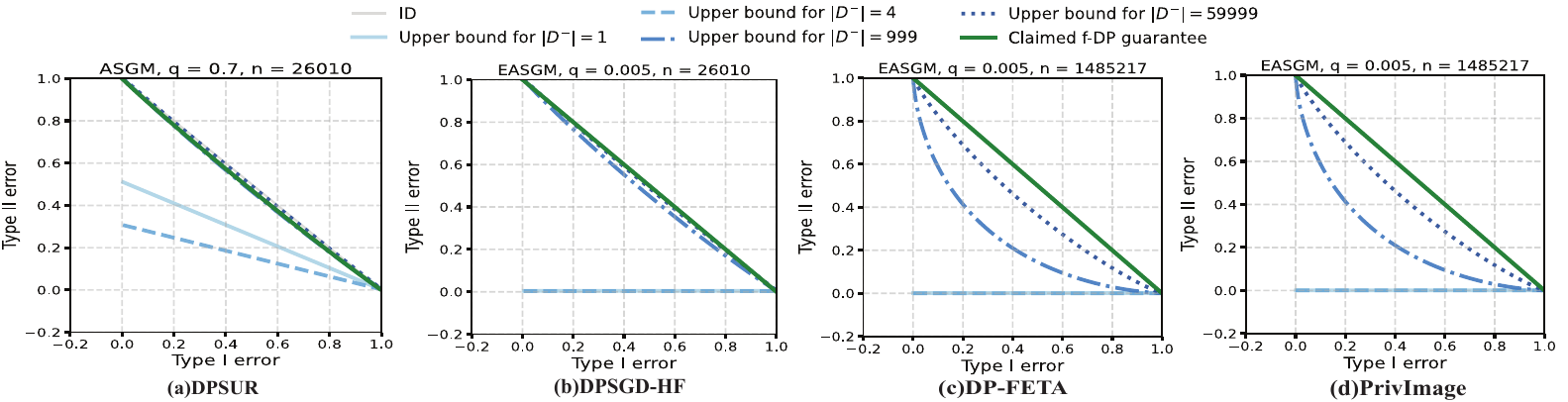}
    \caption{Theoretical upper bounds on the per-round privacy guarantees of real-world implementations, with each privacy guarantee lying below its corresponding upper bound.
}
    \label{fig:Theoretical}
\end{figure*}
Due to the heavy computational overhead, our tight auditing method is not effective for auditing large datasets. 
To show how the privacy guarantees vary across large dataset sizes, we use the method described in Theorem~\ref{thm:estimation} to visualize upper bounds on the privacy guarantees of the EASGM- and ASGM-based DP-SGD implementations audited in Section~\ref{sec:auditing_with_real_world_datasets}. 

For the EASGM-based implementations, we set the sampling ratio to \(q=0.005\), whereas for the ASGM-based implementation, we set \(q=0.7\). We use the models adopted by these implementations for MNIST and FMNIST. Specifically, DP-SUR and DPSGD-HF use models with \(26{,}010\) trainable parameters, whereas PrivImage and DP-FETA use models with \(1{,}485{,}217\) trainable parameters. To assess how the privacy guarantees vary with the neighboring dataset size, we visualize the upper bounds for four values of \(|D^-| \in \{1,4,999,59{,}999\}\). Here, \(|D^-|=59{,}999\) corresponds to the full MNIST/FMNIST training set, which contains \(60{,}000\) samples.

The results are shown in Figure~\ref{fig:Theoretical}. For small datasets, namely \(|D^-| \in \{1,4,999\}\), all methods exhibit guarantees weaker than their claimed SGM guarantees. For larger datasets such as the full MNIST and FMNIST training sets, all EASGM-based implementations, including PrivImage, DP-FETA, and DPSGD-HF, still show noticeably weaker guarantees than claimed, whereas the upper bound for the ASGM-based implementation DP-SUR remains close to the claimed SGM guarantee.

These trends are consistent with our theoretical analysis. For the ASGM-based implementation, the upper bound does not vary monotonically with the dataset size. This is because the upper bound in Eq.~\eqref{eq:ASGM_upper_bound} contains two competing effects: As \(N\) increases, the term
$(q-q(1-q)^N)T(P^{\otimes(n-1)},Q_N^{\otimes(n-1)})$
increases, while the term
$(1-q+q(1-q)^N)\mathrm{ID}$
decreases. In contrast, for EASGM-based implementations, the upper bound increases with the dataset size and decreases with the output dimension, matching the theoretical trends established in Lemma~\ref{fact:number_of_data} and Lemma~\ref{fact:dimension}.

\subsection{Opacus Code Audit}\label{sec:Opacus}

We conduct a code audit of Opacus across versions to identify the normalization schemes used in its DP-SGD implementations. 
Opacus has supported Poisson sampling since v0.13.0 and introduced explicit handling of Poisson-sampled batches in v0.15.0. Under \texttt{poisson\_sampling=True}, the \texttt{loss\_reduction} argument determines whether aggregated gradients are normalized before the update; in particular, \texttt{loss\_reduction=mean}, the default setting, averages the batch gradient. We therefore focus on this default case. The audited normalization schemes are summarized in Table~\ref{tab:audited_libraries} in the Appendix.

We observe that Opacus involves a floor-based normalization, where the summed noisy gradients are normalized by \(\lfloor N\cdot q\rfloor\). We call it \textit{FEASGM }and define it in Definition~\ref{def:FEASGM}. Although this normalization behavior has been documented in prior auditing work \cite{cebere2026privacy}, to the best of our knowledge, it has not been studied as a mechanism design in the DP literature. We therefore analyze its privacy behavior in detail.
Cebere et al.~\cite{cebere2026privacy} reported a privacy issue in Opacus's FEASGM implementation on small datasets \cite{opacus_issue_571}. Their audit found a gap between the built-in accountant's guarantee, approximately \(1.2\), and the audited leakage, approximately \(2.5\), while observing tight accounting for large datasets. 
We have two new findings: 1) for small datasets, when the output dimension is high, the audited leakage can be significantly higher than the SGM guarantee; see Section~\ref{sec:Opacus_auditing}; and 2) for large datasets under practical training settings, the privacy guarantee can still be significantly weaker than that of SGM; see Section~\ref{sec:visualizing_FEASGM}.


\section{Related Work}

\noindent\textbf{Poisson-Sampled DP-SGD.}
DP-SGD was first introduced by Abadi et al.~\cite{abadi2016deep}. 
Its per-round update normalizes the summed gradients by the expected batch size, 
an averaging operation that we formalize as EASGM. 
However, in their original privacy analysis, each per-round update was modeled as a 
subsampled Gaussian mechanism (SGM), and the privacy guarantee was derived under 
a security game in which the adversary is assumed to know the sampling ratio \(q\) 
and noise scale \(\sigma\), while the expected batch size remains unknown. 
A variety of differentially private learning methods, including 
DP-FETA~\cite{11023490}, PrivImage~\cite{li2024privimage}, 
DP-SUR~\cite{fu2024dpsur}, and DPSGD-HF~\cite{tramer2020differentially}, 
build on the DP-SGD training framework and adopt the same SGM-based privacy analysis.

Dong et al.~\cite{dong2019gaussian} proposed the \(f\)-DP framework, which characterizes privacy guarantees through a trade-off function between Type~I and Type~II errors. This operational view has been widely adopted in privacy auditing~\cite{xiang2025privacy,nasr2023tight,koskela2025auditing}. Building on this framework, the same research team later derived the state-of-the-art one-round privacy analysis for DP\mbox{-}SGD and proposed an ASGM variant~\cite{bu2020deep}. By modeling each per-round update as an SGM, they obtained an explicit trade-off function and established a tight one-round guarantee.

Recent investigations have highlighted inconsistencies between certain 
DP\mbox{-}SGD implementations and their claimed privacy guarantees.
Chua et al.~\cite{chua2024private} reported that some implementations use shuffled 
sampling while reporting SGM-based guarantees, and reanalyzed the privacy of 
shuffle-based DP-SGD under this mismatch. 
Lebeda et al.~\cite{lebeda2025avoiding} pointed out that the expected-averaging 
operation used in EASGM-based DP\mbox{-}SGD may cause the per-round update 
to violate the privacy guarantee of SGM, although no formal privacy analysis 
or auditing was provided.

Motivated by these observations, we reexamine the per-round \(f\)-DP guarantees of EASGM- and ASGM-based DP\mbox{-}SGD~\cite{bu2020deep,dong2019gaussian} under the same adversary assumption as~\cite{abadi2016deep}.

\noindent\textbf{Privacy Auditing of DP-SGD.}
Nasr et al.~\cite{nasr2021adversary} systematically introduced privacy auditing for DP\mbox{-}SGD by evaluating a hierarchy of adversaries with increasing capabilities. They showed that stronger adversaries reveal greater leakage, with the strongest matching the theoretical upper bound. Their later tight auditing work~\cite{nasr2023tight} proposed a canary-based tight auditing framework for DP\mbox{-}SGD. 
In white-box settings, they set the canary sampling rate to $q_c=1$, and their audits found no violations in SGM-based DP\mbox{-}SGD implementations.

Annamalai et al.~\cite{annamalai2024you} later formalized the tight auditing 
principle, arguing that auditing should be performed on small datasets. 
Steinke et al.~\cite{steinke2023privacy} proposed a one-run auditing method 
to reduce computational cost. Xiang et al.~\cite{xiang2025privacy} further 
extended this approach by formalizing the auditing process as a bit-transmission 
problem, characterizing the feasible and infeasible conditions for applying the 
technique, and proposing methods to tighten the resulting privacy bounds. 
Koskela et al.~\cite{10992373} adopted the same SGM-based formalization, 
auditing each per-round DP\mbox{-}SGD update using a density-estimation–based method. 

Despite extensive auditing of DP-SGD implementations in prior work, the per-round training procedures of EASGM- and ASGM-based DP-SGD have not been specifically examined, and the privacy behavior of these implementations has therefore remained unexplored.

Cebere et al.~\cite{cebere2026privacy} audited Opacus's FEASGM-based DP-SGD implementation on small synthetic datasets with low-dimensional outputs and found tight accounting for large datasets. In contrast, we show that real-world FEASGM implementations can exhibit substantially larger privacy leakage in high-dimensional output spaces; see Appendix~\ref{sec:Opacus_auditing}. We further derive a multi-round upper bound for FEASGM-based DP-SGD (Theorem~\ref{Thm:FEASGM_Upperbound_T_round}) and show that, for certain neighboring dataset pairs, its privacy guarantee can be significantly weaker than the SGM guarantee; see Appendix~\ref{sec:visualizing_FEASGM}.

\section{Conclusion and Limitations}
In this paper, we revisit and re-audit the per-round privacy guarantees of EASGM- and ASGM-based DP-SGD under \(f\)-DP. 
We show that these guarantees can weaken as the output dimension increases and can be weaker than the corresponding SGM guarantee. 
Since Opacus implementations use FEASGM, we further extend our analysis to FEASGM and visualize the multi-round privacy guarantees for both EASGM (Corollary \ref{cor:EASGM_upperbound_estimate_T_round}) and FEASGM (Theorem \ref{thm:upperbound_estimate_T_round}). 
This analysis is challenging because the relevant trade-off functions involve Gaussian distributions with unequal variances, whereas prior theoretical results mainly focus on the equal-variance case. 
We therefore derive new theoretical tools tailored to trade-off functions between Gaussian distributions with unequal variances.

Our analysis and auditing have the following limitations:

\noindent \textbf{No privacy-guarantee visualization for multi-round ASGM.}
For ASGM, we do not derive a closed-form characterization or visualization of the multi-round guarantee because its per-round upper bound contains an analytically intractable trade-off term. We leave it to future work. 

\noindent \textbf{Limited auditing scope.}
Our tight auditing method is computationally expensive in high-dimensional output spaces, and its cost grows rapidly with dataset size. We therefore focus on small neighboring datasets and parameter regimes where privacy leakage is expected to be large. Moreover, the current audit covers only single-step training and does not extend to multi-step settings. Scaling the audit to more practical settings remains future work.

\bibliographystyle{acm}
\bibliography{references}
\appendix
\section{Privacy Analysis of FEASGM}

To understand the privacy behavior of FEASGM, we analyze its privacy guarantee under the same setting as that used for the privacy analysis of EASGM and ASGM in Section~\ref{sec:analysis}. The derivation directly follows the analytical framework and techniques introduced there.

\begin{definition}\label{def:FEASGM}
    \textbf{Floor-based Expected Average Subsampled Gaussian Mechanism (FEASGM)\cite{cebere2026privacy}.} Following the same setting as Definition~\ref{def:SGM}, the output of FEASGM can be formalized as
$\frac{\sum_{i \in B} \bar{g}_i + \mathcal{N}(0, \sigma^2 C^2 \mathbb{I})}{\lfloor N\cdot q\rfloor}$,
where $N$ denotes the total number of data points in the dataset $D$.
 
\end{definition}

 \noindent \textbf{Analysis roadmap.} 
We begin by identifying the conditions under which the privacy behavior of FEASGM reduces to that of SGM and those under which it follows the EASGM-style regime. In particular, this distinction is determined by whether the normalization factors $\lfloor N\cdot q \rfloor$ and $\lfloor (N+1)\cdot q \rfloor$ are equal. When they are equal, the normalization step is identical on both neighboring datasets and can therefore be treated as post-processing, so the privacy guarantee reduces to that of SGM. Otherwise, the two normalization factors differ by exactly \(1\), and the privacy behavior follows the EASGM-style regime.

We therefore focus on the EASGM-style scenario. Building on the EASGM analysis in Sections~\ref{sec:guarantee_EASGM} and~\ref{sec:EASGM-SGM}, we derive the corresponding FEASGM results by replacing \(N\) in those theorems with \(K=\lfloor Nq \rfloor\). Finally, using the CLT-based result in Theorem~\ref{thm:estimation}, we establish and visualize an upper bound on the privacy guarantee of \(T\)-round FEASGM training in Theorem~\ref{thm:upperbound_estimate_T_round}.

\subsection{Formal Results.} 
\noindent \textbf{Overview.} We present the formal results as follows. We first establish the per-pair guarantee in Theorem~\ref{thm:per_pair_FEASGM}, and then derive the worst-case guarantee in Theorem~\ref{thm:guarantee_of_FEASGM}. Next, we construct an upper bound on the per-pair guarantee to facilitate visualization of the FEASGM per-pair guarantee. Finally, we extend this upper bound to the \(T\)-round setting and derive a corresponding upper bound for visualizing the privacy guarantee of \(T\)-round training in Theorem~\ref{thm:upperbound_estimate_T_round}.

\begin{theorem}[Per-pair Privacy Guarantee of FEASGM] \label{thm:per_pair_FEASGM}
Let \(D=\{d_1,d_2,\dots,d_N\}\) and \(D' = D \cup \{d_{N+1}\}\) be two neighboring datasets, and let \(q\) be a fixed sampling ratio.

If $\lfloor N\cdot q \rfloor = K$
and
$\lfloor (N+1)\cdot q \rfloor = K+1,$
then the privacy guarantee is given by
\begin{equation}\label{eq:FEASGM_High_per_pair}
\begin{aligned}
  f(\alpha(t,c))
  \;=\;&
  \sum_{B} P(B)\,[
      (1-q)\,f_{B}^{1}(\alpha_{B}^{1}(t,c))\\
      &\qquad\qquad
      + q\,f_{B}^{2}(\alpha_{B}^{2}(t,c))
    ].
\end{aligned}
\end{equation}

Here,
$f_{B}^{1} = T(P^{\otimes n},\, Q_{K}^{\otimes (n-1)} \otimes Q_{K}^{\mu_{B}^{1}})$
and
$f_{B}^{2} = T(P^{\otimes n},\, Q_{K}^{\otimes (n-1)} \otimes Q_{K}^{\mu_{B}^{2}})$,
where \(P\), \(Q_{K}\), and \(Q_{K}^{\mu}\) are defined in Table~\ref{tab:notation}, and
$\mu_{B}^{1}=\|-\sum_{i\in B}\bar g_i\|_2/((K+1)C\sigma)$,
$\mu_{B}^{2}=\|K\bar g_{N+1}-\sum_{i\in B}\bar g_i\|_2/((K+1)C\sigma)$.
\end{theorem}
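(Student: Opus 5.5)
The plan is to repeat the EASGM derivation of Section~\ref{sec:guarantee_EASGM} with the deterministic normalizer $Nq$ replaced by $\lfloor Nq\rfloor=K$ on $D$ and $\lfloor (N+1)q\rfloor=K+1$ on $D'$. Throughout, $S_B:=\sum_{i\in B}\bar g_i$.

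\textbf{Step 1: mixture form.} First I write the two output distributions as mixtures over the Poisson batch $B\subseteq D$. For $D$,
\[
\mathcal{M}(D)=\sum_B P(B)\,\frac{S_B+G}{K}.
\]
For $D'$, I additionally condition on whether $d_{N+1}$ is sampled:
\[
\mathcal{M}(D')=\sum_B P(B)\Bigl[(1-q)\,\frac{S_B+G}{K+1}+q\,\frac{S_B+\bar g_{N+1}+G}{K+1}\Bigr].
\]
Pairing each component of $\mathcal{M}(D)$ (weight $P(B)(1-q)$ or $P(B)q$, using $\sum$ of the two weights $=P(B)$) with the corresponding component of $\mathcal{M}(D')$ puts both in the form $P_w,Q_w$ with a common weight vector. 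Lemma~\ref{Lemma:Jensen} then gives $T(\mathcal{M}(D),\mathcal{M}(D'))(\alpha(t,c))\ge f(\alpha(t,c))$. Here $f$ is exactly the sum in Eq.~\eqref{eq:FEASGM_High_per_pair}, with
\[
f_B^1=T\Bigl(\tfrac{S_B+G}{K},\tfrac{S_B+G}{K+1}\Bigr),\qquad
f_B^2=T\Bigl(\tfrac{S_B+G}{K},\tfrac{S_B+\bar g_{N+1}+G}{K+1}\Bigr).
\]
This is why $f$ qualifies as a privacy guarantee.

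\textbf{Step 2: simplify the components.} I apply the bijection
\[
\mathrm{Proc}^B_j:x\mapsto \mathrm{Rot}^B_j\Bigl[\bigl(x-S_B/K\bigr)\tfrac{K}{C\sigma}\Bigr].
\]
Here $\mathrm{Rot}^B_j$ rotates the relevant mean-shift vector onto the last coordinate axis. Bijections leave trade-off functions unchanged \cite{dong2019gaussian}.

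\emph{Null component.} Under the null, $(S_B+G)/K$ maps to $\mathrm{Rot}(G/(C\sigma))\sim P^{\otimes n}$, because isotropic Gaussian noise is rotation invariant.

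\emph{Component without $d_{N+1}$.} We have
\[
\frac{S_B+G}{K+1}-\frac{S_B}{K}=-\frac{S_B}{K(K+1)}+\frac{G}{K+1}.
\]
After multiplying by $K/(C\sigma)$, the mean is $-S_B/((K+1)C\sigma)$ and the noise is $\frac{K}{K+1}\,G/(C\sigma)$, with per-coordinate variance $(K/(K+1))^2$. After rotation this is $Q_K^{\otimes(n-1)}\otimes Q_K^{\mu_B^1}$, where $\mu_B^1=\|S_B\|_2/((K+1)C\sigma)$.

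\emph{Component with $d_{N+1}$.} Similarly,
\[
\frac{S_B+\bar g_{N+1}+G}{K+1}-\frac{S_B}{K}=\frac{K\bar g_{N+1}-S_B}{K(K+1)}+\frac{G}{K+1}.
\]
This yields $Q_K^{\otimes(n-1)}\otimes Q_K^{\mu_B^2}$ with $\mu_B^2=\|K\bar g_{N+1}-S_B\|_2/((K+1)C\sigma)$.

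Substituting these forms into the Step 1 expression gives the claimed formula.

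\textbf{Main obstacles.} The computations above are routine. The points needing care are the following.
\begin{itemize}
\item[(i)] Keep the same rotation applied to both distributions within each pair. The null $P^{\otimes n}$ is invariant under any rotation, so one rotation per pair, chosen to align that pair's mean shift, suffices.
\item[(ii)] Handle the degenerate case $K=0$. Here the convention sets the normalizer on $D$ to $1$, so the scaling factor $K/(C\sigma)$ must be replaced by $1/(C\sigma)$. I would either treat this case separately, noting that the formula with $Q_0=\mathcal N(0,0)$ is degenerate, or restrict the statement to $K\ge1$.
\item[(iii)] Ensure the mixture indexing in Step 1 matches that of Lemma~\ref{Lemma:Jensen}. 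The empty batch needs no special treatment here, since the normalizers do not depend on $|B|$.
\end{itemize}
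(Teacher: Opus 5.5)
Your proposal is correct and is the paper's argument: the paper proves this in one line by substituting $K$ for $N$ in the EASGM derivation (the mixture decomposition with Lemma~\ref{Lemma:Jensen}, followed by the per-pair bijections $\mathrm{Proc}^B_j$), and you carry out that same derivation explicitly with normalizers $K$ and $K+1$. Your explicit version also shows that only the normalizers change, since the weights $P(B)$ still range over subsets of the $N$-point set $D$. Your caveat that the statement implicitly needs $K\ge 1$ is right too: $Q_K$ is only defined for positive integers, and if a divide-by-zero convention sets the normalizer to $1$, the $K=0$ case collapses to SGM.
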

\begin{proof}
   The theorem follows directly by substituting $K$ for $N$ in the $f$ defined in Eq.~\ref{eq:lower_bound_EASGM_high_dimentsion}.
\end{proof}

Following the same proof procedure as in Section~\ref{sec:analysis}, we can establish the privacy guarantee for any neighboring pair of datasets with $|D|=N$ and $|D'|=N+1$, as follows.

\begin{theorem}[Privacy guarantee of FEASGM] \label{thm:guarantee_of_FEASGM}
Let \(D=\{d_1,d_2,\dots,d_N\}\) and \(D' = D \cup \{d_{N+1}\}\) be two neighboring datasets under the add/remove notion of adjacency, and let \(q\) be a fixed sampling ratio. If
$\lfloor N\cdot q \rfloor = \lfloor (N+1)\cdot q \rfloor$,
then the privacy guarantee of FEASGM reduces to  SGM guarantee in \cite{bu2020deep} which is  $q\,G_{1/\sigma} + (1-q)\mathrm{ID}$,
where \(\mathrm{ID}\) and \(G_{1/\sigma}\) denote the trade-off functions defined in Table~\ref{tab:notation}. Otherwise, if $\lfloor N\cdot q \rfloor = K$
$and$
$\lfloor (N+1)\cdot q \rfloor = K+1,$
then the privacy guarantee is
\begin{equation}\label{eq:FEASGM_High}
\begin{aligned}
  \underline{f}(\alpha(t,c))
  \;=\;&
  \sum_{B} P(B)\,[
      (1-q)\,\underline{f_{B}^{1}}(\alpha_{B}^{1}(t,c))\\
      &\qquad\qquad
      + q\,\underline{f_{B}^{2}}(\alpha_{B}^{2}(t,c))
    ].
\end{aligned}
\end{equation}

\noindent Here,
$\underline{f_{B}^{1}} = T(P^{\otimes n},\, Q_{K}^{\otimes (n-1)} \otimes Q_{K}^{\overline{\mu}_{B}^{1}})$
and
$\underline{f_{B}^{2}} = T(P^{\otimes n},\, Q_{K}^{\otimes (n-1)} \otimes Q_{K}^{\overline{\mu}_{B}^{2}})$,
where \(P\), \(Q_{K}\), and \(Q_{K}^{\mu}\) are defined in Table~\ref{tab:notation}, and
$\overline{\mu}_{B}^{1}=\tfrac{|B|}{(K+1)\sigma},
\qquad
\overline{\mu}_{B}^{2}=\tfrac{K+|B|}{(K+1)\sigma}$.
\end{theorem}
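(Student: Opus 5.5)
The plan splits on whether the two floor values coincide.

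\emph{Case $\lfloor Nq\rfloor=\lfloor (N+1)q\rfloor=K$.} Both $\mathcal{M}(D)$ and $\mathcal{M}(D')$ equal the SGM output divided by the same constant $K$ (or by $1$ under the division-by-zero convention when $K=0$). This map is a fixed bijection that depends only on public parameters, not on which dataset was used. By post-processing, and in fact by invariance of trade-off functions under bijections~\cite{dong2019gaussian}, $T(\mathcal{M}(D),\mathcal{M}(D'))$ equals the SGM trade-off function for the same pair. Therefore the SGM guarantee $q\,G_{1/\sigma}+(1-q)\mathrm{ID}$ of~\cite{bu2020deep} transfers verbatim.

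\emph{Case $\lfloor Nq\rfloor=K$, $\lfloor (N+1)q\rfloor=K+1$.} Here $\mathcal{M}(D)$ is a mixture over $B$ of $(\sum_{i\in B}\bar g_i+G)/K$. $\mathcal{M}(D')$ is the analogous mixture with denominator $K+1$, plus the $q$-weighted branch containing $\bar g_{N+1}$. This is exactly the EASGM structure with $Nq$ replaced by $K$ and $(N+1)q$ replaced by $K+1$. The steps are as follows.
\begin{enumerate}
\item Apply Lemma~\ref{Lemma:Jensen} and Theorem~\ref{thm:per_pair_FEASGM} to get the per-pair lower bound $f$ with components $f_B^1,f_B^2$.
\item Apply the transformations $\mathrm{Proc}^B_j$, now with scaling $K/(C\sigma)$, to reduce these components to $T(P^{\otimes n},Q_K^{\otimes(n-1)}\otimes Q_K^{\mu_B^j})$. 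Here $Q_K$ has standard deviation $K/(K+1)<1$.
\item Remove the dependence on the actual gradients by bounding the shifts:
\[
\mu_B^1\le \frac{|B|C}{(K+1)C\sigma}=\overline{\mu}_B^1,\qquad
\mu_B^2\le \frac{(K+|B|)C}{(K+1)C\sigma}=\overline{\mu}_B^2.
\]
Both follow from the triangle inequality and $\|\bar g_i\|\le C$.
\end{enumerate}

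It remains to show that replacing $\mu_B^j$ by the larger $\overline{\mu}_B^j$ can only lower each component trade-off function. The components have the tensor form $T(P^{\otimes(n-1)},Q_K^{\otimes(n-1)})\otimes T(P,Q_K^{\mu})$ by Property~\ref{property:expression_composition}. Theorem~\ref{thm:bound}, with variance $(K/(K+1))^2<1$, gives $T(P,Q_K^{\overline\mu})\le T(P,Q_K^{\mu})$ whenever $\overline\mu\ge\mu\ge0$. Property~\ref{property:order} then lifts this inequality to the tensor product.

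Finally, apply Lemma~\ref{Lemma:Order_in_trade-off} once for each index $(B,j)$, replacing the components one at a time. This shows that the mixture-conditional trade-off function built from the $\underline{f_B^j}$ lower-bounds the one built from the $f_B^j$, and hence lower-bounds $T(\mathcal{M}(D),\mathcal{M}(D'))$. The right-hand side depends only on $N,q,\sigma,n,K$, so it is a guarantee for every pair with $|D|=N$.

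The main obstacle is the bookkeeping in this last step. The parametrisation $\alpha(t,c)$ changes when components are swapped, so I would phrase the comparison through the min-representation of Lemma~\ref{Lemma:min_equal} rather than through likelihood-ratio thresholds. Two edge cases also need a sanity check. When $K=0$ (normalizer $1$ on $D$, $1$ on $D'$), the pair actually falls in the first case despite the floors differing, so it must be handled by the convention. Degenerate $q$ should be treated the same way.
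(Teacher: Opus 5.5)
Your proposal is correct and follows the paper's proof. The equal-floor case is handled by treating the common normalisation as data-independent post-processing. The unequal-floor case is the EASGM argument of Theorem~\ref{theorem:EASGM_HIGH} with $N$ replaced by $K$; you simply unpack that argument (mean bounds via clipping, Theorem~\ref{thm:bound} with variance $(K/(K+1))^2$, Property~\ref{property:order}, Lemma~\ref{Lemma:Order_in_trade-off}), and your routing of $K=0$ to the first case via the division-by-zero convention is a sensible detail the paper leaves implicit.
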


\begin{proof}
If $\lfloor N\cdot q \rfloor = \lfloor (N+1)\cdot q \rfloor$,
then both neighboring datasets are normalized by the same factor. Therefore, the normalization step is identical across the neighboring pair and can be treated as post-processing, so the privacy guarantee reduces directly to that of SGM in~\cite{bu2020deep}. If instead
$\lfloor N\cdot q \rfloor \neq \lfloor (N+1)\cdot q \rfloor = K+1$,
then the neighboring datasets are normalized by two different factors, \(K\) and \(K+1\). In this case, the guarantee follows by substituting \(K\) into the EASGM guarantee derived in Theorem~\ref{theorem:EASGM_HIGH}.
\end{proof}

Similarly to estimate the EASGM guarantee using upperbound, we can derive a analytic upperbound $\overline{f}$ as follow. 

\begin{theorem}\label{thm:upperbound_FEASGM}
Let \(f\) be the privacy trade-off function defined in Eq.~\ref{eq:FEASGM_High_per_pair}, and let \(P\) and \(Q_K\) be the distributions defined in Table~\ref{tab:notation}, where \(K=\lfloor Nq \rfloor\). Then the following inequality holds:
\begin{equation}\label{eq:FEASGM_upper}
    \overline{f}
    =
    T(P^{\otimes (n-1)}, Q_K^{\otimes (n-1)})
    \ge f.
\end{equation}
\end{theorem}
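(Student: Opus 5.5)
We want to show $f\le T(P^{\otimes(n-1)},Q_K^{\otimes(n-1)})$, where $f$ is the mixture-form function in Eq.~\ref{eq:FEASGM_High_per_pair}. The plan mirrors the EASGM argument behind Theorem~\ref{thm:upperbound_EASGM}, with $N$ replaced by $K=\lfloor Nq\rfloor$. There are two reductions.

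\textbf{Step 1 (drop the mixture structure).} By Lemma~\ref{Lemma:min_equal}, $f$ coincides with the conditional trade-off function $T((P_I\mid I,I),(Q_I\mid I,I))$. Here the index $I$ ranges over pairs $(B,j)$ with $j\in\{1,2\}$, the weights are $P(B)(1-q)$ and $P(B)q$, and the components are $f_B^1,f_B^2$. Lemma~\ref{Lemma:Upper_bound} then gives, for every $\alpha\in[0,1]$,
\[
f(\alpha)\le \sum_{B}P(B)\bigl[(1-q)f_B^{1}(\alpha)+q\,f_B^{2}(\alpha)\bigr].
\]
Since the weights sum to one, it suffices to show that each component satisfies $f_B^{j}\le T(P^{\otimes(n-1)},Q_K^{\otimes(n-1)})$ pointwise.

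\textbf{Step 2 (bound each component).} Each component has the product form $f_B^{j}=T(P^{\otimes n},Q_K^{\otimes(n-1)}\otimes Q_K^{\mu_B^{j}})$. Splitting off the last coordinate and applying Property~\ref{property:expression_composition} gives
\[
f_B^{j}=T(P^{\otimes(n-1)},Q_K^{\otimes(n-1)})\otimes T(P,Q_K^{\mu_B^{j}}).
\]
Set $f=T(P,Q_K)$ and $f'=T(P,Q_K^{\mu_B^j})$ in Property~\ref{property:useful}, with $n-1$ in place of $n$. This yields
\[
T(P^{\otimes(n-1)},Q_K^{\otimes(n-1)})\ge T(P^{\otimes(n-1)},Q_K^{\otimes(n-1)})\otimes T(P,Q_K^{\mu_B^j})=f_B^{j}.
\]
Equivalently, we use $\mathrm{ID}\ge T(P,Q_K^{\mu_B^j})$ together with Property~\ref{property:order}. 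Averaging this bound over the weights from Step 1 completes the proof.

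\textbf{Main obstacle.} The step needing the most care is the first one. We must make sure the upper-bound lemma applies to $f$ as it is written in the parametrized form $f(\alpha(t,c))$. This means identifying $f$ with the minimum over $E_\alpha$ in Lemma~\ref{Lemma:min_equal}, and checking that $(\alpha_B^j)=(\alpha,\dots,\alpha)$ is a feasible point of $E_\alpha$; that feasibility is exactly what Lemma~\ref{Lemma:Upper_bound} encodes.

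\textbf{Degenerate cases.} We should confirm that nothing breaks when $K=0$, where the normalization convention replaces the factor by $1$. We should also handle $n=1$, where $P^{\otimes 0}$ is interpreted as trivial, so $\overline f=\mathrm{ID}$ and the bound holds automatically. Beyond this, the argument is the EASGM proof verbatim with $N\mapsto K$.
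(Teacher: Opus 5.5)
Your proof is correct and follows the paper's argument, which is the EASGM proof with $N$ replaced by $K$: each component $f_B^{j}$ is bounded by $\overline{f}$ via Property~\ref{property:useful}, and Lemma~\ref{Lemma:Upper_bound} passes from the mixture to the weighted sum. The only difference is the order; the paper first replaces every component by $\overline{f}$ using Lemma~\ref{Lemma:Order_in_trade-off} and then applies Lemma~\ref{Lemma:Upper_bound}, whereas you apply Lemma~\ref{Lemma:Upper_bound} to $f$ directly and then bound the weighted sum pointwise, which makes Lemma~\ref{Lemma:Order_in_trade-off} unnecessary.
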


\begin{proof}
The proof follows directly from that of Theorem~\ref{thm:upperbound_EASGM} by replacing \(N\) with \(K\).
\end{proof}
According to Theorem~\ref{theorem:EASGM_flaws}, when $n$ is large, $\overline{f}$ can approach $0$. This implies that the privacy guarantee of FEASGM can be significantly weaker than that of SGM for certain pairs of neighboring datasets satisfying $\lfloor N\cdot q \rfloor \neq \lfloor (N+1)\cdot q \rfloor$. The theorem \ref{Thm:FEASGM_Upperbound_T_round} extends Theorem~\ref{thm:upperbound_FEASGM} to the $T$-round setting and gives the upper bound $\overline{f}^{\otimes T}$.

\begin{theorem}[Comparison composition theorem]\label{thm:comparison_theorem}
Let $\mathcal M_1 : X \times I_1 \to Y_1$ be the first mechanism, and let $\mathcal M_2 : X \times Y_1 \times I_2 \to Y_2$ be the second mechanism, where $I_1$ and $I_2$ are independent random variables following distribution $w$. Define the joint mechanism $\mathcal M : X \to Y_1 \times Y_2$ by $\mathcal M(D) = (Y_1, Y_2)$, where $Y_1 = \mathcal M_1(D, I_1)$ and $Y_2 = \mathcal M_2(D, Y_1, I_2)$. Define the transparent mechanism $\mathcal M_I : X \to Y_1 \times I_1 \times Y_2 \times I_2$ by $\mathcal M_I(D) = (Y_1, I_1, Y_2, I_2)$. Also define $\mathcal M_{1,I_1}(D) = (Y_1, I_1)$ and $\mathcal M_{2,I_2}(D, y_1) = (\mathcal M_2(D, y_1, I_2), I_2)$.

Then, for any neighboring datasets $D$ and $D'$, we have $T(\mathcal M(D), \mathcal M(D')) \ge T(\mathcal M_I(D), \mathcal M_I(D'))$.

Moreover, let $g_1$ and $g_2$ be two trade-off functions. If $g_1 \ge T(\mathcal M_{1,I_1}(D), \mathcal M_{1,I_1}(D'))$ and, for every fixed $y_1$, $g_2 \ge T(\mathcal M_{2,I_2}(D, y_1), \mathcal M_{2,I_2}(D', y_1))$, then $g_1 \otimes g_2 \ge T(\mathcal M_I(D), \mathcal M_I(D'))$.
\end{theorem}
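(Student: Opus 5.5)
The argument has two parts, one for each claim in Theorem~\ref{thm:comparison_theorem}. The first inequality is a post-processing argument. The second is a hypothesis-testing composition argument, modeled on the proof of the adaptive composition theorem in~\cite{dong2019gaussian}, with the auxiliary indices carried inside the released output.

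\textbf{Step 1: first inequality.} The map $\pi:(y_1,i_1,y_2,i_2)\mapsto(y_1,y_2)$ is a deterministic projection. It sends $\mathcal M_I(D)$ to $\mathcal M(D)$ and $\mathcal M_I(D')$ to $\mathcal M(D')$, because $I_1,I_2$ have the same distribution $w$ under both datasets and are independent of the data. Any rejection rule $\mathcal R$ for $(\mathcal M(D),\mathcal M(D'))$ therefore gives the rule $\mathcal R\circ\pi$ for $(\mathcal M_I(D),\mathcal M_I(D'))$, and this rule has the same type I and type II errors. Taking infima over rules yields $T(\mathcal M(D),\mathcal M(D'))\ge T(\mathcal M_I(D),\mathcal M_I(D'))$. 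This is the same revealing-$I$ principle used for Lemma~\ref{Lemma:Jensen}.

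\textbf{Step 2: factorize $\mathcal M_I$.} Write $\mathcal M_I(D)=(Z_1,Z_2)$, where $Z_1=(Y_1,I_1)\sim\mathcal M_{1,I_1}(D)$. Since $I_2$ is independent of $(I_1,Y_1)$, conditionally on $Z_1=(y_1,i_1)$ we have
\[
Z_2=(Y_2,I_2)\sim\mathcal M_{2,I_2}(D,y_1).
\]
This conditional law depends on $z_1$ only through $y_1$. Hence $\mathcal M_I$ is exactly the adaptive composition of $\mathcal M_{1,I_1}$ with the family $z_1\mapsto\mathcal M_{2,I_2}(\cdot,y_1)$. The hypotheses give
\[
T(\mathcal M_{1,I_1}(D),\mathcal M_{1,I_1}(D'))\le g_1,
\qquad
T(\mathcal M_{2,I_2}(D,y_1),\mathcal M_{2,I_2}(D',y_1))\le g_2 \text{ for every } y_1,
\]
and we want to conclude
\[
T(\mathcal M_I(D),\mathcal M_I(D'))\le g_1\otimes g_2.
\]

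\textbf{Step 3: the main obstacle.} The standard adaptive composition theorem (Theorem~4 in~\cite{dong2019gaussian}) runs in the opposite direction: it turns \emph{lower} bounds on the components into a lower bound on the composition. Here we need an upper bound on $T$, which says the adversary is at least as strong as one facing $g_1\otimes g_2$. I would obtain it with an explicit test.

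\begin{enumerate}
\item \emph{Component tests.} Write $g_1=T(A,A')$ and $g_2=T(C,C')$. Fix an optimal randomized test $\phi_1$ for $(\mathcal M_{1,I_1}(D),\mathcal M_{1,I_1}(D'))$ at any type I level. Its type II error is at most $g_1$ at that level, since $T\le g_1$ pointwise. Similarly, for each $y_1$ fix $\phi_2^{y_1}$ for the second stage, with type II error at most $g_2$ at that level.
\item \emph{Transfer to $(A,A')$ and $(C,C')$.} The trade-off function of a pair $(U,V)$ is determined by the law of its likelihood ratio $dV/dU$ under $U$. The goal is to show that every test for $A\otimes C$ versus $A'\otimes C'$ can be matched by a test on $\mathcal M_I$ whose errors are no larger. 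I would realize this with a Blackwell-type garbling. Because $T(\mathcal M_{1,I_1}(D),\mathcal M_{1,I_1}(D'))\le g_1=T(A,A')$, Blackwell's theorem (Theorem~2.10 in~\cite{dong2019gaussian}) supplies a Markov kernel $K_1$ with $K_1\mathcal M_{1,I_1}(D)=A$ and $K_1\mathcal M_{1,I_1}(D')=A'$. For each $y_1$ it likewise supplies a kernel $K_2^{y_1}$ mapping $\mathcal M_{2,I_2}(D,y_1)$ to $C$ and $\mathcal M_{2,I_2}(D',y_1)$ to $C'$.
\item \emph{Product kernel.} Apply $K_1$ to $Z_1$, and apply $K_2^{y_1}$ to $Z_2$ using the observed $y_1$. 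The combined kernel maps $\mathcal M_I(D)\mapsto A\otimes C$ and $\mathcal M_I(D')\mapsto A'\otimes C'$. The output is a product because the image $C$ does not depend on $y_1$.
\item \emph{Conclude.} By post-processing,
\[
g_1\otimes g_2=T(A\otimes C,A'\otimes C')\ge T(\mathcal M_I(D),\mathcal M_I(D')).
\]
\end{enumerate}

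The delicate point is the measurability of $y_1\mapsto K_2^{y_1}$. To handle it, I would construct the kernels canonically from the likelihood-ratio (Neyman--Pearson) quantile coupling, so that they depend measurably on $y_1$.
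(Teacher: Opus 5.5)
Your proposal is correct and follows essentially the same route as the paper. The first claim is the same projection (post-processing) argument. For the second, the paper invokes Lemma~12 of \cite{dong2019gaussian} to replace $\mathcal M_{1,I_1}$ by a pair realizing $g_1$, and then Lemma~11 to replace the second stage, for every $y_1$, by a fixed pair realizing $g_2$. Both lemmas rest on exactly the Blackwell garblings $K_1$ and $K_2^{y_1}$ that you compose into a single product kernel, and the measurability of $y_1\mapsto K_2^{y_1}$ that you flag is a point the paper inherits implicitly from those lemmas.
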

\begin{proof}
The first inequality follows directly from the post-processing property. Indeed, $\mathcal M$ is obtained from $\mathcal M_I$ by hiding the released indices $I_1$ and $I_2$, and hence $T(\mathcal M(D), \mathcal M(D')) \ge T(\mathcal M_I(D), \mathcal M_I(D'))$.

We now prove the second claim using Lemma 11 and Lemma 12 in~\cite{dong2019gaussian}. Let $A = \mathcal M_{1,I_1}(D)$ and $A' = \mathcal M_{1,I_1}(D')$. By the definition of trade-off functions, choose distributions $P,Q,P',Q'$ such that $g_1 = T(P,Q)$ and $g_2 = T(P',Q')$.

Since $g_1 \ge T(A, A')$, Lemma 12 gives $g_1 \otimes g_2 = T(P \times P', Q \times Q') \ge T(A \times P', A' \times Q')$. Therefore, it remains to show that $T(A \times P', A' \times Q') \ge T(\mathcal M_I(D), \mathcal M_I(D'))$.

To this end, let the input space be the common measurable space of $A$ and $A'$. For each input $a = (y_1,i_1)$, define four randomized algorithms as follows: $K_1(a) = P'$, $K_1'(a) = Q'$, $K_2(a) = \mathcal M_{2,I_2}(D, y_1)$, and $K_2'(a) = \mathcal M_{2,I_2}(D', y_1)$. Note that $K_1$ and $K_1'$ ignore the input and simply output random variables distributed as $P'$ and $Q'$, respectively. Hence, for every input $a = (y_1,i_1)$, we have $T(K_1(a), K_1'(a)) = T(P',Q') = g_2$. By assumption, for every $y_1$, we also have $g_2 \ge T(\mathcal M_{2,I_2}(D, y_1), \mathcal M_{2,I_2}(D', y_1)) = T(K_2(a), K_2'(a))$. Therefore, $T(K_1(a), K_1'(a)) \ge T(K_2(a), K_2'(a))$ holds for every input $a$.

Applying Lemma 11 with input distributions $A$ and $A'$, we obtain $T(A \times P', A' \times Q') \ge T\bigl((A, K_2(A)), (A', K_2'(A'))\bigr)$. By construction, $(A, K_2(A))$ is exactly $\mathcal M_I(D)$, and $(A', K_2'(A'))$ is exactly $\mathcal M_I(D')$. Hence, $T(A \times P', A' \times Q') \ge T(\mathcal M_I(D), \mathcal M_I(D'))$.

Combining this inequality with the previous bound from Lemma 12 yields $g_1 \otimes g_2 \ge T(\mathcal M_I(D), \mathcal M_I(D'))$. This completes the proof.
\end{proof}

\begin{theorem}\label{Thm:FEASGM_Upperbound_T_round}
Let $\overline{f}$ be as defined in Theorem~\ref{thm:upperbound_FEASGM}. Let $\mathcal M_I$ denote the transparent per-round mechanism whose trade-off function is upper bounded by $\overline{f}$, and let $\mathcal M$ denote the corresponding FEASGM mechanism obtained by hiding the additional released information $I$. For each $T \in \mathbb Z_{>0}$, let $\mathcal M_I^{\otimes T}$ and $\mathcal M^{\otimes T}$ denote the $T$-round compositions of $\mathcal M_I$ and $\mathcal M$, respectively. Then, for any neighboring datasets $D$ and $D'$, we have
\[
T(\mathcal M^{\otimes T}(D), \mathcal M^{\otimes T}(D'))
\ge
T(\mathcal M_I^{\otimes T}(D), \mathcal M_I^{\otimes T}(D'))
\]
and
\[
\overline{f}^{\otimes T}
\ge
T(\mathcal M_I^{\otimes T}(D), \mathcal M_I^{\otimes T}(D')).
\]
We refer to \(T(\mathcal M_I^{\otimes T}(D), \mathcal M_I^{\otimes T}(D'))\) as the transparent per-pair \(T\)-round privacy guarantee of FEASGM. Consequently, \(\overline{f}^{\otimes T}\) is an upper bound on this transparent per-pair guarantee.
\end{theorem}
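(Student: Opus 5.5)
The plan is to derive both inequalities from the Comparison composition theorem (Theorem~\ref{thm:comparison_theorem}), applied inductively over the number of rounds $T$.

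\textbf{Step 1: the first inequality.} This is pure post-processing. The $T$-round transparent mechanism $\mathcal M_I^{\otimes T}(D)$ outputs $(Y_1,I_1,\dots,Y_T,I_T)$. The map that deletes the indices $I_1,\dots,I_T$ sends it to the output of $\mathcal M^{\otimes T}(D)$. This works because, in each round, the FEASGM output $Y_t$ has the same conditional law given $Y_{<t}$ whether or not the auxiliary index (the sampled batch $B$ together with the inclusion bit of $d_{N+1}$) is revealed. Since trade-off functions can only increase under a common post-processing map, the first inequality follows. This is the same argument as the first claim of Theorem~\ref{thm:comparison_theorem}, iterated.

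\textbf{Step 2: the second inequality, by induction on $T$.}
\begin{itemize}
\item \emph{Base case $T=1$.} The transparent one-round guarantee is $T((P_I\mid I,I),(Q_I\mid I,I))$. By Theorem~\ref{thm:per_pair_FEASGM} this equals $f$ in Eq.~\ref{eq:FEASGM_High_per_pair}, via the equality in Lemma~\ref{Lemma:Jensen}/Lemma~\ref{Lemma:min_equal}. Theorem~\ref{thm:upperbound_FEASGM} then gives $\overline f\ge f$.
\item \emph{Inductive step.} Set $\mathcal M_1:=\mathcal M_I^{\otimes (T-1)}$, viewed as releasing $(Y_{<T},I_{<T})$, and let $\mathcal M_2$ be round $T$, which is adaptive in $y_1=(Y_{<T})$ through the current model parameters. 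Take $g_1=\overline f^{\otimes(T-1)}$, which satisfies $g_1\ge T(\mathcal M_{1,I_1}(D),\mathcal M_{1,I_1}(D'))$ by the induction hypothesis, and $g_2=\overline f$. The second part of Theorem~\ref{thm:comparison_theorem} then yields $\overline f^{\otimes T}\ge T(\mathcal M_I^{\otimes T}(D),\mathcal M_I^{\otimes T}(D'))$.
\end{itemize}

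\textbf{Main obstacle: the hypothesis on $g_2$.} Theorem~\ref{thm:comparison_theorem} requires $g_2\ge T(\mathcal M_{2,I_2}(D,y_1),\mathcal M_{2,I_2}(D',y_1))$ for \emph{every} fixed history $y_1$. Conditioned on $y_1$, the round-$T$ gradients $\bar g_i$ are arbitrary vectors of norm at most $C$. So I need the per-round upper bound $\overline f=T(P^{\otimes(n-1)},Q_K^{\otimes(n-1)})$ to dominate the transparent trade-off for arbitrary clipped gradients. I would argue this exactly as in the proof of Theorem~\ref{thm:upperbound_EASGM}, with $N$ replaced by $K$. After the $\mathrm{Proc}^B_j$ transformation, each component is
\[
f_B^j=T\bigl(P^{\otimes n},\,Q_K^{\otimes(n-1)}\otimes Q_K^{\mu_B^j}\bigr).
\]
By Property~\ref{property:useful} this is at most $T(P^{\otimes(n-1)},Q_K^{\otimes(n-1)})$, regardless of $\mu_B^j$. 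Lemma~\ref{Lemma:Upper_bound} then bounds the mixture-level transparent guarantee by the weighted average of these components, which is at most $\overline f$. Since none of these bounds depends on the gradient values, they hold uniformly in $y_1$, and the hypothesis is verified.

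One residual check is the degenerate case $\lfloor Nq\rfloor=\lfloor(N+1)q\rfloor$. Here the round reduces to SGM, whose transparent guarantee is $q\,G_{1/\sigma}+(1-q)\mathrm{ID}$. The theorem is stated for the regime $K\neq K+1$, so I would simply restrict to that regime, as Theorem~\ref{thm:upperbound_FEASGM} does.
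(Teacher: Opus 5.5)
Your proposal is correct and follows essentially the same route as the paper. The first inequality comes from post-processing (deleting the released indices), and the second from induction on $T$ via Theorem~\ref{thm:comparison_theorem} with $g_1=\overline{f}^{\otimes (T-1)}$ and $g_2=\overline{f}$. The only difference is that you explicitly check the hypothesis on $g_2$ uniformly over the history: via Property~\ref{property:useful} and Lemma~\ref{Lemma:Upper_bound}, $\overline{f}$ depends only on $n$ and $K$, not on the clipped gradients. The paper simply asserts this by citing Theorem~\ref{thm:upperbound_FEASGM}.
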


\begin{proof}
The first inequality again follows from the post-processing property. Indeed, $\mathcal M^{\otimes T}$ is obtained from $\mathcal M_I^{\otimes T}$ by hiding the additional information $I$ released in each round. Therefore, $T(\mathcal M^{\otimes T}(D), \mathcal M^{\otimes T}(D')) \ge T(\mathcal M_I^{\otimes T}(D), \mathcal M_I^{\otimes T}(D'))$.

We now prove $\overline{f}^{\otimes T} \ge T(\mathcal M_I^{\otimes T}(D), \mathcal M_I^{\otimes T}(D'))$ by induction on $T$. For the base case $T=1$, the claim follows directly from the assumption that $\overline{f} \ge T(\mathcal M_I(D), \mathcal M_I(D'))$. Assume now that the claim holds for some $T \ge 1$, namely, $\overline{f}^{\otimes T} \ge T(\mathcal M_I^{\otimes T}(D), \mathcal M_I^{\otimes T}(D'))$. Consider the $(T+1)$-round composition. View $\mathcal M_I^{\otimes (T+1)}$ as a two-stage mechanism: the first stage consists of the first $T$ transparent rounds, and the second stage consists of the $(T+1)$-st transparent round. The second stage may depend on the previous transparent transcript, but by Theorem~\ref{thm:upperbound_FEASGM}, for every fixed previous transcript, its trade-off function is still upper bounded by $\overline{f}$. Therefore, the first stage is upper bounded by $\overline{f}^{\otimes T}$ by the induction hypothesis, and the second stage is upper bounded by $\overline{f}$ uniformly over every fixed previous transcript.

Applying Theorem~\ref{thm:comparison_theorem} with $g_1 = \overline{f}^{\otimes T}$ and $g_2 = \overline{f}$, we obtain $\overline{f}^{\otimes (T+1)} = (\overline{f}^{\otimes T}) \otimes \overline{f} \ge T(\mathcal M_I^{\otimes (T+1)}(D), \mathcal M_I^{\otimes (T+1)}(D'))$.

Thus, by induction, $\overline{f}^{\otimes T} \ge T(\mathcal M_I^{\otimes T}(D), \mathcal M_I^{\otimes T}(D'))$ holds for every $T \in \mathbb Z_{>0}$. This completes the proof.
\end{proof}
\subsection{Visualizing the Privacy Guarantee of FEASGM}
Using the same notation as in Theorem~\ref{Thm:FEASGM_Upperbound_T_round}, let \(T(\mathcal M^{\otimes T}(D), \mathcal M^{\otimes T}(D'))\) denote the per-pair privacy leakage of \(T\)-round training, let \(T(\mathcal M_I^{\otimes T}(D), \mathcal M_I^{\otimes T}(D'))\) denote the corresponding \(T\)-round per-pair privacy guarantee, and let \(\overline{f}^{\otimes T}\) denote the upper bound on the \(T\)-round guarantee. Following the same approach used to visualize the privacy guarantee of EASGM in Section~\ref{sec:upperbound_visulization}, we estimate \(\overline{f}^{\otimes T}\). This estimate upper bounds \(T(\mathcal M_I^{\otimes T}(D), \mathcal M_I^{\otimes T}(D'))\), thereby enabling us to characterize the privacy guarantee of \(T\)-round FEASGM-based DP-SGD training. The following theorem formalizes this result.

\begin{theorem}\label{thm:upperbound_estimate_T_round}
Let \(f_{\ast} = T(P, Q_{K})\), where \(P\) and \(Q_{K}\) are defined in Table~\ref{tab:notation}, and \(K=\lfloor Nq \rfloor\). Let \(\overline{f_{\ast}} = \max\{f_{\ast}, f_{\ast}^{-1}\}\). Then, for \(\overline{f}\) defined in Theorem~\ref{Thm:FEASGM_Upperbound_T_round}, the following bound holds for all \(\alpha \in [\gamma,\, 1-\gamma]\):
\[
\overline{f}^{\otimes T}(\alpha) \le G_{\mu}(\alpha - \gamma)+\gamma.
\]
Here, \(\mu\) and \(\gamma\) are derived from the composition of \(T\cdot (n-1)\) independent copies of \(\overline{f_{\ast}}\).
\end{theorem}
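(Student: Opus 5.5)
The plan mirrors the proof of Theorem~\ref{thm:upperbound_estimate}, applied to the $T$-fold composition. By Theorem~\ref{thm:upperbound_FEASGM}, $\overline{f} = T(P^{\otimes (n-1)}, Q_K^{\otimes (n-1)})$. The first step is to write this product form explicitly using Property~\ref{property:expression_composition}:
\[
\overline{f} = f_{\ast}^{\otimes (n-1)}, \qquad
\overline{f}^{\otimes T} = \bigl(f_{\ast}^{\otimes (n-1)}\bigr)^{\otimes T} = f_{\ast}^{\otimes T(n-1)}.
\]
The last equality uses associativity of $\otimes$. This holds because tensoring trade-off functions corresponds to taking product distributions, and the product of $T$ copies of $P^{\otimes(n-1)}$ is $P^{\otimes T(n-1)}$; the same holds for $Q_K$.

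The second step passes to the symmetrized function. By Proposition~2.2 of~\cite{dong2019gaussian}, $\overline{f_{\ast}} = \max\{f_{\ast}, f_{\ast}^{-1}\}$ is a symmetric trade-off function with $\overline{f_{\ast}} \ge f_{\ast}$. I then apply Property~\ref{property:order} repeatedly, replacing one factor at a time. This gives the pointwise bound
\[
f_{\ast}^{\otimes T(n-1)} \le \overline{f_{\ast}}^{\otimes T(n-1)}.
\]
To keep this rigorous, I realize $\overline{f_{\ast}}$ as $T(P',Q')$ for some pair of distributions, which is possible for any trade-off function. Then at stage $k$ the factors already replaced and the factors still to be replaced are both products of independent distributions, so Property~\ref{property:order} applies directly.

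The third step invokes Theorem~\ref{thm:estimation} on $T(n-1)$ identical symmetric factors $f_i = \overline{f_{\ast}}$. Its hypotheses are symmetry and $\kappa_3(\overline{f_{\ast}}) < \infty$. The resulting $\mu$ and $\gamma$ are exactly those defined in the statement, with $\|\mathbf{kl}\|_1 = T(n-1)\,\mathrm{kl}(\overline{f_{\ast}})$ and analogous expressions for $\kappa_2$ and $\overline{\kappa}_3$. Assuming $\gamma < 1/2$, which is implicit in the statement since $[\gamma, 1-\gamma]$ must be nonempty, the upper half of Theorem~\ref{thm:estimation} gives
\[
\overline{f_{\ast}}^{\otimes T(n-1)}(\alpha) \le G_\mu(\alpha-\gamma)+\gamma \quad \text{for } \alpha \in [\gamma, 1-\gamma].
\]
Chaining the three steps yields the claim.

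The main obstacle is verifying the moment condition $\kappa_3(\overline{f_{\ast}}) < \infty$, together with finiteness of $\mathrm{kl}$ and $\kappa_2$. The complication is that $P = \mathcal{N}(0,1)$ and $Q_K = \mathcal{N}(0, (K/(K+1))^2)$ have unequal variances, so the log-likelihood ratio is quadratic in $x$ rather than linear. I would handle this exactly as in Appendix~\ref{App:proof:estimationg}. The log-likelihood ratio is a quadratic polynomial in a Gaussian variable, so all of its moments are finite. The functionals of $\overline{f_{\ast}}$ are built from those of $f_{\ast}$ and $f_{\ast}^{-1}$ on the two pieces of the max, so they are finite as well.

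A small edge case remains. If $K = 0$, then $Q_K$ is degenerate. I would exclude this case, or treat it via the convention of the analytical setting, noting that the statement implicitly requires $K \ge 1$.
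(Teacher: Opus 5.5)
Your proposal is correct and takes essentially the same route as the paper. The paper's proof reuses the argument of Theorem~\ref{thm:upperbound_estimate} with $N$ replaced by $K$: write $\overline{f}^{\otimes T}=f_{\ast}^{\otimes T(n-1)}$, bound it above by $\overline{f_{\ast}}^{\otimes T(n-1)}$ via Property~\ref{property:order}, and apply Theorem~\ref{thm:estimation} once $\kappa_3(\overline{f_{\ast}})<\infty$ is checked as in Appendix~\ref{App:proof:estimationg}. Your version also spells out points the paper leaves tacit, namely the $T(n-1)$-fold identification, the symmetry of $\overline{f_{\ast}}$, and the assumptions $\gamma<1/2$ and $K\ge 1$.
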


\begin{proof}
The proof follows the same argument as that of Theorem~\ref{thm:upperbound_estimate} in Appendix~\ref{App:proof:estimationg}, with \(N\) replaced by \(K=\lfloor Nq \rfloor\).
\end{proof}

\subsection{Auditing Opacus's FEASGM Implementation}\label{sec:Opacus_auditing}

We now audit the FEASGM implementation. Our goal is to examine the weak regime and show that, when the output dimension is high, the privacy leakage of FEASGM can significantly exceed the SGM guarantee. We follow the same experimental setting used for auditing the EASGM and ASGM implementations in Section~\ref{sec:Experiment_set_up}.

\noindent \textbf{Audited algorithm.} 
We use the implementation of DP-FETA, as it is based on Opacus v1.0 (see \cite{dpfeat_opacus_version}).

\noindent \textbf{Parameter setting.} 
We use the same setting as that for EASGM in Section~\ref{sec:Experiment_set_up}. We focus on the weak regime and set $q=0.5$, $|D|=3$, $|D'|=4$, $\sigma=10$, and $C=0.1$. Under this setting, the normalization factors differ across the neighboring datasets.

\noindent \textbf{Dataset.} 
We use the MNIST dataset for auditing. Where we randomly select $4$ data point to constitute the neighboring datasets.

\noindent \textbf{Baseline.} 
Following Section~\ref{sec:Experiment_set_up}, we compare our results with the privacy guarantee in~\cite{bu2020deep}.

\noindent \textbf{Results.} 
Fig.~\ref{fig:FEASGM_IMPLEMENTATION} shows that the audited privacy leakage of the FEASGM implementation exceeds the level implied by the claimed SGM privacy guarantee.
\subsection{Visualizing the Privacy Guarantee of Opacus's FEASGM Implementation}\label{sec:visualizing_FEASGM}
We now visualize the privacy guarantee of FEASGM in the real-world setting with a low sample rate and large datasets. The case $\lfloor N \cdot q\rfloor=\lfloor (N+1) \cdot q\rfloor$ is not our focus, as it follows the SGM guarantee. Instead, we focus on the case $\lfloor N \cdot q\rfloor\neq \lfloor (N+1) \cdot q\rfloor$.
\noindent \textbf{Setting.} We follow the setting in Section~\ref{sec:RQ3}. Specifically, we set the number of training rounds to $T=200$ and consider neighboring datasets with $|D^-|=|D|=140199$ and $|D'|=140200$. We evaluate two model sizes: a 26,010-parameter model used by DP-SGD-HF and DP-SUR on MNIST, and a 1,485,217-parameter model used by DP-FETA and PrivImage on MNIST. We set the sampling ratio to $q=0.005$, since the Opacus FEASGM implementation uses $q=1/T$. We set $\sigma=10$ and use Theorem~\ref{thm:upperbound_estimate_T_round} to estimate the upper bound of the privacy guarantee.

\noindent \textbf{Baseline.} To compare the multi-round privacy guarantee of FEASGM under our estimated guarantee with that under the SGM privacy guarantee, we adopt Opacus's RDP privacy accountant \cite{dpfeat_rdp_accountant}, which is also used by DP-FETA in \cite{11023490}. 

\noindent \textbf{Results.} The results are shown in Fig. \ref{fig:FEASGM_Theoretical}. It shows that FEASGM's guarantee in multi-round training can be weaker than the claimed guarantee.

\begin{figure}[htbp]
    \centering
\includegraphics[width=0.4\textwidth]{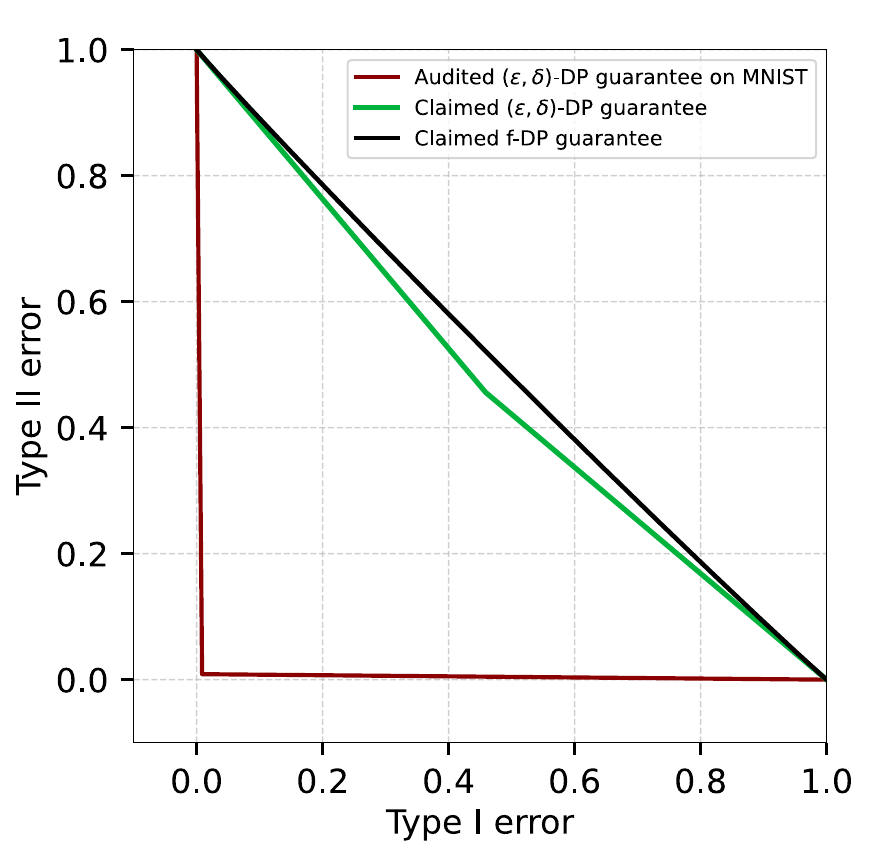}
 \caption{Audited privacy leakage of the real-world Opacus FEASGM implementation in DP-FETA, computed using the $(\epsilon,\delta)$-trade-off function with $\delta = 10^{-5}$, and compared with the claimed privacy guarantees.}
    \label{fig:FEASGM_IMPLEMENTATION}
\end{figure}

\begin{figure*}[htbp]
    \centering
\includegraphics[width=0.8\textwidth]{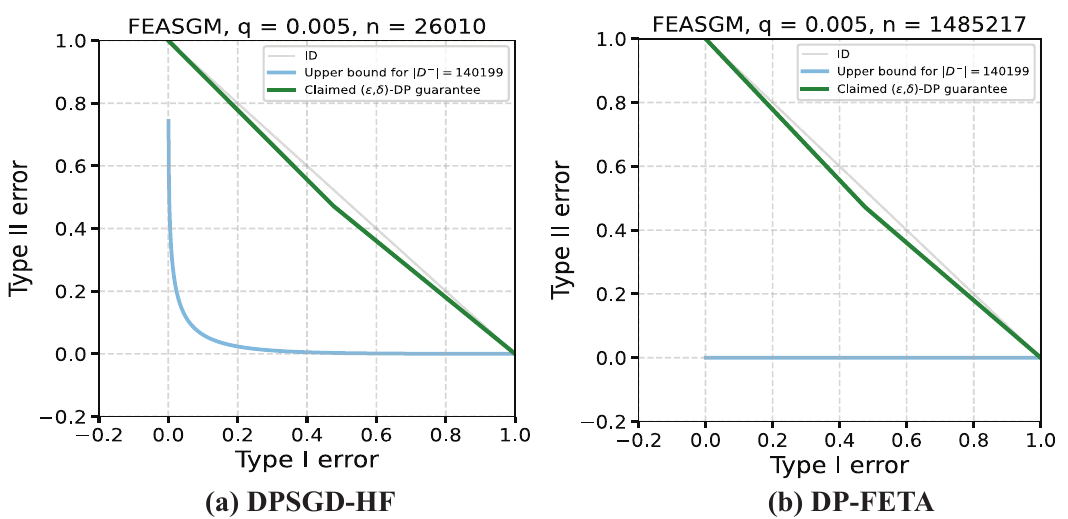}
    \caption{Theoretical upper bounds on the 200-round privacy guarantees of the real-world Opacus FEASGM implementations, showing that each privacy guarantee lies below its corresponding upper bound.
}
    \label{fig:FEASGM_Theoretical}
\end{figure*}

\begin{table*}[t]
\centering
\small
\setlength{\tabcolsep}{6pt}
\renewcommand{\arraystretch}{1.15}
\caption{Overview of the audited \texttt{Opacus} versions, focusing on \texttt{privacy\_engine.py}, together with the associated commit hashes for reproducibility and re-auditing. The six columns report: (1) the audited \texttt{Opacus} version(s); (2) the GitHub link to the corresponding \texttt{privacy\_engine.py} file; (3) the associated commit hash; (4) the normalization operation used under Poisson sampling, where \textbf{A} denotes averaging over the realized batch size $|B|$, \textbf{EA} denotes averaging over the expected batch size $N\!\cdot\!q$, and \textbf{FEA} denotes averaging over $\lfloor N\!\cdot\!q \rfloor$; (5) whether the version provides built-in support for Poisson sampling (e.g., through a dedicated interface such as \texttt{poisson\_sampling=True}, as in v1.5.4); and (6) the overall model-update mechanism used when Poisson sampling is enabled and the \texttt{loss\_reduction} is set to \texttt{mean}. Here, \textbf{Not supported} means that Poisson-based data selection cannot be directly applied because the implementation imposes an upper bound on the batch size, whereas Poisson sampling produces variable-size batches that may exceed this bound. \textbf{ASGM}, \textbf{EASGM}, and \textbf{FEASGM} are defined in Definitions~\ref{def:ASGM},~\ref{def:EASGM}, and~\ref{def:FEASGM}, respectively. Opacus also supports standard SGM. In v1.5.4, when \texttt{loss\_reduction="sum"}, the gradient sum is not normalized before the parameter update, making this code path consistent with the standard SGM update rule~\cite{opacus_sgm_implementation_v154_line488}.}
\label{tab:audited_libraries}
\begin{tabular}{p{2cm}p{2cm}p{2cm}p{2cm}p{4cm}p{2cm}}
\toprule
\textbf{Version} & \textbf{Link} & \textbf{Commit Hash} & \textbf{Normalization} & \textbf{Built-in Poisson update logic} & \textbf{Mechanism} \\
\midrule
v0.9.0-v0.9.1                    & \href{https://github.com/meta-pytorch/opacus/blame/v0.9.1/opacus/privacy_engine.py#L241}{GitHub} & 33a7ca4 & A &  $\times$ & Not supported\\

   v0.10.0-v0.10.1             & \href{https://github.com/meta-pytorch/opacus/blame/v0.10.0/opacus/privacy_engine.py#L290}{GitHub} & 9d76ea5 & A & $\times$ & Not supported\\
   v0.11.0             & \href{https://github.com/meta-pytorch/opacus/blame/v0.11.0/opacus/privacy_engine.py#L301}{GitHub} & 9d76ea5 & A & $\times$ & Not supported\\
   v0.12.0             & \href{https://github.com/meta-pytorch/opacus/blame/v0.12.0/opacus/privacy_engine.py#L365}{GitHub} & 9d76ea5 & A & $\times$ & ASGM\\
    v0.13.0             & \href{https://github.com/meta-pytorch/opacus/blame/v0.13.0/opacus/privacy_engine.py#L365}{GitHub} & 9d76ea5 & A & $\times$ & ASGM\\
    v0.14.0             & \href{https://github.com/meta-pytorch/opacus/blame/v0.14.0/opacus/privacy_engine.py#L485}{GitHub} & e9983ec & EA & $\checkmark$ & EASGM\\
    v0.15.0             & \href{https://github.com/meta-pytorch/opacus/blame/v0.15.0/opacus/privacy_engine.py#L523}{GitHub} & e9983ec & EA &$\checkmark$ & EASGM\\
 v1.0.0-v1.0.2             & \href{https://github.com/meta-pytorch/opacus/blame/v1.0.2/opacus/privacy_engine.py#L342C10-L342C10}{GitHub} & 1241e62 & FEA & $\checkmark$&  FEASGM\\

 v1.1.0-v1.1.3             & \href{https://github.com/meta-pytorch/opacus/blame/v1.1.3/opacus/privacy_engine.py#L390}{GitHub} & 1241e62 & FEA & $\checkmark$& FEASGM\\

 v1.2.0             & \href{https://github.com/meta-pytorch/opacus/blame/v1.2.0/opacus/privacy_engine.py#L411}{GitHub} & 1241e62 & FEA &$\checkmark$& FEASGM\\

 v1.3             & \href{https://github.com/meta-pytorch/opacus/blame/v1.3/opacus/privacy_engine.py#L412C6-L412C6}{GitHub} & 1241e62 & FEA& $\checkmark$ & FEASGM\\

 v1.4-v1.4.1           & \href{https://github.com/meta-pytorch/opacus/blame/v1.4.1/opacus/privacy_engine.py#L365C26-L365C26}{GitHub} & 1241e62 & FEA &$\checkmark$&  FEASGM\\

     v1.5-v1.5.4         & \href{https://github.com/meta-pytorch/opacus/blame/v1.5.4/opacus/privacy_engine.py#L407C17-L407C17}{GitHub} & 1241e62 & FEA & $\checkmark$ & FEASGM\\

\bottomrule
\end{tabular}
\end{table*}

\section{Multiround Guarantees for EASGM and ASGM}\label{sec:Multi_EASGM_ASGM_GUARANTEE}

For both EASGM and ASGM, we can combine the per-round upper bounds in
Theorems~\ref{thm:upperbound_EASGM} and~\ref{thm:upperbound_ASGM}
with the comparison theorem (Theorem~\ref{thm:comparison_theorem})
to obtain valid upper bounds for their $T$-round compositions.

\begin{theorem}[Multiround upper bound for EASGM and ASGM]
\label{thm:EASGM_ASGM_Upperbound_T_round}
Consider either EASGM or ASGM, and let $\overline{f}$ denote the corresponding
per-round upper bound given in Theorem~\ref{thm:upperbound_EASGM} or
Theorem~\ref{thm:upperbound_ASGM}. Let $\mathcal{M}_{I}$ denote the transparent
per-round mechanism, and let $\mathcal{M}$ denote the corresponding hidden
mechanism obtained by suppressing the additional released information~$I$.
Assume that the randomness is independent across rounds. For each
$T \in \mathbb{Z}_{>0}$, let $\mathcal{M}_{I}^{\otimes T}$ and
$\mathcal{M}^{\otimes T}$ denote the $T$-round compositions of
$\mathcal{M}_{I}$ and $\mathcal{M}$, respectively. Then, for any neighboring
datasets $D$ and $D'$, we have
\[
T\!\left(\mathcal{M}^{\otimes T}(D),\, \mathcal{M}^{\otimes T}(D')\right)
\ge
T\!\left(\mathcal{M}_{I}^{\otimes T}(D),\, \mathcal{M}_{I}^{\otimes T}(D')\right),
\]
and
\[
\overline{f}^{\otimes T}
\ge
T\!\left(\mathcal{M}_{I}^{\otimes T}(D),\, \mathcal{M}_{I}^{\otimes T}(D')\right).
\]
Consequently, $\overline{f}^{\otimes T}$ is a valid upper bound on the
$T$-round privacy guarantee of the corresponding EASGM or ASGM mechanism.
\end{theorem}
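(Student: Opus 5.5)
The argument mirrors the proof of Theorem~\ref{Thm:FEASGM_Upperbound_T_round}; the only changes are the per-round input, which comes from Theorems~\ref{thm:upperbound_EASGM} and~\ref{thm:upperbound_ASGM}, and an explicit check that the transparent mechanism is well defined for both mechanisms.

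First, fix the transparent per-round mechanism $\mathcal{M}_I$. For EASGM, $I$ is the pair (sampled batch $B$, whether $d_{N+1}$ is included). For ASGM, $I$ is the same pair, with $B=\emptyset$ as one possible value. In both cases $\mathcal{M}_I(D)$ is the joint law of $(I,\mathcal{M}(D)\mid I)$, so its trade-off function is exactly $T((P_I\mid I,I),(Q_I\mid I,I))$, which is the per-pair quantity $f$ of Eq.~\ref{eq:lower_bound_EASGM_high_dimentsion} or Eq.~\ref{eq:Low_ASGM}. Theorems~\ref{thm:upperbound_EASGM} and~\ref{thm:upperbound_ASGM} then give $\overline{f}\ge T(\mathcal{M}_I(D),\mathcal{M}_I(D'))$. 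This bound holds for every choice of clipped gradients. That matters because later rounds compute the gradients at parameters that depend on the earlier outputs.

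Second, the first inequality. $\mathcal{M}^{\otimes T}$ is a deterministic post-processing of $\mathcal{M}_I^{\otimes T}$: it simply forgets $I_1,\dots,I_T$. The post-processing property therefore gives $T(\mathcal{M}^{\otimes T}(D),\mathcal{M}^{\otimes T}(D'))\ge T(\mathcal{M}_I^{\otimes T}(D),\mathcal{M}_I^{\otimes T}(D'))$. This is also the first claim of Theorem~\ref{thm:comparison_theorem}, applied round by round.

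Third, the second inequality, by induction on $T$. The base case $T=1$ is the per-round bound from the first step. For the inductive step, split $\mathcal{M}_I^{\otimes(T+1)}$ into a first stage (rounds $1,\dots,T$), whose trade-off function is at most $\overline{f}^{\otimes T}$ by the induction hypothesis, and a second stage (round $T+1$). For any fixed transcript $y_1$, the second stage is one transparent EASGM or ASGM round with gradients determined by $y_1$ and fresh independent randomness $I_{T+1}$. Since $\overline{f}$ does not depend on the gradient values, the per-round bound gives $\overline{f}\ge T(\mathcal{M}_{2,I_2}(D,y_1),\mathcal{M}_{2,I_2}(D',y_1))$ uniformly in $y_1$. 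Theorem~\ref{thm:comparison_theorem} with $g_1=\overline{f}^{\otimes T}$ and $g_2=\overline{f}$ then gives $\overline{f}^{\otimes(T+1)}\ge T(\mathcal{M}_I^{\otimes(T+1)}(D),\mathcal{M}_I^{\otimes(T+1)}(D'))$.

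The main obstacle is this uniformity. I must check that the bounds of Theorems~\ref{thm:upperbound_EASGM} and~\ref{thm:upperbound_ASGM} depend only on $(N,n,q)$ and not on the gradients. For EASGM this holds because $\overline{f}=T(P^{\otimes(n-1)},Q_N^{\otimes(n-1)})$. For ASGM this holds because the mixture weights $q-q(1-q)^N$ and $1-q+q(1-q)^N$ depend only on $q$ and $N$. I also need the across-round independence assumption, so that $I_{T+1}$ is independent of $y_1$ as Theorem~\ref{thm:comparison_theorem} requires. Finally, chaining the two inequalities yields the consequence that $\overline{f}^{\otimes T}$ upper-bounds the $T$-round transparent guarantee.
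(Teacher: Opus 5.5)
Your proposal is correct and follows the paper's route: the first inequality is post-processing (forgetting $I_1,\dots,I_T$), and the second is the induction on $T$ via Theorem~\ref{thm:comparison_theorem} with $g_1=\overline{f}^{\otimes T}$ and $g_2=\overline{f}$, exactly as in the proof of Theorem~\ref{Thm:FEASGM_Upperbound_T_round}, which the paper's short proof of this statement implicitly reuses. Your explicit check that $\overline{f}$ depends only on $(N,n,q)$ and not on the clipped gradients, so that the per-round bound holds uniformly over the previous transcript, supplies a detail the paper leaves unstated.
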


\begin{proof}
By Theorem~\ref{thm:comparison_theorem}, in each round the hidden mechanism
$\mathcal{M}$ is at least as private as its transparent counterpart
$\mathcal{M}_{I}$, and the trade-off function of $\mathcal{M}_{I}$ is upper
bounded by $\overline{f}$. Composing these per-round relations over $T$
independent rounds yields
\[
T\!\left(\mathcal{M}^{\otimes T}(D),\, \mathcal{M}^{\otimes T}(D')\right)
\ge
T\!\left(\mathcal{M}_{I}^{\otimes T}(D),\, \mathcal{M}_{I}^{\otimes T}(D')\right),
\]
and
\[
\overline{f}^{\otimes T}
\ge
T\!\left(\mathcal{M}_{I}^{\otimes T}(D),\, \mathcal{M}_{I}^{\otimes T}(D')\right).
\]
This proves the claim.
\end{proof}

For EASGM, this abstract multiround upper bound can be further approximated
in closed form by using the same argument as in
Theorem~\ref{Thm:FEASGM_Upperbound_T_round}.

\begin{corollary}
\label{cor:EASGM_upperbound_estimate_T_round}
Let $\overline{f}$ denote the per-round upper bound for EASGM given in
Theorem~\ref{thm:upperbound_EASGM}. Define
\[
f_{\ast} = T(P, Q_{N}),
\qquad
\overline{f_{\ast}} = \max\{f_{\ast}, f_{\ast}^{-1}\},
\]
where $P$ and $Q_{N}$ are defined in Table~\ref{tab:notation}. Then, using
the same Gaussian approximation as in
Theorem~\ref{Thm:FEASGM_Upperbound_T_round}, for all
$\alpha \in [\gamma,\, 1-\gamma]$,
\[
\overline{f}^{\otimes T}(\alpha)
\le
G_{\mu}(\alpha-\gamma)+\gamma,
\]
where $\mu$ and $\gamma$ are obtained from the composition of
$T(n-1)$ copies of $\overline{f_{\ast}}$.
\end{corollary}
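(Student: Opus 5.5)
The corollary follows by chaining three comparisons: Theorem~\ref{thm:EASGM_ASGM_Upperbound_T_round}, monotonicity of tensoring, and the CLT estimate of Theorem~\ref{thm:estimation}. The goal is to show that $\overline{f}^{\otimes T}$ is dominated pointwise by a $T(n-1)$-fold tensor power of the symmetric function $\overline{f_\ast}$, and then apply Theorem~\ref{thm:estimation} to that power.

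\textbf{Step 1.} By Theorem~\ref{thm:upperbound_EASGM}, $\overline{f}=T(P^{\otimes(n-1)},Q_N^{\otimes(n-1)})$. Property~\ref{property:expression_composition}, applied repeatedly, gives $\overline{f}=f_\ast^{\otimes(n-1)}$. Tensoring is associative, so
\[
\overline{f}^{\otimes T}=f_\ast^{\otimes T(n-1)}.
\]

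\textbf{Step 2.} By Proposition~2.2 of \cite{dong2019gaussian}, $\overline{f_\ast}=\max\{f_\ast,f_\ast^{-1}\}$ is a symmetric trade-off function and satisfies $\overline{f_\ast}\ge f_\ast$. We replace one factor at a time using Property~\ref{property:order}. At each swap, one side is realized by product distributions and the remaining factors are held fixed. After $T(n-1)$ swaps,
\[
f_\ast^{\otimes T(n-1)}\le \overline{f_\ast}^{\otimes T(n-1)}.
\]
This is the same argument used in Section~\ref{sec:upperbound_visulization} with $n-1$ replaced by $T(n-1)$.

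\textbf{Step 3.} We apply Theorem~\ref{thm:estimation} with $f_1=\dots=f_{T(n-1)}=\overline{f_\ast}$. The constants $\mu$ and $\gamma$ are computed from the $\mathbf{kl}$, $\boldsymbol\kappa_2$ and $\overline{\boldsymbol\kappa}_3$ functionals of these $T(n-1)$ copies. Under $\gamma<1/2$, the right-hand inequality of that theorem gives, for $\alpha\in[\gamma,1-\gamma]$,
\[
\overline{f_\ast}^{\otimes T(n-1)}(\alpha)\le G_\mu(\alpha-\gamma)+\gamma.
\]
Chaining Steps~1--3 yields the claimed bound.

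\textbf{Main obstacle.} The delicate point is verifying the hypotheses of Theorem~\ref{thm:estimation} for $\overline{f_\ast}$.

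\emph{Finite third cumulant.} $f_\ast$ compares $\mathcal N(0,1)$ with $\mathcal N\bigl(0,(N/(N+1))^2\bigr)$. The log-likelihood ratio is a quadratic in a Gaussian variable, so all its moments are finite. I would then argue that the symmetrization (taking the max with the inverse, whose privacy-loss variable is built from the same two Gaussians, with roles of $P$ and $Q$ swapped) still has $\kappa_3<\infty$. The route I would try first is the same route used for Theorem~\ref{thm:upperbound_estimate}, whose proof establishes exactly this for $n-1$ copies. The number of copies does not enter this verification, so it carries over verbatim.

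\emph{Small $\gamma$.} The condition $\gamma<1/2$ is not automatic. If it fails, the interval $[\gamma,1-\gamma]$ is empty and the statement holds vacuously. I would make this explicit rather than prove $\gamma<1/2$ in general.
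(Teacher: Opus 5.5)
Your proposal is correct and follows essentially the paper's route: write $\overline{f}^{\otimes T}=f_\ast^{\otimes T(n-1)}$, dominate this by $\overline{f_\ast}^{\otimes T(n-1)}$ via Property~\ref{property:order}, and apply Theorem~\ref{thm:estimation}, using the finiteness of $\kappa_3(\overline{f_\ast})$ from the proof of Theorem~\ref{thm:upperbound_estimate}. Two minor points: your opening appeal to Theorem~\ref{thm:EASGM_ASGM_Upperbound_T_round} is never used. Also, at $\gamma=1/2$ the interval is the singleton $\{1/2\}$ rather than empty, but there the bound $G_\mu(0)+\tfrac12=\tfrac32$ holds trivially.
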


\begin{proof}
The proof follows the same argument as that of
Theorem~\ref{Thm:FEASGM_Upperbound_T_round}, after replacing the FEASGM
per-round upper bound with the EASGM per-round upper bound in
Theorem~\ref{thm:upperbound_EASGM}. We therefore omit the repeated details.
\end{proof}

For ASGM, Theorem~\ref{thm:EASGM_ASGM_Upperbound_T_round} still provides a
valid $T$-round upper bound $\overline{f}^{\otimes T}$. However, the per-round
upper bound in Theorem~\ref{thm:upperbound_ASGM} contains the term
\[
T\!\left(P^{\otimes (n-1)},\, Q_{N}^{\otimes (n-1)}\right),
\]
for which we do not derive a closed-form analytic expression in this paper.
Therefore, unlike EASGM, we do not provide a closed-form visualization of the
multiround ASGM guarantee. Obtaining such a visualization would require a
numerical approximation of this composition term, for example using the method
in~\cite{gopi2021numerical}, which we leave to future work.

\subsection{Impact of One-Round Unsoundness on Multi-Round Guarantees}
In differential privacy, a multiround guarantee obtained by composition is
sound only if the per-round guarantee being composed is itself sound for the
underlying mechanism~\cite{dong2019gaussian}. Therefore, if the one-round SGM guarantee does not correctly characterize the actual EASGM or ASGM mechanism, then the corresponding multi-round SGM-based estimate is not soundly justified for that mechanism. This observation does not, by itself, establish a numerical violation of the multi-round estimate; rather, it shows that the multi-round SGM guarantee lacks a valid per-round foundation.
\section{Auditing Algorithm}\label{Appendix:auditing_algorithm}
\begin{algorithm}[H]\small
\caption{Tight auditing of EASGM/ASGM}
\label{alg:score-function}
\begin{algorithmic}[1]
\Require Neighboring datasets $D=\{d_{1},...,d_{N}\},\; D'=D\cup \{d_{N+1}\}$; statistical test $\mathrm{Test}(\cdot)$; intermediate distinguisher set $\mathcal{A}=\{A_{B}^{i}=(\mathrm{Test},\mathrm{Proc}_{P_{B}^{i}},\mathrm{Proc}_{Q_{B}^{i}})\}$; final distinguisher $A_{\ast}$; auditing time $t$; audited mechanism $\mathcal{M}(\cdot)$.
\Ensure Empirical type I / type II errors $(\alpha_{r},\beta_{r})$.
\State $Type_{I_c}\gets 0$, $Type_{I_{ic}}\gets 0$, $Type_{II_c}\gets 0$, $Type_{II_{ic}}\gets 0$
\For{$r=1$ \textbf{to} $t$}
    \State $D_{\text{in}}\gets \varnothing$
    \State Randomly sample $b \in \{0,1\}$
    \If{$b == 0$}
        \State $D_{\text{in}} \gets D$
    \Else
        \State $D_{\text{in}} \gets D'$
    \EndIf
    \State $O \gets \mathcal{M}(D_{\text{in}})$;\quad $Score \gets \varnothing$
    \ForAll{$A_{B}^{i} \in \mathcal{A}$}
        \State $Score \gets (Score, A_{B}^{i}(O))$
    \EndFor
    \State $b_{guess} \gets A_{\ast}(Score)$
    \If{$b==0$ and $b==b_{guess}$}
        \State $Type_{I_c}++$
    \ElsIf{$b==0$ and $b\ne b_{guess}$}
        \State $Type_{I_{ic}}++$
    \ElsIf{$b==1$ and $b==b_{guess}$}
        \State $Type_{II_c}++$
    \ElsIf{$b==1$ and $b\ne b_{guess}$}
        \State $Type_{II_{ic}}++$
    \EndIf
\EndFor
\State Compute $(\alpha_{r},\beta_{r})$ using the Clopper--Pearson method \\

\Return $(\alpha_{r},\,\beta_{r})$
\end{algorithmic}
\end{algorithm}

\section{Proof}

\subsection{Proof of Lemma~\ref{Lemma:min_equal}}
\label{Proof_Lemma_mim_equal}

\begin{proof}
Let
\[
E_{\alpha}
=
\left\{
(\alpha_1,\ldots,\alpha_m)\in[0,1]^m:
\sum_{i=1}^m w_i\alpha_i=\alpha
\right\}.
\]
We prove that
\[
T((P_I\mid I,I),(Q_I\mid I,I))(\alpha)
=
\min_{(\alpha_1,\ldots,\alpha_m)\in E_{\alpha}}
\sum_{i=1}^m w_i\,T(P_i,Q_i)(\alpha_i).
\]

Let
\[
(\alpha_1^\ast,\ldots,\alpha_m^\ast)
\in
\arg\min_{(\alpha_1,\ldots,\alpha_m)\in E_{\alpha}}
\sum_{i=1}^m w_i\,T(P_i,Q_i)(\alpha_i).
\]

We first prove the ``\(\ge\)'' direction. Let \(\phi\) be an optimal decision rule for
\[
H_0:(P_I\mid I,I)
\qquad\text{vs.}\qquad
H_1:(Q_I\mid I,I)
\]
with Type~I error \(\alpha\). For each \(i\), let \(\phi^i\) denote the restriction of \(\phi\) to the slice \(I=i\), and let \(\alpha_i\) and \(\beta_i\) be the Type~I and Type~II errors of \(\phi^i\), respectively. Then
\[
\alpha=\sum_{i=1}^m w_i\alpha_i,
\qquad
T((P_I\mid I,I),(Q_I\mid I,I))(\alpha)
=
\sum_{i=1}^m w_i\beta_i.
\]
Moreover, for each \(i\), \(\phi^i\) must be optimal for testing
\[
H_0:P_i
\qquad\text{vs.}\qquad
H_1:Q_i
\]
among all tests with Type~I error \(\alpha_i\). Otherwise, replacing \(\phi^i\) by another test with the same Type~I error \(\alpha_i\) and a strictly smaller Type~II error would yield a global test with the same total Type~I error \(\alpha\) but a smaller total Type~II error, contradicting the optimality of \(\phi\). Hence,
\[
\beta_i=T(P_i,Q_i)(\alpha_i),
\qquad \forall i\in[m].
\]
Therefore,
\[
T((P_I\mid I,I),(Q_I\mid I,I))(\alpha)
=
\sum_{i=1}^m w_iT(P_i,Q_i)(\alpha_i).
\]
Since \((\alpha_1,\ldots,\alpha_m)\in E_{\alpha}\), by the definition of
\((\alpha_1^\ast,\ldots,\alpha_m^\ast)\), we have
\[
\sum_{i=1}^m w_iT(P_i,Q_i)(\alpha_i)
\ge
\sum_{i=1}^m w_iT(P_i,Q_i)(\alpha_i^\ast).
\]
Thus,
\[
T((P_I\mid I,I),(Q_I\mid I,I))(\alpha)
\ge
\sum_{i=1}^m w_iT(P_i,Q_i)(\alpha_i^\ast).
\]

Next, we prove the ``\(\le\)'' direction. For each \(i\), let \(\bar{\phi}^i\) be an optimal decision rule for
\[
H_0:P_i
\qquad\text{vs.}\qquad
H_1:Q_i
\]
with Type~I error \(\alpha_i^\ast\), so that its Type~II error is
\(T(P_i,Q_i)(\alpha_i^\ast)\). Define a global decision rule \(\bar{\phi}\) by applying \(\bar{\phi}^i\) on the slice \(I=i\). Then the Type~I error of \(\bar{\phi}\) is
\[
\sum_{i=1}^m w_i\alpha_i^\ast=\alpha,
\]
and its Type~II error is
\[
\sum_{i=1}^m w_iT(P_i,Q_i)(\alpha_i^\ast).
\]
Since \(T((P_I\mid I,I),(Q_I\mid I,I))(\alpha)\) is the minimum achievable Type~II error among all tests with Type~I error \(\alpha\), we obtain
\[
T((P_I\mid I,I),(Q_I\mid I,I))(\alpha)
\le
\sum_{i=1}^m w_iT(P_i,Q_i)(\alpha_i^\ast).
\]

Combining the two inequalities yields
\[
T((P_I\mid I,I),(Q_I\mid I,I))(\alpha)
=
\sum_{i=1}^m w_iT(P_i,Q_i)(\alpha_i^\ast),
\]
which proves the claim.
\end{proof}
\subsection{Proof of Lemma~\ref{Lemma:Order_in_trade-off}}
\label{Proof:Lemma_Order_in_trade-off}

\begin{proof}
Let
\[
E_{\alpha}
=
\left\{
(\alpha_{1},\ldots,\alpha_{m})\in[0,1]^m:
\sum_{i=1}^{m}w_i\alpha_i=\alpha
\right\}.
\]
By Lemma~\ref{Lemma:min_equal}, we have
\[
T((P_{I}\mid I,I),(Q_{I}\mid I,I))(\alpha)
=
\min_{(\alpha_{1},\dots,\alpha_{m})\in E_{\alpha}}
\sum_{i=1}^{m}w_{i}T(P_{i},Q_{i})(\alpha_{i}).
\]
Since \(E_{\alpha}\) is compact and the component trade-off functions are continuous, the minimum is attained. Let \((\alpha_{1}^{\ast},\dots,\alpha_{m}^{\ast})\in E_{\alpha}\) be a minimizer.

For each \(i\in[m]\), by the Neyman--Pearson lemma, allowing randomized rejection rules if necessary, there exists an optimal rejection rule \(\phi^{i}\colon(x,i)\mapsto[0,1]\) for
\[
H_{0}\!:\!(P_{i},i)
\qquad\text{vs.}\qquad
H_{1}\!:\!(Q_{i},i)
\]
with Type~I error \(\alpha_i^\ast\). Define the global rejection rule \(\phi\) by setting \(\phi(x,i)=\phi^{i}(x,i)\). Then \(\phi\) has Type~I error \(\sum_{i=1}^{m}w_i\alpha_i^\ast=\alpha\), and its Type~II error is
\[
\sum_{i=1}^{m}w_iT(P_i,Q_i)(\alpha_i^\ast)
=
T((P_{I}\mid I,I),(Q_{I}\mid I,I))(\alpha).
\]
Thus, \(\phi\) is an optimal rejection rule for
\[
H_{0}\!:\!(P_{I}\mid I,I)
\qquad\text{vs.}\qquad
H_{1}\!:\!(Q_{I}\mid I,I).
\]

For notational convenience, set \(\overline{P}_i=P_i\) and \(\overline{Q}_i=Q_i\) for all \(i\neq y\). We now construct a rejection rule \(\phi_{\mathrm{sub}}\colon(x,i)\mapsto[0,1]\) with Type~I error \(\alpha\) for
\[
H_{0}\!:\!(\overline{P}_{I}\mid I,I)
\qquad\text{vs.}\qquad
H_{1}\!:\!(\overline{Q}_{I}\mid I,I).
\]
Let \(\phi_{\mathrm{sub}}^{y}\) be an optimal rejection rule for
\[
H_{0}\!:\!(\overline{P}_{y},y)
\qquad\text{vs.}\qquad
H_{1}\!:\!(\overline{Q}_{y},y)
\]
with Type~I error \(\alpha_{y}^{\ast}\). Then
\begin{equation}\label{eq:construct_sub_fact}
\begin{aligned}
    &\mathbb{E}_{(x,y)\sim(\overline{Q}_{y},y)}[1-\phi_{\mathrm{sub}}^{y}(x,y)]
    = T(\overline{P}_{y},\overline{Q}_{y})(\alpha_{y}^{\ast})\\
    &\le T(P_{y},Q_{y})(\alpha_{y}^{\ast})
    = \mathbb{E}_{(x,y)\sim(Q_{y},y)}[1-\phi^{y}(x,y)],
\end{aligned}
\end{equation}
where the inequality follows from the pointwise assumption
\(T(\overline{P}_{y},\overline{Q}_{y})(a)\le T(P_{y},Q_{y})(a)\) for all \(a\in[0,1]\), applied at \(a=\alpha_y^\ast\).

Define \(\phi_{\mathrm{sub}}\) by replacing the slice \(\phi^{y}\) in \(\phi\) with \(\phi_{\mathrm{sub}}^{y}\), while keeping all other slices unchanged. Since \(\overline{P}_i=P_i\) for \(i\neq y\), and since \(\phi_{\mathrm{sub}}^{y}\) has Type~I error \(\alpha_y^\ast\) under \(\overline{P}_y\), the rule \(\phi_{\mathrm{sub}}\) has Type~I error
\[
\sum_{i\neq y}w_i\alpha_i^\ast+w_y\alpha_y^\ast=\alpha
\]
under \((\overline{P}_{I}\mid I,I)\). We now compare the Type~II errors:
\begin{equation}
\begin{aligned}
        &T((P_{I}\mid I,I),(Q_{I}\mid I,I))(\alpha)\\
        \overset{(1)}{=}\,&
        \mathbb{E}_{i\sim w}\mathbb{E}_{(x,i)\sim(Q_{i},i)}[1-\phi(x,i)]\\
        \overset{(2)}{=}\,&
        \sum_{i\in[m]\setminus\{y\}}w_{i}
        \mathbb{E}_{(x,i)\sim(Q_{i},i)}[1-\phi^{i}(x,i)]\\
        &\quad +\,w_{y}\mathbb{E}_{(x,y)\sim(Q_{y},y)}[1-\phi^{y}(x,y)]\\
        \overset{(3)}{\ge}\,&
        \sum_{i\in[m]\setminus\{y\}}w_{i}
        \mathbb{E}_{(x,i)\sim(\overline{Q}_{i},i)}[1-\phi^{i}(x,i)]\\
        &\quad +\,w_{y}\mathbb{E}_{(x,y)\sim(\overline{Q}_{y},y)}[1-\phi_{\mathrm{sub}}^{y}(x,y)]\\
        \overset{(4)}{=}\,&
        \mathbb{E}_{i\sim w}\mathbb{E}_{(x,i)\sim(\overline{Q}_{i},i)}[1-\phi_{\mathrm{sub}}(x,i)]\\
        \overset{(5)}{\ge}\,&
        T((\overline{P}_{I}\mid I,I),(\overline{Q}_{I}\mid I,I))(\alpha).
\end{aligned}
\end{equation}

Here, \(\overset{(1)}{=}\) follows from the optimality of \(\phi\) and the definition of the trade-off function. Equality \(\overset{(2)}{=}\) decomposes the Type~II error over the slices \(I=i\). Inequality \(\overset{(3)}{\ge}\) follows from Eq.~\eqref{eq:construct_sub_fact} and the identities \(\overline{Q}_i=Q_i\) for all \(i\neq y\). Equality \(\overset{(4)}{=}\) follows from the definition of \(\phi_{\mathrm{sub}}\). Finally, \(\overset{(5)}{\ge}\) holds because \(\phi_{\mathrm{sub}}\) is a valid rejection rule for
\[
H_{0}\!:\!(\overline{P}_{I}\mid I,I)
\qquad\text{vs.}\qquad
H_{1}\!:\!(\overline{Q}_{I}\mid I,I)
\]
with Type~I error \(\alpha\), and its Type~II error is therefore no smaller than the optimal Type~II error. Hence,
\[
T((P_{I}\mid I,I),(Q_{I}\mid I,I))(\alpha)
\ge
T((\overline{P}_{I}\mid I,I),(\overline{Q}_{I}\mid I,I))(\alpha).
\]
This completes the proof.
\end{proof}
\subsection{Proof of Lemma \ref{Lemma:Upper_bound}}
\label{Proof:Lemma_Upper_bound}

Let \(\overline{\phi}\) denote the optimal rejection rule with Type~I error \(\alpha\) for the hypothesis test
\[
H_{0}:(P_{I}\mid I,I)
\qquad \text{vs.} \qquad
H_{1}:(Q_{I}\mid I,I).
\]
Then, by definition,
\[
1-\mathbb{E}_{(x,i)\sim(Q_{I}\mid I,I)}[\overline{\phi}(x,i)]
=
T((P_{I}\mid I,I),(Q_{I}\mid I,I))(\alpha).
\]

To prove the claim, we construct a rejection rule \(\phi\) for the same testing problem.
For each \(i\in[m]\), let \(\phi_i\) be the optimal rejection rule for testing
\[
H_{0}:P_i
\qquad \text{vs.} \qquad
H_{1}:Q_i
\]
with Type~I error \(\alpha\). Thus,
\[
\mathbb{E}_{x\sim P_i}[\phi_i(x)]=\alpha,
\qquad
1-\mathbb{E}_{x\sim Q_i}[\phi_i(x)]
=
T(P_i,Q_i)(\alpha).
\]

Now define the rejection rule \(\phi\) by
\[
\phi(x,i)=\phi_i(x).
\]
Then its Type~I error is
\[
\mathbb{E}_{(x,i)\sim(P_I\mid I,I)}[\phi(x,i)]
=
\sum_{i\in[m]}w_i\,\mathbb{E}_{x\sim P_i}[\phi_i(x)]
=
\sum_{i\in[m]}w_i\,\alpha
=
\alpha.
\]
Similarly, its Type~II error is
\[\begin{aligned}
    &1-\mathbb{E}_{(x,i)\sim(Q_I\mid I,I)}[\phi(x,i)]
=
\sum_{i\in[m]}w_i\Bigl(1-\mathbb{E}_{x\sim Q_i}[\phi_i(x)]\Bigr) \\
&=
\sum_{i\in[m]}w_i\,T(P_i,Q_i)(\alpha).
\end{aligned}
\]

Since \(\overline{\phi}\) is the optimal rejection rule at Type~I error \(\alpha\), its Type~II error is no larger than that of any feasible rejection rule. Therefore,
\[
T((P_I\mid I,I),(Q_I\mid I,I))(\alpha)
\le
\sum_{i\in[m]}w_i\,T(P_i,Q_i)(\alpha).
\]
This completes the proof.
\subsection{Proof of Theorem \ref{thm:bound}}
\label{App_proof:Theorem_bound}
Let \(P=\mathcal{N}(0,1)\) and \(Q_y=\mathcal{N}(y,\sigma^2)\), where \(0<\sigma<1\) and \(y>0\).
We fix a Type~I error level \(\alpha\in(0,1)\).

First, the likelihood ratio between \(Q_y\) and \(P\) is
\begin{equation}
\begin{aligned}
\Lambda_y(x)
&= \frac{f_{Q_y}(x)}{f_P(x)}
 = \frac{1}{\sigma}
 \exp\!\left(
 -\frac{1}{2}\left[\frac{(x-y)^2}{\sigma^2}-x^2\right]
 \right) \\
&= \frac{1}{\sigma}\exp\!\bigl(\lambda(x,y)\bigr),
\end{aligned}
\end{equation}
where
\[
\lambda(x,y)=\frac{-(1-\sigma^2)x^2+2yx-y^2}{2\sigma^2}.
\]

Since \(0<\sigma<1\), \(\lambda(x,y)\) is a concave quadratic function of \(x\), and hence is symmetric about
\[
z=\frac{y}{1-\sigma^2}.
\]
Therefore, by the Neyman--Pearson lemma, for fixed Type~I error \(\alpha\), the rejection region minimizing the Type~II error \(\beta\) is of the form
\[
[z-t,z+t]
\]
for some \(t>0\), determined by
\[
\int_{z-t}^{z+t} f_P(x)\,dx=\alpha
\quad\Longleftrightarrow\quad
\Phi(z+t)-\Phi(z-t)=\alpha.
\]

Define
\[
F(y,t)=\Phi(z(y)+t)-\Phi(z(y)-t)-\alpha,
\qquad z(y)=\frac{y}{1-\sigma^2}.
\]
Since
\[
\frac{\partial F}{\partial t}
=
\phi(z(y)+t)+\phi(z(y)-t)>0,
\]
the implicit function theorem implies that the solution \(t=t(y)\) is locally continuously differentiable in \(y\).

Thus,
\begin{equation}\label{eq:one_minus_beta}
\begin{aligned}
1-\beta
&= \int_{z-t}^{z+t} f_{Q_y}(x)\,dx \\
&= \Phi\!\left(\frac{\sigma}{1-\sigma^2}y+\frac{t}{\sigma}\right)
 - \Phi\!\left(\frac{\sigma}{1-\sigma^2}y-\frac{t}{\sigma}\right).
\end{aligned}
\end{equation}

We now differentiate \(t\) with respect to \(y\). From
\[
\Phi(z+t)-\Phi(z-t)=\alpha,
\]
we obtain
\begin{equation}
\begin{aligned}
\phi(z+t)\left(\frac{\partial z}{\partial y}+\frac{\partial t}{\partial y}\right)
-\phi(z-t)\left(\frac{\partial z}{\partial y}-\frac{\partial t}{\partial y}\right)
=0.
\end{aligned}
\end{equation}
Since \(\frac{\partial z}{\partial y}=\frac{1}{1-\sigma^2}\), it follows that
\begin{equation}\label{eq:dt_dy}
\begin{aligned}
\frac{\partial t}{\partial y}
=
\frac{\phi(z-t)-\phi(z+t)}{\phi(z-t)+\phi(z+t)}
\cdot \frac{1}{1-\sigma^2}.
\end{aligned}
\end{equation}

Next, differentiating \eqref{eq:one_minus_beta} with respect to \(y\), we get
\begin{equation}
\begin{aligned}
\frac{\partial(1-\beta)}{\partial y}
&=
\left(
\frac{\sigma}{1-\sigma^2}+\frac{1}{\sigma}\frac{\partial t}{\partial y}
\right)
\phi\!\left(\frac{\sigma}{1-\sigma^2}y+\frac{t}{\sigma}\right) \\
&\quad -
\left(
\frac{\sigma}{1-\sigma^2}-\frac{1}{\sigma}\frac{\partial t}{\partial y}
\right)
\phi\!\left(\frac{\sigma}{1-\sigma^2}y-\frac{t}{\sigma}\right) \\
&=
\frac{\sigma}{1-\sigma^2}
\left[
\phi\!\left(\frac{\sigma}{1-\sigma^2}y+\frac{t}{\sigma}\right)
-
\phi\!\left(\frac{\sigma}{1-\sigma^2}y-\frac{t}{\sigma}\right)
\right] \\
&\quad +
\frac{1}{\sigma}\frac{\partial t}{\partial y}
\left[
\phi\!\left(\frac{\sigma}{1-\sigma^2}y+\frac{t}{\sigma}\right)
+
\phi\!\left(\frac{\sigma}{1-\sigma^2}y-\frac{t}{\sigma}\right)
\right].
\end{aligned}
\end{equation}

Define
\[
\Gamma(y)
=
\frac{\partial(1-\beta)}{\partial y}\cdot
\sigma(1-\sigma^2)\bigl(\phi(z-t)+\phi(z+t)\bigr).
\]
Since \(\sigma>0\), \(1-\sigma^2>0\), and \(\phi(z-t)+\phi(z+t)>0\), \(\Gamma(y)\) and \(\frac{\partial(1-\beta)}{\partial y}\) have the same sign. Using \eqref{eq:dt_dy}, we obtain
\begin{equation}
\begin{aligned}
\Gamma(y)
&=
\sigma^2\bigl(\phi(z+t)+\phi(z-t)\bigr)
\left[
\phi\!\left(\frac{\sigma}{1-\sigma^2}y+\frac{t}{\sigma}\right)
-
\phi\!\left(\frac{\sigma}{1-\sigma^2}y-\frac{t}{\sigma}\right)
\right] \\
&\quad +
\bigl(\phi(z-t)-\phi(z+t)\bigr)
\left[
\phi\!\left(\frac{\sigma}{1-\sigma^2}y+\frac{t}{\sigma}\right)
+
\phi\!\left(\frac{\sigma}{1-\sigma^2}y-\frac{t}{\sigma}\right)
\right].
\end{aligned}
\end{equation}

Now define
\[
\Gamma^*(y)
=
2\pi\,\Gamma(y)\,
e^{\frac{(z+t)^2}{2}}
e^{\frac{(z-t)^2}{2}}
e^{\frac{\left(\frac{\sigma}{1-\sigma^2}y+\frac{t}{\sigma}\right)^2}{2}}
e^{\frac{\left(\frac{\sigma}{1-\sigma^2}y-\frac{t}{\sigma}\right)^2}{2}}.
\]
Since the multiplying factor is strictly positive, \(\Gamma^*(y)\) and \(\Gamma(y)\) have the same sign. A direct expansion gives
\begin{equation}
\begin{aligned}
\Gamma^*(y)
&=
(1+\sigma^2)
\left[
e^{\frac{(z+t)^2+\left(\frac{\sigma}{1-\sigma^2}y-\frac{t}{\sigma}\right)^2}{2}}
-
e^{\frac{(z-t)^2+\left(\frac{\sigma}{1-\sigma^2}y+\frac{t}{\sigma}\right)^2}{2}}
\right] \\
&\quad +
(1-\sigma^2)
\left[
e^{\frac{(z+t)^2+\left(\frac{\sigma}{1-\sigma^2}y+\frac{t}{\sigma}\right)^2}{2}}
-
e^{\frac{(z-t)^2+\left(\frac{\sigma}{1-\sigma^2}y-\frac{t}{\sigma}\right)^2}{2}}
\right].
\end{aligned}
\end{equation}

Substituting \(z=\frac{y}{1-\sigma^2}\), define
\[
X=
\frac{\left(\frac{y}{1-\sigma^2}+t\right)^2
+\left(\frac{\sigma}{1-\sigma^2}y-\frac{t}{\sigma}\right)^2}{2},
Y=
\frac{\left(\frac{y}{1-\sigma^2}-t\right)^2
+\left(\frac{\sigma}{1-\sigma^2}y+\frac{t}{\sigma}\right)^2}{2},
\]
and
\[
Z=
\frac{\left(\frac{y}{1-\sigma^2}+t\right)^2
+\left(\frac{\sigma}{1-\sigma^2}y+\frac{t}{\sigma}\right)^2}{2},
W=
\frac{\left(\frac{y}{1-\sigma^2}-t\right)^2
+\left(\frac{\sigma}{1-\sigma^2}y-\frac{t}{\sigma}\right)^2}{2}.
\]
Then
\[
\Gamma^*(y)
=
(1+\sigma^2)(e^X-e^Y)
+
(1-\sigma^2)(e^Z-e^W).
\]

We now compare these exponents. First,
\begin{equation}
\begin{aligned}
X-Y
=
2\left(\frac{y}{1-\sigma^2}\right)t
-
2\left(\frac{\sigma y}{1-\sigma^2}\right)\left(\frac{t}{\sigma}\right)
=0.
\end{aligned}
\end{equation}
Hence \(e^X-e^Y=0\). Second,
\begin{equation}
\begin{aligned}
Z-W
=
\frac{4ty}{1-\sigma^2}>0,
\end{aligned}
\end{equation}
because \(t>0\), \(y>0\), and \(0<\sigma<1\). Therefore \(e^Z-e^W>0\). Since \(1-\sigma^2>0\), we conclude that
\[
\Gamma^*(y)>0.
\]
Hence \(\Gamma(y)>0\), and therefore
\[
\frac{\partial(1-\beta)}{\partial y}>0.
\]
Equivalently,
\[
\frac{\partial\beta}{\partial y}<0.
\]

Thus, for fixed \(\alpha\in(0,1)\), the Type~II error is strictly decreasing in \(y\). Therefore, if \(y_2>y_1> 0\), then
\[
T(\mathcal{N}(0,1),\mathcal{N}(y_2,\sigma^2))(\alpha)
<
T(\mathcal{N}(0,1),\mathcal{N}(y_1,\sigma^2))(\alpha).
\]

For \(\alpha\in(0,1)\), the above argument shows that, for any
\(\mu_1>\mu_2>0\),
\[
T(\mathcal{N}(0,1),\mathcal{N}(\mu_1,\sigma^2))(\alpha)
<
T(\mathcal{N}(0,1),\mathcal{N}(\mu_2,\sigma^2))(\alpha).
\]
The boundary case \(\mu_2=0\) follows by continuity of the trade-off function
with respect to the mean parameter. Indeed, for fixed \(\alpha\in(0,1)\),
the threshold \(t=t(\mu)\) is determined by
\[
\Phi\!\left(\frac{\mu}{1-\sigma^2}+t\right)
-
\Phi\!\left(\frac{\mu}{1-\sigma^2}-t\right)
=
\alpha .
\]
Since the derivative of the left-hand side with respect to \(t\) is
\[
\phi\!\left(\frac{\mu}{1-\sigma^2}+t\right)
+
\phi\!\left(\frac{\mu}{1-\sigma^2}-t\right)>0,
\]
the implicit function theorem implies that \(t(\mu)\) is continuous in
\(\mu\). Hence
\[
T\!\left(\mathcal N(0,1),\mathcal N(\mu,\sigma^2)\right)(\alpha)
\]
is continuous in \(\mu\). Therefore, for any \(0<\varepsilon<\mu_1\),
\[
T\!\left(\mathcal N(0,1),\mathcal N(\mu_1,\sigma^2)\right)(\alpha)
<
T\!\left(\mathcal N(0,1),\mathcal N(\varepsilon,\sigma^2)\right)(\alpha).
\]
Letting \(\varepsilon\downarrow 0\) gives
\[
T\!\left(\mathcal N(0,1),\mathcal N(\mu_1,\sigma^2)\right)(\alpha)
\le
T\!\left(\mathcal N(0,1),\mathcal N(0,\sigma^2)\right)(\alpha),
\alpha\in(0,1).
\]
At the endpoints \(\alpha=0\) and \(\alpha=1\), the trade-off functions
coincide, since
\[
T(P,Q)(0)=1,
\qquad
T(P,Q)(1)=0.
\]
Therefore, for any \(\mu_1>\mu_2\ge 0\),
\[
T\!\left(\mathcal N(0,1),\mathcal N(\mu_1,\sigma^2)\right)(\alpha)
\le
T\!\left(\mathcal N(0,1),\mathcal N(\mu_2,\sigma^2)\right)(\alpha),
\forall \alpha\in[0,1].
\]

\subsection{Proof of Theorem \ref{thm:bound_fix_variance}}\label{App_proof:bound_fix_variance}
\begin{proof}
Let \(P=\mathcal N(0,1)\) and \(Q_\sigma=\mathcal N(0,\sigma^2)\) with \(0<\sigma< 1\). 
Fix \(\alpha\in(0,1)\).

For \(0<\sigma<1\), the likelihood ratio is
\begin{equation}
\begin{aligned}
\Lambda(x)
&=\frac{f_{Q_\sigma}(x)}{f_P(x)}
= \frac{1}{\sigma}\exp\!\left(-\frac{1}{2}\left[\frac{x^2}{\sigma^2}-x^2\right]\right) \\
&= \frac{1}{\sigma}\exp\!\left(-\frac{1-\sigma^2}{2\sigma^2}x^2\right).
\end{aligned}
\end{equation}
Since \(\Lambda(x)\) is symmetric about \(0\) and strictly decreasing in \(|x|\), the Neyman--Pearson most-powerful test with Type~I error \(\alpha\) has rejection region \([-t,t]\) for some \(t>0\), where
\[
\alpha=\int_{-t}^{t} f_P(x)\,dx
=\Phi(t)-\Phi(-t)=2\Phi(t)-1.
\]
Hence \(t\) is determined by \(\alpha\) alone and does not depend on \(\sigma\).

Under \(Q_\sigma\), we have
\[
1-\beta
=\int_{-t}^{t} f_{Q_\sigma}(x)\,dx
=\Phi\!\left(\frac{t}{\sigma}\right)-\Phi\!\left(-\frac{t}{\sigma}\right)
=2\Phi\!\left(\frac{t}{\sigma}\right)-1.
\]
Differentiating with respect to \(\sigma\), and using the fact that \(t\) is independent of \(\sigma\), yields
\[
\frac{\partial(1-\beta)}{\partial \sigma}
=
-\frac{2t}{\sigma^2}\phi\!\left(\frac{t}{\sigma}\right)
<0.
\]
Therefore,
\[
\frac{\partial \beta}{\partial \sigma}>0.
\]
Thus, for any fixed \(\alpha\in(0,1)\), the trade-off function
\(T(\mathcal N(0,1),\mathcal N(0,\sigma^2))(\alpha)\) is strictly increasing
in \(\sigma\) on \(0<\sigma\le 1\).

When \(\sigma=1\), we have \(Q_\sigma=P\), and hence
\[
T(P,Q_\sigma)(\alpha)=\mathrm{ID}(\alpha)=1-\alpha.
\]
Moreover, as \(\sigma\uparrow 1\), we have
\[
2\Phi\left(\frac{t}{\sigma}\right)-1 \to 2\Phi(t)-1=\alpha,
\]
and hence
\[
T(P,Q_\sigma)(\alpha)=\beta \to 1-\alpha = T(P,P)(\alpha).
\]
Thus the monotonicity extends to \(0<\sigma\le 1\).

For the endpoint cases in \(\alpha\), we have
\[
T(P,Q_\sigma)(0)=1,
\qquad
T(P,Q_\sigma)(1)=0.
\]
Hence the monotonicity also holds for \(\alpha\in[0,1]\).

Therefore, if \(1\ge \sigma_1\ge \sigma_2>0\), then
\[
T(\mathcal N(0,1),\mathcal N(0,\sigma_1^2))(\alpha)
\ge
T(\mathcal N(0,1),\mathcal N(0,\sigma_2^2))(\alpha),
\qquad
\forall \alpha\in[0,1].
\]
\end{proof}

\subsection{Proof of Theorem \ref{theorem:EASGM_HIGH}}\label{Appendix:Proof_EASGM_HIGH}
First, we note that the trade-off function $T(P, Q^{\mu_{B}^{1}}_{N}) \ge T(P,\, Q_{N}^{\overline{\mu}_{B}^{1}})$ holds, where $\overline{\mu}_{B}^{1}$ is defined in Eq.~\ref{eq:Q_1_B}. 
This result follows from the inequality
$\frac{\|\sum_{i \in B} \overline{g}_{i}\|_{2}}{(N+1)C\sigma} 
\le \frac{|B|}{(N+1)\sigma}$,
and from Theorem~\ref{thm:bound}, which guarantees that 
\(T(P, Q_{N}^{\mu_{B}^{1}}) \ge T(P, Q_{N}^{\overline{\mu}_{B}^{1}})\).
Similarly, we can show that $T(P, Q^{\mu_{B}^{2}}_{N}) \ge T(P,\, Q_{N}^{\overline{\mu}_{B}^{2}})$ holds, where $\overline{\mu}_{B}^{2}$ is defined in Eq.~\ref{eq:Q_2_B}. 
Hence, we have
\begin{equation}
\begin{aligned}
f_{B}^{i} 
&= T(P^{\otimes (n-1)}, Q_{N}^{\otimes (n-1)}) \otimes T(P, Q^{\mu_{B}^{i}}_{N}) \\
&\overset{(2)}{\ge} T(P^{\otimes (n-1)}, Q^{\otimes (n-1)}_{N}) \otimes T(P, Q^{\overline{\mu}_{B}^{i}}_{N}) \\
&\overset{(3)}{=} T(P^{\otimes n}, Q_{N}^{\otimes (n-1)} \otimes Q^{\overline{\mu}_{B}^{i}}_{N}) = \underline{f_{B}^{i}}, \quad i \in \{1,2\}.
\end{aligned}
\end{equation}

\noindent Here, $\overset{(3)}{=}$ holds due to Property~\ref{property:expression_composition}, 
while $\overset{(2)}{\ge}$ is justified by Property~\ref{property:order}. 
Finally, according to Lemma~\ref{Lemma:Order_in_trade-off}, we have 
$T(\mathcal{M}(D), \mathcal{M}(D')) \ge f \ge \underline{f}$, 
where $f$ is defined in Eq.~\ref{eq:lower_bound_EASGM_high_dimentsion}.

\subsection{Proof of Theorem \ref{thm:ASGM_HIGH}}
\label{Appendix:Thm_ASGM_HIGH}
\begin{proof}
Following the same proof strategy as in Theorem~\ref{theorem:EASGM_HIGH}, 
by the definitions of the corresponding component trade-off functions, we have
\[
f_{\emptyset}^{1}=\underline f_{\emptyset}^{1},
\qquad
f_B^{1}=\underline f_B^{1}.
\]
Moreover, since the Gaussian trade-off function \(G_\mu\) is decreasing in \(\mu\), and since \(\|\overline g_{N+1}\|_2\le C\), we have
\[
f_{\emptyset}^{2}\ge G_{1/\sigma}.
\]
Thus, it remains to show that
\[
f_B^{2}\ge \underline f_B^{2}.
\]

For \(Q_B^{\mu_B}\) defined in Eq.~\ref{eq:ASGM_F_2}, the clipping condition gives
\[
\left\||B|\overline g_{N+1}-\sum_{i\in B}\overline g_i\right\|_2
\le
|B|\|\overline g_{N+1}\|_2+\sum_{i\in B}\|\overline g_i\|_2
\le
2|B|C.
\]
Therefore,
\[
\mu_B
=
\frac{
\left\||B|\overline g_{N+1}-\sum_{i\in B}\overline g_i\right\|_2
}{
(|B|+1)C\sigma
}
\le
\frac{2|B|}{(|B|+1)\sigma}
=
\overline\mu_B.
\]
By Theorem~\ref{thm:bound}, this implies
\[
T(P,Q_B^{\mu_B})\ge T(P,Q_B^{\overline\mu_B}).
\]
Thus, given \(f_B^{2}\) defined in Eq.~\ref{eq:ASGM_F_2}, we have
\[
\begin{aligned}
f_B^{2}
&=
T(P^{\otimes (n-1)},Q_B^{\otimes (n-1)})
\otimes T(P,Q_B^{\mu_B})\\
&\ge
T(P^{\otimes (n-1)},Q_B^{\otimes (n-1)})
\otimes T(P,Q_B^{\overline\mu_B})\\
&=
T(P^{\otimes n},Q_B^{\otimes (n-1)}\otimes Q_B^{\overline\mu_B})
=
\underline f_B^{2}.
\end{aligned}
\]
Here, the inequality follows from Property~\ref{property:order}, and the equalities follow from Property~\ref{property:expression_composition}.

Since these component-wise inequalities hold for all relevant \(B\), Lemma~\ref{Lemma:Order_in_trade-off} implies
\[
f\ge \underline f.
\]
Combining this with the per-pair guarantee
\[
T(\mathcal M(D),\mathcal M(D'))\ge f
\]
gives
\[
T(\mathcal M(D),\mathcal M(D'))\ge f\ge \underline f.
\]
\end{proof}

\subsection{Proof of Theorem \ref{thm:upperbound_EASGM}}
\label{Appendix:upperbound_EASGM}
First, according to Property~\ref{property:useful}, we have 
$\overline{f} \ge f_{B}^{1}$ and $\overline{f} \ge f_{B}^{2}$, 
where $f_{B}^{1}$ and $f_{B}^{2}$ are defined in Eqs.~\ref{eq:Q_1_B} and~\ref{eq:Q_2_B}, respectively. 
By replacing $f_{B}^{i}$ with $\overline{f}$ for $i \in \{1, 2\}$ in the definition of $f$, we obtain $f'$. 
Then, by Lemma~\ref{Lemma:Order_in_trade-off}, we have $f \le f'$, and by Lemma~\ref{Lemma:Upper_bound}, we have $f' \le \overline{f}$. 
The proof is complete.

\subsection{Proof of Theorem \ref{theorem:EASGM_flaws}}
\label{Appendix:EASGM_flaws}
First, let $f^{\ast} = T(P, Q_{N})$, where $P$ and $Q_{N}$ are defined in table \ref{tab:notation}. 
Note that $f^{\ast} \neq \mathrm{ID}$. 
Then, according to Lemma~\ref{Lemma:zero}, we have 
$\lim_{n \to \infty} (f^{\ast})^{\otimes (n-1)}(\alpha) = 0$ for all $\alpha \in (0, 1]$. 
Then, by Property~\ref{property:expression_composition}, 
we have $(f^{\ast})^{\otimes (n-1)} = \overline{f}$. 
Therefore,  $\lim_{n \to \infty} \overline{f}(\alpha) = 0 , 
 \forall \alpha \in (0, 1]$.

\subsection{Proof of Theorem \ref{thm:upperbound_ASGM}} \label{Appendix:upperbound_ASGM}

\noindent
Let $f_{B}^{1}$ and $f_{B}^{2}$ be defined in Eqs.~\ref{eq:ASGM_F_1} and~\ref{eq:ASGM_F_2}, respectively, and let $f_{\emptyset}^{1}=\mathrm{ID}$ and $f_{\emptyset}^{2}=T(G,G+\overline{g}_{N+1})$. 
We first show that
$T\!\big(P^{\otimes (n-1)},\, Q_{N}^{\otimes (n-1)}\big) \ge f_{B}^{2}$.

Define \(F(x)=x^{2}/(x+1)^{2}\) for \(x\ge 0\). 
Since \(F'(x)=2x/(x+1)^{3}\ge 0\), the function \(F\) is monotonically increasing, and hence \(F(N)\ge F(|B|)\). 
Let $P$, $Q_{N}$, and $Q_{B}$ be defined in Table~\ref{tab:notation}, and let $Q^{\mu_{B}}_{B}$ be defined in Eq.~\ref{eq:ASGM_F_2}. 
Then we have
\begin{equation}
\begin{aligned}
    &T\!\big(P^{\otimes (n-1)},\, Q_{N}^{\otimes (n-1)}\big)
    = T\!\big(P^{\otimes (n-1)},\, Q_{N}^{\otimes (n-1)}\big)\otimes \mathrm{ID} \\
    &\overset{(1)}{\ge}\; T\!\big(P^{\otimes (n-1)},\, Q_{B}^{\otimes (n-1)}\big)\otimes T(P, Q^{\mu_{B}}_{B})
    \overset{(2)}{=}\; f_{B}^{2}.
\end{aligned}
\end{equation}
Here, \(\overset{(1)}{\ge}\) follows from Theorem~\ref{thm:bound_fix_variance}, which establishes that \(T(P,Q_{N})\ge T(P,Q_{B})\), together with Properties~\ref{property:expression_composition} and~\ref{property:order}. Equality \(\overset{(2)}{=}\) follows from Property~\ref{property:expression_composition}.

Moreover, since \(\mathrm{ID}=f_{B}^{1}\) and \(\mathrm{ID}\ge f_{\emptyset}^{i}\) for \(i\in\{1,2\}\), substituting \(f_{B}^{1}\), \(f_{B}^{2}\), and \(f_{\emptyset}^{i}\) \((i\in\{1,2\})\) in the definition of \(f\) with \(\mathrm{ID}\), \(T\!\big(P^{\otimes (n-1)},\, Q_{N}^{\otimes (n-1)}\big)\), and \(\mathrm{ID}\), respectively, yields a new trade-off function \(f^{\ast}\).

By Lemma~\ref{Lemma:Order_in_trade-off}, we have \(f^{\ast}\ge f\). Combined with Lemma~\ref{Lemma:Upper_bound}, which gives \(\overline{f}\ge f^{\ast}\), we conclude that \(\overline{f}\ge f\).
\subsection{Proof of Theorem \ref{theorem:ASGM_flaws}}\label{Appendix:ASGM_flaws}
We only need to prove that
\[
\lim_{n\to\infty}\overline{f}(\alpha)
=
\bigl(1-q+q(1-q)^N\bigr)\mathrm{ID}(\alpha),
\qquad
\forall \alpha\in(0,1].
\]

Let
\[
f^\ast := T(P,Q_N),
\]
where $P$ and $Q_N$ are defined in Table~\ref{tab:notation}. Since $f^\ast\neq \mathrm{ID}$, Lemma~\ref{Lemma:zero} implies that
\[
\lim_{n\to\infty}(f^\ast)^{\otimes (n-1)}(\alpha)=0,
\qquad
\forall \alpha\in(0,1].
\]
Moreover, by Property~\ref{property:expression_composition},
\[
(f^\ast)^{\otimes (n-1)}
=
T(P^{\otimes n-1},Q_N^{\otimes n-1}).
\]

According to the definition of $\overline{f}$ in Theorem~\ref{thm:upperbound_ASGM},
\[
\overline{f}(\alpha)
=
q\bigl(1-(1-q)^N\bigr)\,
T(P^{\otimes n-1},Q_N^{\otimes n-1})(\alpha)
+
\bigl(1-q+q(1-q)^N\bigr)\mathrm{ID}(\alpha).
\]
Taking the limit as $n\to\infty$, we obtain
\[
\begin{aligned}
\lim_{n\to\infty}\overline{f}(\alpha)
&=
q\bigl(1-(1-q)^N\bigr)\cdot 0
+
\bigl(1-q+q(1-q)^N\bigr)\mathrm{ID}(\alpha) \\
&=
\bigl(1-q+q(1-q)^N\bigr)\mathrm{ID}(\alpha),
\end{aligned}
\]
for all $\alpha\in(0,1]$.

Finally, since Theorem~\ref{thm:upperbound_ASGM} shows that
\[
\overline{f}(\alpha)\ge f(\alpha),
\qquad
\forall \alpha\in(0,1],
\]
taking the limit yields
\[
\bigl(1-q+q(1-q)^N\bigr)\mathrm{ID}(\alpha)\ge f(\alpha),
\qquad
\forall \alpha\in(0,1].
\]
This completes the proof.
\subsection{Proof of Lemma \ref{fact:number_of_data}}\label{Proof:Lemma_number_of_data}
\begin{proof}
For $\overline{f}$ in Theorem~\ref{thm:upperbound_EASGM}, we follow the same notation as in that theorem. 
According to the definition of $\overline{f}$, it suffices to prove that if $N_{1}\ge N_{2}$, then
\[
T(P^{\otimes (n-1)},Q_{N_{1}}^{\otimes (n-1)})\ge T(P^{\otimes (n-1)},Q_{N_{2}}^{\otimes (n-1)}).
\]
Note that $P=\mathcal{N}(0,1)$ and $Q_{N}=\mathcal{N}(0,F(N))$, where $F(x)=\frac{x^{2}}{(x+1)^{2}}$. 
As established in the proof of Theorem~\ref{thm:upperbound_ASGM}, $F(x)$ increases as $x$ increases. Thus, according to Theorem~\ref{thm:bound_fix_variance}, we have
\[
T(P,Q_{N_{1}})\ge T(P,Q_{N_{2}}).
\]
Further, by Properties~\ref{property:expression_composition} and~\ref{property:order},
\[
\begin{aligned}
    &T(P^{\otimes (n-1)},Q_{N_{1}}^{\otimes (n-1)})
=
T(P,Q_{N_{1}})^{\otimes (n-1)}
\ge
T(P,Q_{N_{2}})^{\otimes (n-1)} \\
& = 
T(P^{\otimes (n-1)},Q_{N_{2}}^{\otimes (n-1)}).
\end{aligned}
\]
This completes the proof.
\end{proof}
\subsection{Proof of Lemma \ref{fact:dimension}}
\label{Proof:Lemma:dimension}

\begin{proof}
We first consider \(\overline f\) in Theorem~\ref{thm:upperbound_ASGM}. 
By the definition of \(\overline f\), it suffices to show that, if \(n_1\ge n_2\), then
\[
T(P^{\otimes(n_1-1)},Q_N^{\otimes(n_1-1)})
\le 
T(P^{\otimes(n_2-1)},Q_N^{\otimes(n_2-1)}).
\]
By Properties~\ref{property:expression_composition} and~\ref{property:order}, we have
\[
\begin{aligned}
&T(P^{\otimes(n_2-1)},Q_N^{\otimes(n_2-1)})\\
&=
T(P^{\otimes(n_2-1)},Q_N^{\otimes(n_2-1)})
\otimes \mathrm{ID}^{\otimes(n_1-n_2)}\\
&\ge
T(P^{\otimes(n_2-1)},Q_N^{\otimes(n_2-1)})
\otimes
T(P^{\otimes(n_1-n_2)},Q_N^{\otimes(n_1-n_2)})\\
&=
T(P^{\otimes(n_1-1)},Q_N^{\otimes(n_1-1)}).
\end{aligned}
\]
Thus, \(\overline f\) decreases pointwise as \(n\) increases.

The EASGM case follows by the same argument, since its upper bound depends on \(n\) through the same term
\[
T(P^{\otimes(n-1)},Q_N^{\otimes(n-1)}).
\]
This completes the proof.
\end{proof}
\subsection{Proof of Lemma \ref{fact:sample_ratio}}
\label{Appendix:proof_asgm_sample_ratio}

\sloppy
Let
$a(q):=q-q(1-q)^N$,
$b(q):=1-q+q(1-q)^N,$
so that \(a(q)+b(q)=1\). Then
\[
\overline f_q(\alpha)
=
a(q)\,T\!\left(P^{\otimes (n-1)},Q_N^{\otimes (n-1)}\right)(\alpha)
+
\bigl(1-a(q)\bigr)\,\mathrm{ID}(\alpha).
\]

We first show that \(a(q)\) is strictly increasing in \(q\) on \([0,1]\). For \(q\in(0,1)\),
\[
a'(q)
=
\frac{d}{dq}\bigl(q-q(1-q)^N\bigr)
=
1-\frac{d}{dq}\bigl(q(1-q)^N\bigr).
\]
By the product rule,
\[
\frac{d}{dq}\bigl(q(1-q)^N\bigr)
=
(1-q)^N-Nq(1-q)^{N-1},
\]
and therefore
\[
a'(q)
=
1-(1-q)^N+Nq(1-q)^{N-1}.
\]
Since \(q\in(0,1)\) and \(N\ge 1\), we have
\[
1-(1-q)^N>0,
\qquad
Nq(1-q)^{N-1}\ge 0,
\]
which implies
$a'(q)>0$.
Hence \(a(q)\) is strictly increasing on \((0,1)\), and by continuity also increasing on \([0,1]\). In particular, for any \(0\le q_1<q_2\le 1\), if we write
\[
a_1:=a(q_1),
\qquad
a_2:=a(q_2),
\]
then \(a_1<a_2\).

Now compute the difference:
\[
\overline f_{q_1}(\alpha)-\overline f_{q_2}(\alpha)
=
(a_1-a_2)
\Bigl(
T\!\left(P^{\otimes (n-1)},Q_N^{\otimes (n-1)}\right)(\alpha)
-
\mathrm{ID}(\alpha)
\Bigr).
\]
Since \(\mathrm{ID}\) is the maximal trade-off function, we have
\[
T\!\left(P^{\otimes (n-1)},Q_N^{\otimes (n-1)}\right)(\alpha)
\le \mathrm{ID}(\alpha),
\qquad \forall \alpha\in[0,1].
\]
Since \(a_1-a_2<0\), it follows that
\[
\overline f_{q_1}(\alpha)-\overline f_{q_2}(\alpha)\ge 0
\qquad
\text{for all }\alpha\in[0,1].
\]
Therefore,
\[
\overline f_{q_1}(\alpha)\ge \overline f_{q_2}(\alpha)
\qquad
\text{for all }\alpha\in[0,1],
\]
which proves that \(\overline f_q\) is pointwise monotonically decreasing in \(q\in[0,1]\).

Moreover, if
\[
T\!\left(P^{\otimes (n-1)},Q_N^{\otimes (n-1)}\right)(\alpha)<\mathrm{ID}(\alpha)
\]
for some \(\alpha\), then
\[
T\!\left(P^{\otimes (n-1)},Q_N^{\otimes (n-1)}\right)(\alpha)-\mathrm{ID}(\alpha)<0.
\]
Since \(a_1-a_2<0\), we obtain
\[
\overline f_{q_1}(\alpha)-\overline f_{q_2}(\alpha)>0,
\]
and hence
\[
\overline f_{q_1}(\alpha)>\overline f_{q_2}(\alpha).
\]
This proves the strict version.

\subsection{Proof of Theorem \ref{thm:upperbound_estimate}}
\label{App:proof:estimationg}

\begin{proof}
Let \(f_{\ast}=T(P,Q)\) with \(P\sim\mathcal{N}(0,1)\) and \(Q\sim\mathcal{N}(0,\sigma^{2})\) for \(0<\sigma<1\), and define
\[
\overline{f_{\ast}}=\max\{f_{\ast},\,f_{\ast}^{-1}\}.
\]
To prove the theorem, it suffices to show that
\[
\kappa_{3}(\overline{f_{\ast}})
:=\int_{0}^{1}\bigl|\log |\overline{f_{\ast}}'(\alpha)|\bigr|^{3}\,d\alpha
<\infty,
\]
which is the prerequisite for applying Theorem~\ref{thm:estimation}.

Let the trade-off function \(f_{\ast}=T(P,Q)\) be induced by the Neyman--Pearson most-powerful test with rejection region \([-t,t]\). Specifically,
\[
\alpha(t)=\mathbb{P}(|Z|\le t)=2\Phi(t)-1,\qquad
\beta(t)=\mathbb{P}(|X|>t)=2\bigl(1-\Phi(t/\sigma)\bigr),
\]
where \(Z\sim\mathcal{N}(0,1)\), \(X\sim\mathcal{N}(0,\sigma^{2})\), and \(\Phi\) denotes the standard normal CDF. We write \(f_{\ast}(\alpha(t))=\beta(t)\).

\noindent\textbf{Step 1: Proving \(\kappa_{3}(f_{\ast})<\infty\).}
Differentiating with respect to \(t\) gives
\[
\frac{d\alpha}{dt}=2\,\phi(t),\qquad
\frac{d\beta}{dt}=-\,\frac{2}{\sigma}\,\phi\!\left(\frac{t}{\sigma}\right),
\]
where \(\phi(u)=(2\pi)^{-1/2}e^{-u^{2}/2}\) is the standard normal pdf. Consequently,
\[
f_{\ast}'\bigl(\alpha(t)\bigr)
=\frac{d\beta/dt}{d\alpha/dt}
=-\,\frac{1}{\sigma}\,\frac{\phi(t/\sigma)}{\phi(t)}
=-\frac{1}{\sigma}\exp\!\left(-\frac{1-\sigma^2}{2\sigma^2}\,t^2\right).
\]
In particular, \(f_{\ast}'(\alpha(t))<0\) for all \(t\). Moreover,
\[
\frac{d f_{\ast}'(\alpha(t))}{dt}
=\frac{1}{\sigma}\exp\!\left(-\frac{1-\sigma^2}{2\sigma^2}\,t^2\right)
\cdot \frac{1-\sigma^2}{\sigma^2}\,t.
\]
Since \(d\alpha/dt=2\phi(t)>0\), the second derivative satisfies
\[
f''_{\ast}(\alpha(t))
=\frac{\frac{d f_{\ast}'(\alpha(t))}{dt}}{d\alpha/dt}
=\frac{\sqrt{2\pi}}{2}\,\frac{1-\sigma^2}{\sigma^3}\,
t\,\exp\!\left(\frac{2\sigma^2-1}{2\sigma^2}\,t^2\right),
\]
which has the same sign as \(t\). Hence, on the usual parametrization \(t\in[0,\infty)\) (which covers \(\alpha\in[0,1)\)), we have \(f_{\ast}''(\alpha)\ge 0\), i.e.\ \(f_{\ast}\) is convex and strictly decreasing on \((0,1)\).

Using \(\alpha=\alpha(t)\) as a change of variables, we have
\[
d\alpha=2\phi(t)\,dt=\sqrt{\tfrac{2}{\pi}}e^{-t^2/2}\,dt,
\]
and
\[
\log|f_{\ast}'(\alpha(t))|
=\log\!\left(\frac{1}{\sigma}\right)-\frac{1-\sigma^2}{2\sigma^2}\,t^2.
\]
Let
\[
t_0:=\sqrt{\frac{2\sigma^2}{1-\sigma^2}\,\log\!\left(\frac{1}{\sigma}\right)}\in(0,\infty),
\]
so that \(\log|f_{\ast}'(\alpha(t_0))|=0\). Then
\[
\kappa_3(f_{\ast})
=\sqrt{\frac{2}{\pi}}
\int_0^\infty
\Bigl|\log\!\left(\frac{1}{\sigma}\right)-\frac{1-\sigma^2}{2\sigma^2}\,t^2\Bigr|^3
e^{-t^2/2}\,dt.
\]
Split the integral at \(t_0\). On \([0,t_0]\), the integrand is continuous and bounded, hence integrable. On \([t_0,\infty)\), there exists a constant \(C_1>0\) such that
\[
\Bigl|\log\!\left(\frac{1}{\sigma}\right)-c\,t^2\Bigr|^3
\le C_1(1+t^6),
\qquad c=\frac{1-\sigma^2}{2\sigma^2}.
\]
Therefore,
\[
\kappa_3(f_{\ast})
\le
\sqrt{\frac{2}{\pi}}
\int_0^\infty C_1(1+t^6)e^{-t^2/2}\,dt
<\infty,
\]
because a polynomial times a Gaussian tail is integrable \cite{gradshteyn2014table}.

\noindent\textbf{Step 2: Proving \(\kappa_{3}(f_{\ast}^{-1})<\infty\).}
A standard change of variables \cite{rudin1974real} gives
\[
\kappa_3(f_{\ast}^{-1})
=
\int_0^1
\Bigl|\log\frac{1}{|f_{\ast}'(\alpha)|}\Bigr|^3
|f_{\ast}'(\alpha)|\,d\alpha.
\]
Substituting the parametrization above yields
\[
\kappa_3(f_{\ast}^{-1})
=
\sqrt{\frac{2}{\pi}}\,\frac{1}{\sigma}
\int_0^\infty
\Bigl|\log\sigma+\frac{1-\sigma^2}{2\sigma^2}\,t^2\Bigr|^3
\exp\!\left(-\frac{t^2}{2\sigma^2}\right)\,dt.
\]
Again, the integrand is bounded on compact intervals and has the form
\[
\text{(polynomial in \(t\))}\times e^{-t^2/(2\sigma^2)}
\]
for large \(t\), hence it is integrable. Therefore \(\kappa_3(f_{\ast}^{-1})<\infty\).

\noindent\textbf{Step 3: Proving \(\kappa_{3}(\overline{f_{\ast}})<\infty\).}
Since \(\alpha(t)\) is analytic and strictly increasing on \((0,\infty)\), its inverse
\[
t(\alpha)=\Phi^{-1}\!\left(\frac{1+\alpha}{2}\right)
\]
is analytic on \((0,1)\). Hence
\[
f_{\ast}(\alpha)
=
2\Bigl(1-\Phi\!\bigl(t(\alpha)/\sigma\bigr)\Bigr)
\]
is real analytic on \((0,1)\). Because \(f_\ast\) is real analytic on \((0,1)\) and
\(f_\ast'(\alpha)< 0\) for all \(\alpha\in(0,1)\), the real analytic inverse function theorem~\cite[Theorem~1.5.3]{krantz2002primer} implies that \(f_\ast^{-1}\) is real analytic on \((0,1)\).

Now define
\[
h(\alpha):=f_{\ast}(\alpha)-f_{\ast}^{-1}(\alpha).
\]
Then \(h\) is real analytic on \((0,1)\). If \(h\equiv 0\), then \(\overline{f_{\ast}}=f_{\ast}\), and the conclusion follows immediately from Step~1. Otherwise, \(h\not\equiv 0\). Since \(h\) is real analytic, Corollary~1.2.7 of~\cite{krantz2002primer} implies that the zero set of \(h\) has no accumulation point in \((0,1)\). Hence the zero set \[
Z:=\{\alpha\in(0,1): h(\alpha)=0\}
\] is discrete, countable, and of Lebesgue measure zero.

On each connected component of \((0,1)\setminus Z\), the sign of \(h\) is constant. Therefore, on each such component, \(\overline{f_{\ast}}\) coincides either with \(f_{\ast}\) or with \(f_{\ast}^{-1}\). Consequently, for almost every \(\alpha\in(0,1)\),
\[
(\overline{f_{\ast}})'(\alpha)=
\begin{cases}
f_{\ast}'(\alpha), & \text{if } f_{\ast}(\alpha)>f_{\ast}^{-1}(\alpha),\\[4pt]
(f_{\ast}^{-1})'(\alpha), & \text{if } f_{\ast}(\alpha)<f_{\ast}^{-1}(\alpha).
\end{cases}
\]
Since \(Z\) is a null set, we obtain
\[
\kappa_{3}(\overline{f_{\ast}})
=
\int_{\{h>0\}}
\bigl|\log|f_{\ast}'(\alpha)|\bigr|^{3}\,d\alpha
+
\int_{\{h<0\}}
\bigl|\log|(f_{\ast}^{-1})'(\alpha)|\bigr|^{3}\,d\alpha.
\]
Hence
\[
\kappa_{3}(\overline{f_{\ast}})
\le
\kappa_{3}(f_{\ast})+\kappa_{3}(f_{\ast}^{-1})
<\infty.
\]
Thus \(\kappa_{3}(\overline{f_{\ast}})<\infty\), completing the proof.
\end{proof}

\end{document}